\DeclareMathAlphabet{\mathbbold}{U}{bbold}{m}{n}
\renewcommand{\backref}[1]{}
\renewcommand{\backrefalt}[4]{%
\ifcase #1 %
\or
[p.\ #2]%
\else
[pp.\ #2]%
\fi}
\let\oldproofname=\proofname
\renewcommand{\proofname}{\rm\bf{\oldproofname}}
\newcommand{\para}{%
 \@startsection{paragraph}{4}%
 {\z@}{2ex \@plus 3.3ex \@minus .2ex}{-1em}%
 {\normalfont\normalsize\bfseries}%
}
\theoremstyle{plain}
\newtheorem{theorem}{Theorem}
\newtheorem{proposition}[theorem]{Proposition}
\newtheorem{lemma}[theorem]{Lemma}
\newtheorem{conjecture}{Conjecture}
\newtheorem{definition}[theorem]{Definition}
\theoremstyle{remark}
\theoremstyle{plain}
\def\Reals{{\mathbb{R}}} % The Reals
\def\N{{\mathbb{N}}} % Natural Numbers
\renewcommand{\Pr}{\mathop{\bf Pr\/}}
\newcommand{\E}{\mathop{\bf E\/}}
\newcommand{\id}{\mathbbold{1}}
\newcommand{\OR}{\textsc{or}}
\newcommand{\AND}{\textsc{and}}
\newcommand{\Parity}{\textsc{parity}}
\newcommand{\eps}{\epsilon}
\newcommand{\transpose}{{\mathsf{T}}}
\renewcommand{\hat}{\widehat}
\renewcommand{\tilde}{\widetilde}
\newcommand{\B}{\{0,1\}}
\let\OldLambda\lambda
\let\lambda\relax
\DeclareMathOperator{\lambda}{\OldLambda}
\DeclareMathOperator{\D}{\mathsf{D}}
\DeclareMathOperator{\C}{\mathsf{C}}
\DeclareMathOperator{\R}{\mathsf{R}}
\DeclareMathOperator{\Q}{\mathsf{Q}}
\DeclareMathOperator{\UC}{\mathsf{UC}}
\DeclareMathOperator{\RC}{\mathsf{RC}}
\DeclareMathOperator{\s}{\mathsf{s}}
\DeclareMathOperator{\bs}{\mathsf{bs}}
\DeclareMathOperator{\adeg}{\mathsf{\widetilde{deg}}}
\let\deg\relax
\DeclareMathOperator{\deg}{\mathsf{deg}}
\DeclareMathOperator{\Adv}{\mathsf{Adv}}
\DeclareMathOperator{\SA}{\mathsf{SA}}
\DeclareMathOperator{\MM}{\mathsf{MM}}
\DeclareMathOperator{\SWA}{\mathsf{SWA}}
\DeclareMathOperator{\GSA}{\mathsf{GSA}}
\DeclareMathOperator{\K}{\mathsf{K}}
\DeclareMathOperator{\Dom}{Dom}
\DeclareMathOperator{\tr}{tr}
\DeclareMathOperator{\diag}{diag}
\newcommand{\tR}{\tilde{R}}
\newcommand{\norm}[1]{\lVert#1\rVert}
\definecolor{conj}{HTML}{C2C0BF} 
\definecolor{open}{HTML}{A31F34} 
\newcommand{\ct}[2]{%
\cellcolor{white}%
\begin{tabular}[t]{@{}c@{}}#1\\[-5pt]#2\end{tabular}%
}
\newcommand{\co}[2]{%
\cellcolor{conj!70}%
\begin{tabular}[t]{@{}c@{}}#1\\[-5pt]#2\end{tabular}%
}
\newcommand{\cc}[2]{%
\cellcolor{conj!70}%
\begin{tabular}[t]{@{}c@{}}#1\\[-5pt]#2\end{tabular}%
}
\newcommand{\smcite}[1]{{\scriptsize \cite{#1}}}
\newcommand{\newbound}[1]{\colorbox{open!50}{#1}}
\newcommand{\Huangbound}[1]{\colorbox{green!50}{#1}}
\title{Degree vs.~Approximate Degree and \\ Quantum Implications of Huang's Sensitivity Theorem%
\footnote{This subsumes an earlier preprint by a subset of the authors \cite{ABKT20}.}
}
\author{%
\hspace{5em} Scott Aaronson\footnote{Department of Computer Science, University of Texas at Austin. \texttt{aaronson@cs.utexas.edu}} 
\and 
Shalev Ben-David\footnote{University of Waterloo.
\texttt{shalev.b@uwaterloo.ca}} 
\and 
Robin Kothari\footnote{Microsoft Quantum and Microsoft Research. \texttt{robin.kothari@microsoft.com}}
\hspace{5em}
\and 
Shravas Rao\footnote{Northwestern University.  \texttt{shravas@northwestern.edu}}
\and
Avishay Tal\footnote{%
Department of Electrical Engineering and Computer Sciences, University of California at Berkeley. \texttt{atal@berkeley.edu}}
}
\date{\vspace{-3ex}}
\begin{document}
\maketitle
\begin{abstract}
Based on the recent breakthrough of Huang (2019), we show that for any total Boolean function $f$, 
\begin{enumerate}
    \item $\deg(f) = O(\adeg(f)^2)$: The degree of $f$ is at most quadratic in the approximate degree of $f$. This is optimal as witnessed by the OR function.
    \item $\D(f) = O(\Q(f)^4)$: The deterministic query complexity of $f$ is at most quartic in the quantum query complexity of $f$. This matches the known separation (up to log factors) due to Ambainis, Balodis, Belovs, Lee, Santha, and Smotrovs (2017).
\end{enumerate}
We apply these results to resolve the quantum analogue of the Aanderaa--Karp--Rosenberg conjecture. We show that if $f$ is a nontrivial monotone graph property of an $n$-vertex graph specified by its adjacency matrix, then $\Q(f)=\Omega(n)$, which is also optimal. We also show that the approximate degree of any read-once formula on $n$ variables is $\Theta(\sqrt{n})$.
\end{abstract}

\section{Introduction}
Last year, Huang resolved a major open problem in the analysis of Boolean functions called the \emph{sensitivity conjecture}~\cite{Huang2019}, which was open for nearly 30 years \cite{NS94}. Surprisingly, Huang's elegant proof takes less than 2 pages---truly a ``proof from the book.'' 
Specifically, Huang showed that for any total Boolean function, which is a function $f:\B^n\to\B$, we have
\begin{equation}\label{eq:Huang}
    \deg(f) \leq \s(f)^2,
\end{equation}
where $\deg(f)$ is the real degree of $f$ and $\s(f)$ is the (maximum) sensitivity of $f$. These measures and other standard measures appearing in this introduction are defined in \Cref{sec:prelim}.

In this paper, we describe some implications of Huang's resolution of the sensitivity conjecture to polynomial-based complexity measures of Boolean functions and quantum query complexity. 
Our first observation is that Huang actually proves a stronger claim than \cref{eq:Huang}, in which $\s(f)$ can be replaced by $\lambda(f)$, a spectral relaxation of sensitivity we define in \Cref{def:sengraph}. 

\begin{restatable}{theorem}{Huang}\label{lemma:Huang}\label{thm:Huang}
For all  Boolean functions $f:\B^n \to \B$, we have
	$\deg(f) \le \lambda(f)^2$.
\end{restatable}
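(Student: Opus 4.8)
The plan is to revisit Huang's two‑page argument and notice that the quantity it actually bounds from below is $\lambda(f)$, not merely $\s(f)$. Recall from \Cref{def:sengraph} that $\lambda(f)$ is read off the \emph{sensitivity graph} $G_f$ --- the subgraph of the hypercube $Q^n$ on $\B^n$ whose edges are the pairs $\{x,\,x\oplus e_i\}$ with $f(x)\neq f(x\oplus e_i)$ --- as the largest eigenvalue achievable by a $\pm1$ signed adjacency matrix of $G_f$, so that $\lambda(f)\le\s(f)$ and the statement refines Huang's $\deg(f)\le\s(f)^2$.

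The argument uses three ingredients. First, I would invoke Huang's construction as a black box: a $\pm1$ signing of the edges of $Q^n$ whose signed adjacency matrix $A_n$ satisfies $A_n^2=n\,I$, hence has spectrum $\{+\sqrt n,-\sqrt n\}$ with both eigenvalues of multiplicity $2^{n-1}$; by Cauchy interlacing, any principal submatrix of $A_n$ on a vertex set of size more than $2^{n-1}$ has largest eigenvalue at least $\sqrt n$, and such a principal submatrix is exactly a signed adjacency matrix of the corresponding induced subgraph of $Q^n$. Second, I would reduce to the full‑degree case: with $d=\deg(f)$, some $\hat f(S)$ with $|S|=d$ is nonzero, so restricting the coordinates outside $S$ to a suitable assignment gives $g\colon\B^d\to\B$ with $\deg(g)=d$ whose sensitivity graph $G_g$ is (a copy of) an induced subgraph of $G_f$. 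Third, I would use the Gotsman--Linial observation: $\deg(g)=d$ means $\sum_{x\in\B^d}(-1)^{|x|+g(x)}\neq0$, so the classes $P=\{x:|x|+g(x)\text{ even}\}$ and $Q=\{x:|x|+g(x)\text{ odd}\}$ have different sizes; taking $H$ to be the larger one gives $|H|>2^{d-1}$, and a parity check shows that whenever $x,y$ are hypercube‑adjacent and both lie in $H$ one has $g(x)\neq g(y)$ --- i.e.\ every edge of $Q^d$ inside $H$ is a sensitivity edge of $g$.

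To finish, I would combine these: let $A_d[H]$ be the principal submatrix of Huang's $A_d$ on the set $H$ from the third step, so $\lambda_{\max}(A_d[H])\ge\sqrt d$. Since every edge of $Q^d$ inside $H$ is a sensitivity edge and $G_g$ embeds as an induced subgraph of $G_f$, I can choose a $\pm1$ signed adjacency matrix $A$ of the whole graph $G_f$ that matches (the relevant copy of) $A_d$ on the edges inside $H$; then $A_d[H]$ is a principal submatrix of $A$, so $\lambda(f)\ge\lambda_{\max}(A)\ge\lambda_{\max}(A_d[H])\ge\sqrt d$, that is $\deg(f)=d\le\lambda(f)^2$. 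All the real content sits in Huang's first ingredient, which I would not reprove; the one step needing care --- and the only genuinely new point --- is the last one, namely checking that an eigenvalue bound for a signing of the \emph{subgraph} of $G_f$ living on $H$ transfers to $\lambda(f)$, which is a maximum over signings of all of $G_f$. This works precisely because every hypercube edge inside $H$ is sensitive, so $A_d[H]$ really does sit as a principal submatrix of a legitimate signing of $G_f$. (If $\lambda(f)$ is instead defined through the \emph{unsigned} adjacency matrix of $G_f$, one extra line suffices: $G_f$ is bipartite, and for a bipartite graph the all‑ones signing already has the largest spectral radius among all signings, so $\rho(G_f)\ge\lambda_{\max}(A_d[H])\ge\sqrt d$.)
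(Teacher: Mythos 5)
Your proposal is correct and follows essentially the same route as the paper: reduce to the full-degree case by restricting to a subcube, partition $\B^d$ by the parity of $|x|+g(x)$ so that the larger class $H$ has more than $2^{d-1}$ elements and every hypercube edge inside $H$ is a sensitive edge, and then extract a $\sqrt d$ lower bound from Huang's signing (you via Cauchy interlacing, the paper by explicitly exhibiting a $\sqrt d$-eigenvector of the signed matrix vanishing off $H$ and taking entrywise absolute values). One caution: the paper defines $\lambda(f)$ as $\|A_f\|$ for the \emph{unsigned} adjacency matrix rather than as a maximum over signings, so the load-bearing step is really your final parenthetical --- which does work, since $\rho(B)\le\rho(|B|)$ holds for every signing of every graph (bipartiteness is not needed), giving $\lambda(f)=\|A_f\|\ge\|A_{G_g[H]}\|\ge\rho(A_d[H])\ge\sqrt d$.
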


In short, while $s(f)$ can be viewed as the maximum number of $1$s in any row or column of a certain Boolean matrix, $\lambda(f)$ is the largest eigenvalue of that matrix, which could potentially be smaller.
This observation has several implications because, as we show, $\lambda(f)$ lower bounds many other complexity measures. 
One of the messages of this work is that $\lambda(f)$ is an interesting complexity measure and can be used to establish relationships between other complexity measures.

We use this observation to prove two main results: 
Our first result is an optimal relationship between deterministic and quantum query complexity for total functions, and our second result is an optimal relationship between degree and approximate degree for total functions. 
We then apply the first result to prove the quantum analogue of the Aanderaa--Karp--Rosenberg conjecture and apply the second result to show that the approximate degree of any read-once formula is $\Theta(\sqrt{n})$.

\para{Deterministic vs.\ quantum query complexity.}
We know from the seminal results of Nisan \cite{Nisan91}, Nisan and Szegedy~\cite{NS94}, and Beals et al.\ \cite{BBCMW01} that for any total Boolean function $f$,
the deterministic query complexity, $\D(f)$, and quantum query complexity, $\Q(f)$, satisfy%
\footnote{%
This means that for total functions, quantum query algorithms can only outperform classical query algorithms by a polynomial factor. On the other hand, for partial functions, which are defined on a subset of $\B^n$, exponential and even larger speedups are possible.}
\begin{equation}\label{eq:Beals}
    \D(f)=O(\Q(f)^6).
\end{equation}
Grover's algorithm \cite{Grover96} shows that for the OR function, a quadratic separation between $\D(f)$ and $\Q(f)$ is possible. This was the best known quantum speedup for total functions until  Ambainis et al.\ \cite{ABB+15} constructed a total function $f$ with 
\begin{equation}\label{eq:ABB}
\D(f) = \tilde{\Omega}(\Q(f)^{4}).    
\end{equation}

We show that the quartic separation (up to log factors) in \cref{eq:ABB} is actually the best possible.
\begin{theorem}\label{thm:D vs. Q}
For all Boolean functions $f:\B^n \to \B$, we have 
$\D(f) = O(\Q(f)^4)$. 	
\end{theorem}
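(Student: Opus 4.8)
The plan is to combine Huang's refined inequality $\deg(f)\le\lambda(f)^2$ (\Cref{thm:Huang}) with two further facts: that the spectral sensitivity $\lambda(f)$ is a lower bound on $\Q(f)$, and that $\D(f)$ is controlled by $\deg(f)$ together with the block sensitivity $\bs(f)$. Concretely, I would prove
\[
\D(f)\;=\;O\big(\deg(f)\cdot\bs(f)\big)\;\le\;O\big(\lambda(f)^2\cdot\bs(f)\big)\;=\;O\big(\Q(f)^2\big)\cdot O\big(\Q(f)^2\big)\;=\;O\big(\Q(f)^4\big),
\]
where the middle equality is \Cref{thm:Huang} and the last two estimates are the bounds $\lambda(f)=O(\Q(f))$ and $\bs(f)=O(\Q(f)^2)$.

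The first and main step is to show $\lambda(f)=O(\Q(f))$, i.e.\ that spectral sensitivity lower bounds quantum query complexity. I would do this with the positive-weight adversary method. Take the adversary matrix $\Gamma$ to be the adjacency matrix of the sensitivity graph $G_f$ from \Cref{def:sengraph}, so that $\Gamma[x,y]=1$ exactly when $x,y\in\B^n$ differ in a single coordinate and $f(x)\ne f(y)$. This $\Gamma$ is symmetric, nonnegative, and vanishes whenever $f(x)=f(y)$, so it is a valid adversary matrix, and $\norm{\Gamma}=\lambda(f)$ by definition. For each coordinate $i$, the matrix $\Gamma_i$ obtained by keeping only the entries with $x_i\ne y_i$ is a partial matching (each $x$ has at most the single neighbor $x\oplus e_i$), so $\norm{\Gamma_i}\le 1$; and since $\Gamma$ has no entries supported on pairs differing in more than one coordinate, the adversary bound gives $\Adv(f)\ge\norm{\Gamma}/\max_i\norm{\Gamma_i}\ge\lambda(f)$. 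The adversary lower bound $\Q(f)=\Omega(\Adv(f))$ then yields $\Q(f)=\Omega(\lambda(f))$, and with \Cref{thm:Huang} we get $\deg(f)\le\lambda(f)^2=O(\Q(f)^2)$. (Alternatively one can bound $\lambda(f)=O(\adeg(f))$ directly, by testing an approximating polynomial of $f$ against the edges of $G_f$; this is the route that in addition yields $\deg(f)=O(\adeg(f)^2)$, and one then uses $\Q(f)\ge\adeg(f)/2$ of Beals et al.)

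The second step combines this with two standard facts. First, $\bs(f)=O(\Q(f)^2)$: this is the lower bound $\Q(f)=\Omega(\sqrt{\bs(f)})$ of Beals et al.\ (equivalently $\adeg(f)=\Omega(\sqrt{\bs(f)})$ of Nisan and Szegedy, combined with $\Q(f)\ge\adeg(f)/2$). Second, $\D(f)=O(\deg(f)\cdot\bs(f))$: this is a classical relationship --- the inequality underlying Midrijanis's bound $\D(f)=O(\deg(f)^3)$, recovered from it via $\bs(f)=O(\deg(f)^2)$. Chaining everything, $\D(f)=O(\deg(f)\cdot\bs(f))=O(\lambda(f)^2\cdot\bs(f))=O(\Q(f)^2\cdot\Q(f)^2)=O(\Q(f)^4)$.

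The key obstacle is the first step: it is essential that we work with the spectral relaxation $\lambda(f)$ and not with $\s(f)$ --- using only $\deg(f)\le\s(f)^2$ together with $\s(f)\le\bs(f)=O(\Q(f)^2)$ would give $\deg(f)=O(\Q(f)^4)$ and hence the weaker bound $\D(f)=O(\Q(f)^6)$. Two further points need care. In the final chain one must invoke $\bs(f)=O(\Q(f)^2)$ \emph{directly}, rather than deriving it via $\bs(f)=O(\deg(f)^2)$ from the (already quadratic) bound on $\deg(f)$, which would lose a factor. And one must use a relationship between $\D(f)$ and $\deg(f)$ that is \emph{linear} in $\deg(f)$: the weaker Nisan--Smolensky bound $\D(f)=O(\deg(f)^2\cdot\bs(f))$ would again yield only $\D(f)=O(\Q(f)^6)$.
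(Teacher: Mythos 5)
Your proposal is correct and follows essentially the same route as the paper: Midrijanis's bound $\D(f)\le\bs(f)\deg(f)$, combined with $\deg(f)\le\lambda(f)^2$ and $\lambda(f)=O(\Q(f))$ via the adversary matrix $\Gamma=A_f$ (exactly the paper's \Cref{lem:lambdaQ}, stated there in the spectral-adversary formulation), plus the standard $\bs(f)=O(\Q(f)^2)$. Your cautionary remarks about which inequalities must be invoked directly to avoid losing factors also match the paper's reasoning.
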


We deduce \Cref{thm:D vs. Q} as a corollary of a new tight %quadratic 
relationship between $\deg(f)$ and $\Q(f)$:

\begin{theorem}\label{thm:deg vs. Q}
For all Boolean functions $f:\B^n \to \B$, we have 
$\deg(f) = O(\Q(f)^2)$. 	
\end{theorem}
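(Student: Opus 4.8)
The plan is to deduce \Cref{thm:deg vs. Q} from Huang's inequality (\Cref{thm:Huang}), $\deg(f)\le\lambda(f)^2$, together with the bound $\lambda(f)=O(\Q(f))$; the second ingredient is where the actual work lies. Since the positive-weight adversary method lower bounds quantum query complexity, $\Q(f)=\Omega(\Adv(f))$, it suffices to exhibit, for every $f$, a valid adversary matrix $\Gamma$ with $\|\Gamma\|\ge\lambda(f)$ and $\max_i\|\Gamma\circ D_i\|\le 1$, where $D_i$ is the $0/1$ matrix with $D_i[x,y]=1$ exactly when $x_i\neq y_i$, and $\circ$ is the entrywise (Hadamard) product.

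The right choice is $\Gamma=A_f$, the adjacency matrix of the sensitivity graph of $f$ --- vertex set $\B^n$, with an edge between $x$ and $x^{\oplus i}$ whenever $f(x)\neq f(x^{\oplus i})$ --- which is precisely the nonnegative symmetric matrix whose top eigenvalue defines $\lambda(f)$, and whose nonzero entries lie only on pairs of inputs at Hamming distance $1$ with differing $f$-values, so it meets the support requirement of an adversary matrix. Because $A_f$ is supported on Hamming-distance-$1$ pairs, $A_f=\sum_{i=1}^n (A_f\circ D_i)$ with the summands having pairwise disjoint supports, and each $A_f\circ D_i$ is the adjacency matrix of a partial matching on $\B^n$ (it joins $x$ to $x^{\oplus i}$ exactly when that edge is sensitive, and $x$ is sensitive to $i$ iff $x^{\oplus i}$ is), hence $\|A_f\circ D_i\|\le 1$. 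Since $A_f$ is symmetric and nonnegative, Perron--Frobenius gives $\|A_f\|=\lambda_{\max}(A_f)=\lambda(f)$. Feeding this into the adversary bound yields $\Q(f)=\Omega(\lambda(f))$, and then $\deg(f)\le\lambda(f)^2=O(\Q(f)^2)$ by \Cref{thm:Huang}.

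I do not expect a serious obstacle: the conceptual core --- that the restriction of the sensitivity-graph adjacency matrix to a single coordinate direction is a partial matching and therefore has operator norm at most $1$ --- is exactly what makes $\lambda(f)$, rather than merely $\s(f)$, the quantity to pair with Huang's theorem, and it is essentially a one-line observation. The remaining care is bookkeeping: invoking the appropriate adversary formulation (the positive-weight bound suffices here, since $A_f\ge 0$, so one need not appeal to the tight negative-weight adversary characterization), and handling the degenerate case of a constant $f$, where $A_f=0$, $\lambda(f)=0$, and the inequality is trivial. A cosmetic variant of the argument would instead plug Huang's signed adjacency matrix (which also has spectral norm $\lambda(f)$) directly into the general adversary bound, trading the use of nonnegativity for the stronger characterization.
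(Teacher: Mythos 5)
Your proposal is correct and follows essentially the same route as the paper: the paper's \Cref{lem:lambdaQ} also takes $\Gamma=A_f$ in the spectral adversary bound of Barnum--Saks--Szegedy, observes that each $\Gamma\circ D_i$ is a partial matching and hence has norm at most $1$, concludes $\lambda(f)\le\SA(f)=O(\Q(f))$, and combines this with \Cref{thm:Huang}. The only cosmetic difference is that the paper identifies $\|A_f\|$ with the top eigenvalue via bipartiteness of $G_f$ rather than Perron--Frobenius, which changes nothing.
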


Observe that \Cref{thm:deg vs. Q} is tight for the OR function on $n$ variables, whose degree is $n$ and whose quantum query complexity is $\Theta(\sqrt{n})$ \cite{Grover96,BBBV97}. Prior to this work, the best relation between $\deg(f)$ and $\Q(f)$ was a sixth power relation, $\deg(f) = O(\Q(f)^6)$, which follows from \cref{eq:Beals}.

As discussed, our proof relies on the restatement of Huang's result (\Cref{thm:Huang}), showing that $\deg(f)\leq \lambda(f)^2$. We show (in \Cref{lem:lambdaQ}) that the measure $\lambda(f)$ lower bounds the original quantum adversary method of Ambainis~\cite{Amb02}, which in turn lower bounds $\Q(f)$.

We now show how \Cref{thm:D vs. Q} straightforwardly follows from \Cref{thm:deg vs. Q} using two previously known connections between complexity measures of Boolean functions.

\begin{proof}[Proof of \Cref{thm:D vs. Q} assuming \Cref{thm:deg vs. Q}]
Midrijanis \cite{Mid04} showed that for all total functions $f$, 
\begin{equation}\label{eq:Mid}
\D(f) \leq \bs(f) \deg(f),
\end{equation} 
where $\bs(f)$ is the block sensitivity of $f$.
	
\Cref{thm:deg vs. Q} shows that $\deg(f) = O(\Q(f)^2)$. Combining the relationship between block sensitivity and approximate degree from \cite{NS94} with the results of \cite{BBCMW01}, we get that $\bs(f) =  O(\Q(f)^2)$. (This can also be proved directly using the lower bound method in \cite{BBBV97}.)

Combining these three inequalities yields $\D(f) = O(\Q(f)^4)$ for all total Boolean functions $f$.
\end{proof}

We establish \Cref{thm:deg vs. Q} in \Cref{sec:mainproof} using \Cref{thm:Huang} and the spectral adversary method in quantum query complexity \cite{BSS03}.

\para{Degree vs. approximate degree.}
We also know from the works of Nisan and Szegedy~\cite{NS94} and Beals et al.~\cite{BBCMW01}, that for any total Boolean function $f$, 
\begin{equation}
    \deg(f) = O(\adeg(f)^6),
\end{equation}
where $\deg(f)$ and $\adeg(f)$ are the exact degree and approximate degree of $f$ respectively (defined in \Cref{sec:prelim}).
We show that this relationship can be also significantly improved.

\begin{theorem}\label{thm:degadeg}
For all Boolean functions $f:\B^n \to \B$, we have $\deg(f) = O(\adeg(f)^2)$.
\end{theorem}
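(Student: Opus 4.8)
The plan is to combine Huang's theorem (\Cref{thm:Huang}), which gives $\deg(f) \le \lambda(f)^2$, with an inequality in the reverse direction: I will show that the spectral sensitivity satisfies $\lambda(f) = O(\adeg(f))$. Together these give $\deg(f) \le \lambda(f)^2 = O(\adeg(f)^2)$, which is tight for $\OR_n$, where $\deg = n$, $\adeg = \Theta(\sqrt{n})$, and indeed $\lambda = \sqrt{n}$. Note that, unlike in the proof of \Cref{thm:deg vs. Q}, it does not suffice here to pass through the quantum adversary bound: that bound can exceed $\adeg(f)$ by a polynomial factor, so one really needs that $\lambda(f)$, which merely lower-bounds the adversary bound, is itself $O(\adeg(f))$.

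For the inequality $\lambda(f) = O(\adeg(f))$, observe first that every sensitive edge of $f$ joins a point of $f^{-1}(0)$ to a point of $f^{-1}(1)$, so the sensitivity graph is bipartite across this partition and $\lambda(f)$ equals the largest singular value of the $0/1$ biadjacency matrix $M$ whose rows are indexed by $f^{-1}(1)$ and columns by $f^{-1}(0)$. Fix a multilinear polynomial $p$ of degree $\adeg(f)$ with $\|p - f\|_\infty \le 1/3$, let $N$ be the adjacency operator of the Boolean hypercube acting on $\ell_2(\{0,1\}^n)$, and let $[p,N] = pN - Np$ be its commutator with multiplication by $p$. The crucial point is that on a sensitive edge $\{x, x^{\oplus i}\}$ with $f(x) = 1$ one has $p(x) - p(x^{\oplus i}) \ge 1/3$, whereas a function supported on $f^{-1}(0)$ vanishes at $x^{\oplus i}$ whenever $f(x^{\oplus i}) = 1$. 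Expanding $([p,N]\beta)(x) = \sum_i \bigl(p(x) - p(x^{\oplus i})\bigr)\beta(x^{\oplus i})$ and noting that for $x \in f^{-1}(1)$ the nonzero terms come precisely from sensitive edges, one obtains, for nonnegative $\alpha$ on $f^{-1}(1)$ and nonnegative $\beta$ on $f^{-1}(0)$ (each extended by zero to $\{0,1\}^n$),
\[
\langle \alpha, M\beta\rangle \;\le\; 3\,\langle \alpha, [p,N]\,\beta\rangle \;\le\; 3\,\|[p,N]\|_{\mathrm{op}}\,\|\alpha\|\,\|\beta\|,
\]
and hence $\lambda(f) = \|M\|_{\mathrm{op}} \le 3\,\|[p,N]\|_{\mathrm{op}}$.

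It then remains to establish a Markov--Bernstein-type inequality on the Boolean cube: $\|[p,N]\|_{\mathrm{op}} = O(d)$ for every polynomial $p$ of degree $d$ with $\|p\|_\infty = O(1)$. Since $[p,N] = -\sum_{i=1}^{n} D_i T_i$, where $T_i$ flips coordinate $i$ and $D_i$ is multiplication by the discrete derivative $x \mapsto p(x^{\oplus i}) - p(x)$, the triangle inequality yields only $\|[p,N]\|_{\mathrm{op}} \le \sum_i \|D_i\|_{\mathrm{op}}$, which can be as large as $\Theta(d^2)$ (for instance for a Chebyshev polynomial in $\sum_i x_i$ with $d \approx \sqrt{n}$) and would give only the weaker bound $\deg(f) = O(\adeg(f)^4)$. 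To obtain the correct exponent I would expand $[p,N][p,N]^{*}$ instead: its diagonal is multiplication by $\sum_i \bigl(p(x^{\oplus i}) - p(x)\bigr)^2$, which is $O(d^2)$ pointwise by the hypercube analogue of Bernstein's inequality, while the off-diagonal part, built from second differences of $p$, must likewise be shown to have operator norm $O(d^2)$. The cancellation being exploited is that where a bounded low-degree polynomial has large discrete derivatives --- near the extreme layers of the cube --- only few coordinates can be sensitive at once, and conversely; making this precise, perhaps via a recursive or eigenspace decomposition of $N$ in the spirit of Huang's argument, is the main obstacle. Granting it, $\deg(f) \le \lambda(f)^2 = O(\|[p,N]\|_{\mathrm{op}}^2) = O(\adeg(f)^2)$, matching the $\OR_n$ lower bound and hence optimal.
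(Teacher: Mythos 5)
Your overall architecture is exactly the paper's: combine Huang's bound $\deg(f)\le\lambda(f)^2$ with $\lambda(f)=O(\adeg(f))$, and your first step is also the paper's. Your commutator $[p,N]$ is (up to normalization) precisely the matrix $B_q$ of \cref{eq:Bq}, and your inequality $\lambda(f)\le 3\,\norm{[p,N]}$, obtained by testing against nonnegative vectors supported on $f^{-1}(1)$ and $f^{-1}(0)$, is \Cref{lem:lambdaBq}. You are also right that passing through the positive adversary does not suffice here and that the crux is a Markov--Bernstein-type commutator bound $\norm{[p,N]}=O(d)$, and right that the naive triangle inequality only yields $O(d^2)$.

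But that crux is exactly where your proposal stops, so there is a genuine gap: you do not prove $\norm{[p,N]}=O(d)$, and the strategy you sketch for it does not close. Expanding $[p,N][p,N]^{*}$ and bounding its diagonal pointwise by $\sum_i(p(x^{\oplus i})-p(x))^2=O(d^2)$ controls only the diagonal; the off-diagonal part (built from second differences of $p$) is an operator whose norm you would still need to bound by $O(d^2)$, and a pointwise Bernstein inequality gives no handle on it --- you acknowledge this yourself as ``the main obstacle.'' Since everything before this point is soft, the entire content of the theorem lives in this unproved lemma. The paper closes it not in the physical basis but by conjugating with the Fourier transform: $H[q,A_H]H = 2(X\tilde{R}-\tilde{R}X) = 2\,W\odot\tilde{R}$ where $\tilde{R}=H\diag(q)H$ satisfies $\tilde{R}_{xy}=0$ whenever $|x\oplus y|>d$ (\Cref{lem:sparse}) and $W_{xy}=|x|-|y|$. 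Thus only the band $||x|-|y||\le d$ of $W$ matters, and the key step (\Cref{lem:M,lem:VM,lem:BqV}) is an explicit factorization showing that some completion $V$ of this band has Schur-multiplier norm $\gamma_2(V)\le d$, whence $\norm{W\odot\tilde{R}}=\norm{V\odot\tilde{R}}\le\gamma_2(V)\norm{\tilde{R}}\le d$. (The paper also gives a second route: prove the weaker $\lambda(f)=O(\adeg(f)\log n)$ directly via error reduction, then strip the log factor using $\lambda(f\circ f)=\lambda(f)^2$ together with Sherstov's composition theorem for approximate degree and the tensor-power trick.) To complete your argument you would need to supply one of these --- or an honest proof of your commutator bound; as written, the theorem is not yet proved.
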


This relationship is optimal since it is saturated by the OR function on $n$ bits that has degree $n$ and approximate degree $\Theta(\sqrt{n})$~\cite{NS94}.

\Cref{thm:degadeg} follows by combining $\deg(f) \leq \lambda(f)^2$ (\Cref{thm:Huang}) with $\lambda(f) = O(\adeg(f))$, which we prove in \Cref{sec:degree}. 
This is the most technically challenging part of this paper, and we provide two proofs of this claim.
The first proof (\Cref{thm:lambda lower bounds adeg}) is arguably simpler, but it is not self contained and uses Sherstov's composition theorem for approximate degree~\cite{She13}, and has a large constant hidden in the big Oh. 
The second proof (\Cref{thm:tightadeg}) does not rely on this result and achieves the optimal constant inside the big Oh.

Observe that because approximate degree lower bounds quantum query complexity, \Cref{thm:degadeg} also implies \Cref{thm:deg vs. Q} (and hence \Cref{thm:D vs. Q}). Although \Cref{thm:deg vs. Q} is a consequence of this, our proof of \Cref{thm:deg vs. Q} is much simpler and additionally proves that $\lambda(f)$ lower bounds the positive-weights adversary method~\cite{Amb02}, which is not implied by the proof of \Cref{thm:degadeg}.

\para{Applications.}
In \Cref{sec:AKR}, we  use \Cref{thm:deg vs. Q} to prove the quantum analogue of the famous \textit{Aanderaa--Karp--Rosenberg conjecture}. Briefly, this conjecture is about the minimum possible query complexity of a nontrivial monotone graph property, for graphs specified by their adjacency matrices.

There are variants of the conjecture for different models of computation. For example, the randomized variant of the Aanderaa--Karp--Rosenberg conjecture, attributed to Karp~\cite[Conjecture 1.2]{SW86} and Yao~\cite[Remark (2)]{Yao77}, states that for all nontrivial monotone graph properties $f$, we have $\R(f) = \Omega(n^{2})$. Following a long line of work, the current best lower bound is $\R(f) = \Omega(n^{4/3} \log^{1/3}n)$ due to Chakrabarti and Khot~\cite{CK01}.

The quantum version of the conjecture was raised by Buhrman, Cleve, de Wolf, and Zalka~\cite{BCdWZ99}, who observed that the best we could hope for is $\Q(f) = \Omega(n)$, because the nontrivial monotone graph property ``contains at least one edge'' can be decided with $O(n)$ queries using Grover's algorithm.
Buhrman et al.~\cite{BCdWZ99} also showed that all nontrivial monotone graph properties satisfy $\Q(f)=\Omega(\sqrt{n})$. The current best bound is $\Q(f) = \Omega(n^{2/3}\log^{1/6}n)$, which is credited to Yao in \cite{MSS07}.
We resolve this conjecture by showing an optimal $\Omega(n)$ lower bound.

\begin{theorem}\label{thm:qAKR}
Let $f:\B^{\binom{n}{2}} \to \B$ be a nontrivial monotone graph property. Then $\Q(f)=\Omega(n)$.
\end{theorem}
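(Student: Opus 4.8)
The plan is to reduce \Cref{thm:qAKR} to a lower bound on the real degree of $f$ and then apply \Cref{thm:deg vs. Q}. That theorem gives $\deg(f)=O(\Q(f)^2)$, i.e.\ $\Q(f)=\Omega(\sqrt{\deg(f)})$, so it is enough to show that every nontrivial monotone graph property on $n$ vertices has $\deg(f)=\Omega(n^2)$. This is the right target, since the graph property ``$G$ contains an edge'' is $\OR$ on $\binom n2$ variables, with $\deg=\binom n2$ and $\Q=\Theta(n)$, which also shows the theorem must be tight against the $O(n)$ upper bound from Grover's algorithm.

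To bound the degree, I would first handle the ``unbalanced'' cases by a restriction. If $G^-$ is a maximal $0$-input of $f$ (so $f(G^-)=0$ but $f(G^-{+}e)=1$ for every non-edge $e$), then fixing all edges of $G^-$ to $1$ restricts $f$ to a monotone function that is $0$ at the bottom and $1$ at every weight-one point, hence to $\OR$ on the $\binom n2-|E(G^-)|$ remaining edge variables; since restriction does not increase degree, $\deg(f)\ge\binom n2-|E(G^-)|$. Dually, fixing all non-edges of a minimal $1$-input $G^+$ to $0$ restricts $f$ to $\AND$ on $|E(G^+)|$ variables, so $\deg(f)\ge|E(G^+)|$. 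This finishes the proof unless every maximal $0$-input has more than $\tfrac12\binom n2$ edges and every minimal $1$-input has fewer than $\tfrac12\binom n2$ edges.

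The hard part will be this ``balanced'' regime, which is genuinely nonempty: for instance ``$G$ has no isolated vertex'' lies in it, since its minimal $1$-inputs are disjoint unions of stars and its maximal $0$-inputs are $K_{n-1}\sqcup K_1$, yet its degree is the full $\binom n2$ (as one sees by inclusion--exclusion over which vertices are isolated). To treat it in general I would use the $S_n$-symmetry of graph properties together with the topological reading of high-degree Fourier coefficients: the coefficient of $\prod_{e\in R}x_e$ in the multilinear polynomial of $f$ equals $\pm\,\widetilde{\chi}(\Delta_f|_R)$, the reduced Euler characteristic of the simplicial complex of $0$-input graphs using only edges in $R$. So it suffices to produce some $R$ with $|R|=\Omega(n^2)$ for which $\widetilde{\chi}(\Delta_f|_R)\ne0$; this is the kind of statement accessible to the evasiveness-style arguments of Rivest--Vuillemin (and, when $n$ is a prime power, Kahn--Saks--Sturtevant), and making one of those arguments produce a nonzero Euler characteristic on a quadratically large edge set --- rather than merely a large decision-tree complexity --- is the step I expect to be the main obstacle. (A possible alternative is to pass, via a symmetry-preserving restriction, to a nontrivial monotone graph or bipartite-graph property on a constant fraction of the vertices and recurse.) Either route gives $\deg(f)=\Omega(n^2)$, and \Cref{thm:deg vs. Q} then yields $\Q(f)=\Omega(n)$.
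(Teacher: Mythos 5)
Your overall strategy is the same as the paper's: reduce to showing $\deg(f)=\Omega(n^2)$ for every nontrivial monotone graph property and then invoke \Cref{thm:deg vs. Q}. Your restriction argument for the ``unbalanced'' cases is correct as far as it goes (fixing a maximal $0$-input's edges to $1$ does yield $\OR$ on the complementary edge set, and dually for $\AND$), and you correctly observe that this leaves a genuinely nonempty balanced regime. But that balanced regime is where the entire content of the degree bound lives, and your proposal does not close it: you name two candidate routes (a nonvanishing reduced Euler characteristic on a quadratically large edge set via Rivest--Vuillemin/Kahn--Saks--Sturtevant machinery, or a symmetry-preserving restriction plus recursion) and explicitly flag the key step as ``the main obstacle.'' Neither route is carried out, so the quadratic degree lower bound --- which is the only nontrivial ingredient beyond \Cref{thm:deg vs. Q} --- remains unproved. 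This is a genuine gap, not a presentational one; the paper's authors themselves remark that they could not find a direct proof of $\deg(f)=\Omega(n^2)$ in the literature.

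The paper sidesteps exactly this difficulty by working over $\mathbb{F}_2$: Dodis and Khanna proved that every nontrivial monotone graph property satisfies $\deg_2(f)=\Omega(n^2)$, and the standard inequality $\deg_2(f)\le\deg(f)$ then transfers this to the real degree, after which \Cref{thm:deg vs. Q} finishes the argument. If you want to complete your proof, the cleanest fix is to replace your balanced-case analysis with this citation (or to reprove the $\mathbb{F}_2$ bound, which is more tractable than the real-degree bound because the Rivest--Vuillemin orbit-counting arguments naturally produce parity statements). Your Euler-characteristic formulation of the Fourier coefficients is correct, but upgrading the known evasiveness arguments from ``large decision-tree complexity'' to ``nonzero top coefficient on an $\Omega(n^2)$-size edge set'' over the reals is precisely the open-ended step you would need to supply.
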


\Cref{thm:qAKR} follows by combining \Cref{thm:deg vs. Q} with a known quadratic lower bound on the degree of monotone graph properties.

In \Cref{sec:readonce}, we use \Cref{thm:degadeg} to completely characterize the approximate degree of any read-once formula. It is known that the quantum query complexity of any read-once formula on $n$ variables is $\Theta(\sqrt{n})$~\cite{BS04,Rei11}. 
It has long been conjectured that the approximate degree of any read-once formula is also $\Theta(\sqrt{n})$. It has taken much effort to establish this even for special read-once formulas.
For example, the conjecture was proved for the simple depth-two read-once formula $\AND \circ \OR$ in 2013~\cite{BT13,She13a}. 
This result was later extended to all constant-depth balanced read-once formulas~\cite{BT15} and then to constant-depth unbalanced read-once formulas~\cite{BBGK18}. We resolve this question for all read-once formulas.

\begin{restatable}{theorem}{readonce}\label{thm:readonce}
For any read-once formula $f:\B^n \to \B$, we have $\adeg(f) = \Theta(\sqrt{n})$.
\end{restatable}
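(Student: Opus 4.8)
The plan is to prove the two bounds $\adeg(f) = O(\sqrt n)$ and $\adeg(f) = \Omega(\sqrt n)$ separately, using respectively the known quantum upper bound for read-once formulas and \Cref{thm:degadeg}. For the upper bound I would simply combine the fact $\Q(f) = O(\sqrt n)$ \cite{BS04,Rei11} with the observation of Beals et al.\ \cite{BBCMW01} that the acceptance probability of a $T$-query quantum algorithm is a real multilinear polynomial of degree at most $2T$ in the input bits; applied to a bounded-error algorithm for $f$ this gives $\adeg(f) \le 2\,\Q(f) = O(\sqrt n)$. Nothing new is needed in this direction.

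The content is in the lower bound, and the plan is to reduce it to exact degree. The key claim I would prove is that $\deg(f) = n$ for every read-once formula $f$ on $n$ variables over the De Morgan basis $\{\wedge,\vee,\neg\}$. I would establish this by induction on the formula tree, strengthening the statement to: the full monomial $\prod_{i=1}^{n} x_i$ has nonzero coefficient in the unique multilinear representation of $f$. Leaves ($x_i$ or $1-x_i$) are immediate. For $f = g \wedge h$ with $g,h$ read-once on disjoint variable sets we have $f = g\cdot h$ as multilinear polynomials, so the degrees add and the leading coefficient is the product of the two (nonzero, by induction) leading coefficients; the case $f = g \vee h = g + h - gh$ is the same, since the full monomial can only arise from the $gh$ term. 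Feeding $\deg(f) = n$ into \Cref{thm:degadeg} yields $\adeg(f) = \Omega(\sqrt{\deg(f)}) = \Omega(\sqrt n)$, and together with the upper bound this gives $\adeg(f) = \Theta(\sqrt n)$.

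It is worth recording why one cannot shortcut the lower bound through more classical means. The usual inequality $\adeg(f) \ge \sqrt{\bs(f)}$ of Nisan and Szegedy is too weak here: the balanced AND-OR tree on $n$ leaves has $\bs(f) = \Theta(\sqrt n)$, so that route only delivers $\adeg(f) = \Omega(n^{1/4})$. It is precisely the quadratic degree-versus-approximate-degree relation (\Cref{thm:degadeg}), rather than the sixth-power bound available before this work, that upgrades the classical identity $\deg(f) = n$ into the tight $\adeg(f) = \Omega(\sqrt n)$.

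Consequently, the main obstacle is entirely upstream: all the difficulty sits in \Cref{thm:degadeg}, and the present argument is routine once that theorem is in hand. The only points requiring care are the choice of basis in the definition of read-once formula --- the leading-monomial induction goes through verbatim for $\{\wedge,\vee,\neg\}$ and, more generally, for any gate basis in which every gate computes a function whose multilinear expansion has a nonvanishing top-degree term (so parity gates are also fine) --- and the bookkeeping of constants in $\adeg(f) \le 2\,\Q(f)$ and $\deg(f) = O(\adeg(f)^2)$, neither of which affects the $\Theta(\sqrt n)$ conclusion.
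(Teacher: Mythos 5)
Your proposal is correct and follows essentially the same route as the paper: the upper bound via Reichardt's $\Q(f)=O(\sqrt{n})$ together with $\adeg(f)\le 2\Q(f)$, and the lower bound by showing $\deg(f)=n$ for read-once formulas (the paper first applies De Morgan to reduce to AND/NOT gates, while you track the top-monomial coefficient through both AND and OR gates directly, but this is the same induction) and then invoking \Cref{thm:degadeg}. No gaps.
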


\subsection{Known relations and separations}

\setlength{\intextsep}{0pt}%
\setlength{\columnsep}{10pt}%
\begin{wrapfigure}{r}{0.35\textwidth}
\centering
\vspace{-5ex}
  \begin{tikzpicture}[x=1cm,y=1cm]

     \node (D) at(2,5){$\D$};
     \node (Rz) at(1,4){$\R_0$};
     \node (QE) at(3,4){$\Q_E$};
     \node (C) at(0,3){$\C$};
     \node (R) at(2,3){$\R$};
     \node (RC) at(1,2){$\RC$};
     \node (bs) at(1,1){$\bs$};
     \node (s) at(1,0){$\s$};
     \node (l) at(2,-1){$\lambda$};
     \node (deg) at(4,3){$\deg$};
     \node (Q) at(3,2){$\Q$};
     \node (adeg) at(4,0.9){$\adeg$};

     \path[-] (Rz) edge (D);
     \path[-] (QE) edge  (D);
     \path[-] (C) edge (Rz);
     \path[-] (R) edge (Rz);
     \path[-] (RC) edge (C);
     \path[-] (RC) edge (R);
     \path[-] (bs) edge (RC);
     \path[-] (bs) edge (s);
     \path[-] (s) edge (l);
     \path[-] (adeg) edge (l);
     \path[-] (deg) edge (QE);
     \path[-] (Q) edge (QE);
     \path[-] (Q) edge (R);
     \path[-] (adeg) edge (Q);
     \path[-] (adeg) edge (deg);
  \end{tikzpicture}
    \caption{Relations between complexity measures. An upward line from a measure $M_1(f)$ to $M_2(f)$ denotes $M_1(f) = O(M_2(f))$ for all total functions $f$.\label{fig:rel}}
\end{wrapfigure}
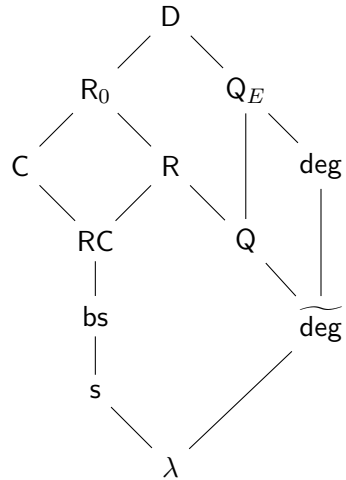

\Cref{tab:sep} summarizes the known relations and separations between complexity measures studied in this paper (and more). This is an update to a similar table that appears in \cite{ABK16} with the addition of $\s(f)$ and $\lambda(f)$. 
Definitions and additional details about interpreting the table can be found in \cite{ABK16}.

For all the separations claimed in the table, we provide an example of a separating function or a citation to construction of such a function. 
All the relationships in the table follow by combining the relationships depicted in \Cref{fig:rel} and the following inequalities that hold for all total Boolean functions:

\begin{itemize}[noitemsep,topsep=4pt]
\item $\C(f) \leq \bs(f)\s(f)$ \cite{Nisan91}
\item $\D(f) \leq \bs(f)\C(f)$ \cite{BBCMW01}
\item $\D(f)\leq \bs(f)\deg(f)$ \cite{Mid04}
\item $\RC(f) = O(\adeg(f)^2)$ \cite{KT16}
\item $\R_0(f) = O(\R(f) \s(f) \log \RC(f))$ \cite{KT16}
\item $\deg(f) \leq \lambda(f)^2$ \cite{Huang2019}
\item $\s(f) \leq \lambda(f)^2$ (\Cref{lem:lambdalower})
\end{itemize}

\setlength{\tabcolsep}{2pt}
\renewcommand{\arraystretch}{1.3}
\begin{table}
\hspace{-1em}
\begin{minipage}{\linewidth}
\begin{center}
\caption{Best known separations between complexity measures}
\begin{tabular}{r||c|c|c|c|c|c|c|c|c|c|c|c}
{} & $\D$ & $\R_0$ & $\R$ & $\C$ & $\RC$ & $\bs$ & $\s$ & $\lambda$ & $\Q_E$ & $\deg$ & $\Q$ & $\adeg$ \\ \hline\hline

$\D$ &
\cellcolor{darkgray} & 
\ct{2, 2}{\smcite{ABB+15}} & 
\cc{2, 3}{\smcite{ABB+15}} &
\ct{2, 2}{$\wedge\circ\vee$} & 
\cc{2, 3}{$\wedge\circ\vee$} & 
\cc{2, 3}{$\wedge\circ\vee$} &
\co{3,\Huangbound{6}}{\smcite{BHT17}} &
\co{4,\Huangbound{6}}{\smcite{ABB+15}} &
\co{2, 3}{\smcite{ABB+15}} & 
\co{2, 3}{\smcite{GPW15}} & 
\ct{4,\newbound{4}}{\smcite{ABB+15}} &
\ct{4,\newbound{4}}{\smcite{ABB+15}} \\ \hline

$\R_0$ &
\ct{1, 1}{$\oplus$} &
\cellcolor{darkgray} & 
\ct{2, 2}{\smcite{ABB+15}} &
\ct{2, 2}{$\wedge\circ\vee$} & 
\cc{2, 3}{$\wedge\circ\vee$} & 
\cc{2, 3}{$\wedge\circ\vee$} &
\co{3,\Huangbound{6}}{\smcite{BHT17}} &
\co{4,\Huangbound{6}}{\smcite{ABB+15}} &
\co{2, 3}{\smcite{ABB+15}} &
\co{2, 3}{\smcite{GJPW15}} &
\co{3,\newbound{4}}{\smcite{ABB+15}} & 
\ct{4,\newbound{4}}{\smcite{ABB+15}} \\ \hline

$\R$ &
\ct{1, 1}{$\oplus$} & 
\ct{1, 1}{$\oplus$} & 
\cellcolor{darkgray} & 
\ct{2, 2}{$\wedge\circ\vee$} &
\cc{2, 3}{$\wedge\circ\vee$} & 
\cc{2, 3}{$\wedge\circ\vee$} &
\co{3,\Huangbound{6}}{\smcite{BHT17}} &
\co{4,\Huangbound{6}}{\smcite{ABB+15}} &
\co{$\frac{3}{2}$, 3}{\smcite{ABB+15}} &
\co{2, 3}{\smcite{GJPW15}} &
\co{$3$,\newbound{4}\vspace{-0.75ex}}{\smcite{BS20}\\[-2ex]\smcite{SSW20}\vspace{-0.4ex}} & 
\ct{4,\newbound{4}}{\smcite{ABB+15}} \\ \hline

$\C$ &
\ct{1, 1}{$\oplus$} & 
\ct{1, 1}{$\oplus$} & 
\co{1, 2}{$\oplus$} & 
\cellcolor{darkgray} &
\ct{2, 2}{\smcite{GSS13}} & 
\ct{2, 2}{\smcite{GSS13}} &
\co{2.22,\Huangbound{5}}{\smcite{BHT17}} &
\co{2.44,\Huangbound{6}}{\smcite{BHT17}\footnote{%
\cite{BHT17} exhibited a family of functions
satisfying $\C(f)=\Omega(\UC_{\min}(f)^{1.22})$,
and they also gave a transformation which modifies
a function $f$ in a way that causes $\s_1(f)$
to become $O(1)$, causes $\s_0(f)$ to become
at most $\UC_{\min}(f)$, and does not decrease
$\C(f)$. Since we have
$\lambda(f)\le\sqrt{\s_0(f)\s_1(f)}$ by
\Cref{lem:lambda_s},
this implies a power $2.44$ separation between
$\lambda(f)$ and $\C(f)$.
}} &
\co{1.15, 3}{\smcite{Amb13}}  & 
\co{1.63, 3}{\smcite{NW95}} & 
\co{2, 4}{$\wedge$} & 
\co{2, 4}{$\wedge$} \\ \hline

$\RC$ &
\ct{1, 1}{$\oplus$} & 
\ct{1, 1}{$\oplus$} & 
\ct{1, 1}{$\oplus$} & 
\ct{1, 1}{$\oplus$} & 
\cellcolor{darkgray} & 
\co{$\frac{3}{2}$, 2}{\smcite{GSS13}} &
\co{2,\Huangbound{4}}{\smcite{Rub95}} &
\co{2,\Huangbound{4}}{$\wedge$} &
\co{1.15, 2}{\smcite{Amb13}} & 
\co{1.63, 2}{\smcite{NW95}} & 
\ct{2, 2}{$\wedge$} & 
\ct{2, 2}{$\wedge$} \\ \hline

$\bs$ &
\ct{1, 1}{$\oplus$} & 
\ct{1, 1}{$\oplus$} & 
\ct{1, 1}{$\oplus$} & 
\ct{1, 1}{$\oplus$} & 
\ct{1, 1}{$\oplus$} & 
\cellcolor{darkgray} &
\co{2,\Huangbound{4}}{\smcite{Rub95}} &
\co{2,\Huangbound{4}}{$\wedge$} &
\co{1.15, 2}{\smcite{Amb13}} & 
\co{1.63, 2}{\smcite{NW95}} & 
\ct{2, 2}{$\wedge$} & 
\ct{2, 2}{$\wedge$} \\ \hline

$\s$ &
\ct{1, 1}{$\oplus$} &
\ct{1, 1}{$\oplus$} &
\ct{1, 1}{$\oplus$} &
\ct{1, 1}{$\oplus$} &
\ct{1, 1}{$\oplus$} &
\ct{1, 1}{$\oplus$} &
\cellcolor{darkgray} &
\ct{2, 2}{$\wedge$} &
\co{1.15, 2}{\smcite{Amb13}} &
\co{1.63, 2}{\smcite{NW95}}  &
\ct{2, 2}{$\wedge$} & 
\ct{2, 2}{$\wedge$} \\ \hline

$\lambda$ &
\ct{1, 1}{$\oplus$} &
\ct{1, 1}{$\oplus$} &
\ct{1, 1}{$\oplus$} &
\ct{1, 1}{$\oplus$} &
\ct{1, 1}{$\oplus$} &
\ct{1, 1}{$\oplus$} &
\ct{1, 1}{$\oplus$} &
\cellcolor{darkgray} &
\ct{1,\newbound{1}}{$\oplus$} &
\ct{1,\newbound{1}}{$\oplus$} &
\ct{1,\newbound{1}}{$\oplus$} &
\ct{1,\newbound{1}}{$\oplus$}  \\ \hline

$\Q_E$ &
\ct{1, 1}{$\oplus$} &
\co{1.33, 2}{$\bar{\wedge}$-tree} &
\co{1.33, 3}{$\bar{\wedge}$-tree} &
\ct{2, 2}{$\wedge\circ\vee$} &
\cc{2, 3}{$\wedge\circ\vee$} &
\cc{2, 3}{$\wedge\circ\vee$} &
\co{3,\Huangbound{6}}{\smcite{BHT17}} &
\co{4,\Huangbound{6}}{\smcite{ABK16}} &
\cellcolor{darkgray} &
\co{2, 3}{\smcite{ABK16}} &
\co{2,\newbound{4}}{$\wedge$} &
\ct{4,\newbound{4}}{\smcite{ABK16}} \\ \hline

$\deg$ & 
\ct{1, 1}{$\oplus$} &
\co{1.33, 2}{$\bar{\wedge}$-tree} &
\co{1.33,\Huangbound{2}}{$\bar{\wedge}$-tree} &
\ct{2, 2}{$\wedge\circ\vee$} &
\ct{2,\Huangbound{2}}{$\wedge\circ\vee$} &
\ct{2,\Huangbound{2}}{$\wedge\circ\vee$} &
\ct{2,\Huangbound{2}}{$\wedge\circ\vee$} &
\ct{2,\Huangbound{2}}{$\wedge$} &
\ct{1, 1}{$\oplus$}&
\cellcolor{darkgray} &
\ct{2,\newbound{2}}{$\wedge$} &
\ct{2,\newbound{2}}{$\wedge$} \\ \hline

$\Q$ &
\ct{1, 1}{$\oplus$} &
\ct{1, 1}{$\oplus$} &
\ct{1, 1}{$\oplus$} &
\ct{2, 2}{\smcite{ABK16}} &
\cc{2, 3}{\smcite{ABK16}} &
\cc{2, 3}{\smcite{ABK16}} &
\co{3,\Huangbound{6}}{\smcite{BHT17}} &
\co{4,\Huangbound{6}}{\smcite{ABK16}} &
\ct{1, 1}{$\oplus$} &
\co{2, 3}{\smcite{ABK16}} &
\cellcolor{darkgray} &
\ct{4,\newbound{4}}{\smcite{ABK16}} \\ \hline

\raisebox{-2pt}{$\adeg$} &
\ct{1, 1}{$\oplus$} &
\ct{1, 1}{$\oplus$} &
\ct{1, 1}{$\oplus$} &
\ct{$2$, 2}{\smcite{BT17}} &
\ct{$2$,\Huangbound{2}}{\smcite{BT17}} &
\ct{$2$,\Huangbound{2}}{\smcite{BT17}} &
\ct{2,\Huangbound{2}}{\smcite{BT17}} &
\ct{2,\Huangbound{2}}{\smcite{BT17}} &
\ct{1, 1}{$\oplus$} &
\ct{1, 1}{$\oplus$}  &
\ct{1, 1}{$\oplus$} &
\cellcolor{darkgray} \\ %\hline
\end{tabular}
\label{tab:sep}
\end{center}
\begin{itemize}[itemsep=3pt]
    \item  An entry $a,b$ in the row $M_1$ and column $M_2$ roughly means that there exists a function $g$ with $M_1(g) \geq M_2(g)^{a-o(1)}$, and for all total functions $f$, $M_1(f) \leq M_2(f)^{b+o(1)}$ (see \cite{ABK16} for a precise definition). For example, the $3,4$ entry at row $\R$ and column $\Q$ means that the maximum possible separation between $\R$ and $\Q$ is at least cubic and at most quartic.
    \item {The second row of each cell contains an example of a function that achieves the separation (or a citation to an example), 
    where $\oplus = \textsc{parity}$, 
    $\wedge = \textsc{and}$, 
    $\vee = \textsc{or}$, 
    $\wedge \circ \vee = \textsc{and-or}$, 
    and $\bar{\wedge}$-tree is the balanced \textsc{nand}-tree function.}
    \item Cells have a white background if the relationship is optimal and a gray background otherwise.
    \item Entries with a \Huangbound{green} background follow from Huang's result. Entries with a \newbound{red} background follow from this work.
\end{itemize}
\end{minipage}
\end{table}
\renewcommand{\arraystretch}{1}

\subsection{Paper organization}

\Cref{sec:prelim} contains preliminaries and definitions of various complexity measures that appear in this paper. \Cref{sec:mainproof} reproves \Cref{thm:Huang}, which follows Huang's original proof~\cite{Huang2019}, and then shows that $\lambda(f) = O(Q(f))$, which establishes \Cref{thm:deg vs. Q}. \Cref{sec:degree} establishes $\lambda(f) =O(\adeg(f))$, which implies \Cref{thm:degadeg}. \Cref{sec:AKR}  gives some background and motivation for the Aanderaa--Karp--Rosenberg conjecture and proves \Cref{thm:qAKR}. \Cref{sec:readonce} establishes \Cref{thm:readonce}. We end with some open problems in \Cref{sec:open}. \Cref{sec:properties} describes some properties of $\lambda(f)$, its many equivalent formulations, and its relationship with other complexity measures.

\section{Preliminaries}\label{sec:prelim}

\subsection{Query complexity}

Let $f:\B^n \to\B$ be a Boolean function.
Let $A$ be a deterministic
algorithm that computes $f(x)$ on input
$x\in\B^n$ by making queries to the bits of $x$.
The worst-case number of queries $A$ makes (over choices
of $x$) is the query complexity of $A$. The minimum query
complexity of any deterministic algorithm computing $f$
is the deterministic query complexity of $f$, denoted
by $\D(f)$.

We define the bounded-error
randomized (respectively quantum) query complexity
of $f$, denoted by $\R(f)$ (respectively $\Q(f)$),
in an analogous way. We say an algorithm $A$ computes $f$
with bounded error if $\Pr[A(x)=f(x)]\geq 2/3$ for all
$x\in\B^n$, where the probability is over the internal
randomness of $A$. Then $\R(f)$ (respectively $\Q(f)$)
is the minimum number of queries required by any
randomized (respectively quantum) algorithm that computes $f$
with bounded error. It is clear that $\Q(f)\leq \R(f)\leq \D(f)$.
For more details on these measures,
see the survey by Buhrman and de Wolf \cite{BdW02}.

\subsection{Sensitivity and block sensitivity}

Let $f:\B^n\to\B$ be a Boolean function, and let $x\in\B^n$
be a string. A block is a subset of $[n]$. We say that a block $B\subseteq [n]$
is sensitive for $x$ (with respect to $f$) if
$f(x \oplus \mathbbold{1}_B)\neq f(x)$, where $\mathbbold{1}_B$ is the $n$-bit string that
is $1$ on bits in $B$ and $0$ otherwise. 
We say a bit $i$ is sensitive for $x$
if the block $\{i\}$ is sensitive for $x$.
The maximum number of disjoint blocks
that are all sensitive for $x$
is called the block sensitivity of $x$ (with respect to $f$),
denoted by $\bs_x(f)$. The number of sensitive bits for $x$
is called the sensitivity of $x$,
denoted by $\s_x(f)$. Clearly, $\bs_x(f)\geq\s_x(f)$,
since $\s_x(f)$ is has the same definition as $\bs_x(f)$
except that the size of the blocks is restricted to $1$.
We define $\s(f) = \max_{x\in \B^n}{\s_x(f)}$ and $\bs(f) = \max_{x\in \B^n}{\bs_x(f)}$.

\subsection{Degree measures}\label{sec:degs}

A polynomial $q \in \Reals[x_1, \ldots, x_n]$ is said to represent
the function $f:\B^n\to\B$ if
$q(x)=f(x)$ for all $x\in\B^n$.
A polynomial $q$ is said to $\eps$-approximate $f$ if $q(x)\in[0,\eps]$
for all $x\in f^{-1}(0)$ and $q(x)\in[1-\eps,1]$ for all
$x\in f^{-1}(1)$.
The degree of $f$, denoted by
$\deg(f)$, is the minimum degree of a polynomial representing $f$.
The $\eps$-approximate degree, denoted by $\tilde{\deg}_\eps(f)$,
is the minimum degree of a polynomial $\eps$-approximating $f$.
We will omit $\eps$ when $\eps=1/3$.
We know that $\D(f)\geq\deg(f)$,
$\R(f)\geq\tilde{\deg}(f)$, and $\Q(f)\geq\tilde{\deg}(f)/2$.

The degree of $f$ as a polynomial is also called the Fourier-degree of $f$, which equals $\max\{|S|:|\hat{f}(S)|\neq 0\}$ where $\hat{f}(S) := \E_x[f(x) \cdot (-1)^{\sum_{i\in S} x_i}]$. In particular, $\deg(f)<n$ if and only if $f$ agrees with the Parity function, $\Parity_n(x) = \oplus_{i=1}^{n}x_i$, on exactly half of the inputs.

\section{Degree, spectral sensitivity, and quantum query complexity}\label{sec:mainproof}

Before proving \Cref{thm:deg vs. Q}, which is based on Huang's proof, we reinterpret his result in terms of a new complexity measure of Boolean functions that we call $\lambda(f)$: the spectral norm of the sensitivity graph of $f$.

\begin{definition}[Sensitivity Graph $G_f$, Spectral Sensitivity $\lambda(f)$]\label{def:sengraph}
	Let $f:\B^n \to \B$ be a Boolean function.
	The {\sf sensitivity graph} of $f$, $G_f = (V,E)$ is a subgraph of the Boolean hypercube, where $V= \B^n$, and $E = \{(x,x\oplus e_i)\in V\times V: i\in [n], f(x)\neq f(x\oplus e_i)\}$. That is, $E$ is the set of edges between neighbors on the hypercube that have different $f$-value. Let $A_f$ be the adjacency matrix of the graph $G_f$.  We define the {\sf spectral sensitivity} of $f$ as $\lambda(f)=\norm{A_f}$.
\end{definition}

Note that since $G_f$ is bipartite, the largest and smallest eigenvalues of $A_f$ are equal in magnitude~\cite[Theorem 8.8.2]{GR01}, and because $A_f$ is a real symmetric matrix, $\lambda(f)$ is also the maximum eigenvalue of $A_f$.  

Huang's proof of the sensitivity conjecture~\cite{Huang2019} can be divided into two steps:
\begin{enumerate}
	\item $\forall{f}: \deg(f) \le \lambda(f)^2$
	\item $\forall{f}: \lambda(f) \le \s(f)$
\end{enumerate} 

The second step is the simple fact that the spectral norm of an adjacency matrix is at most the maximum degree of any vertex in the graph, which equals $\s(f)$ in this case.

We reprove the first claim of Huang's proof~\cite{Huang2019}, i.e., $\deg(f) \le \lambda(f)^2$, for completeness. This is \Cref{thm:Huang} from the introduction.

\Huang*

\begin{proof}
	Without loss of generality we can assume that $\deg(f) = n$ since otherwise we can restrict our attention to a subcube of dimension $\deg(f)$ in which the degree remains the same and the top eigenvalue is at most $\lambda(f)$. Specifically, we can choose any monomial in the polynomial representing $f$ of degree $\deg(f)$ and set all the variables not appearing in this monomial to $0$.
		
	For $f$ with $\deg(f)=n$, let $V_0 = \{x\in \B^n: f(x) = \Parity_n(x)\}$ and $V_1=\{x\in \B^n: f(x)\neq \Parity_n(x)\}$.
	By the fact that $\deg(f)=n$ we know that $|V_0|\neq |V_1|$ as otherwise $f$ would have $0$ correlation with the $n$-variate parity function, implying that $f$'s top Fourier coefficient is $0$.
	
	We also note that any edge in the hypercube that goes between $V_0$ and $V_0$ is an edge in $G_f$ since it changes the value of $f$. This holds since for such an edge, $(x,x\oplus e_i)$, we have $f(x) = \Parity_n(x)\neq \Parity_n(x \oplus e_i) = f(x \oplus e_i)$.
	Similarly, any edge in the hypercube that goes between $V_1$ and $V_1$ is an edge in $G_f$. 
	
Assume without loss of generality that $|V_0| > |V_1|$. Thus, $|V_0|\ge 2^{n-1} + 1$. We will show that there exists a nonzero vector $v'$ supported only on the entries of $V_0$, such that $\|A_f \cdot v'\| \ge \sqrt{n} \cdot \|v'\|$.

Let $G=(V,E)$ be the complete $n$-dimensional Boolean hypercube. That is, $V = \B^n$ and $E = \{(x,x\oplus e_i)\;:\;x\in \B^n, i\in [n]\}$.
Take the following signing of the edges of the Boolean hypercube, defined recursively. 
\begin{equation}
B_1 = \begin{pmatrix}0&1 \\1&0\end{pmatrix} \  \mathrm{and} \  B_i = \begin{pmatrix}B_{i-1}&I \\I&-B_{i-1}\end{pmatrix} \   \text{for $i\in \{2,\ldots,n\}$}.    
\end{equation}
This gives a new matrix $B_n \in \{-1,0,1\}^{V\times V}$ where $B_n(x,y)=0$ if and only if $x$ is not a neighbor of $y$ in the hypercube. 

Huang showed that $B_n$ has $2^{n}/2$ eigenvalues that equal $-\sqrt{n}$ and $2^{n}/2$ eigenvalues that equal $+\sqrt{n}$.
To show this, he showed that $B_n^2 = n\cdot I$ by induction on $n$ and thus all eigenvalues of $B_n$ must be either $+\sqrt{n}$ or $-\sqrt{n}$. Then, observing that the trace of $B_n$ is $0$, as all diagonal entries equal $0$, we see that we must have an equal number of $+\sqrt{n}$ and $-\sqrt{n}$ eigenvalues. 

Thus, the subspace of eigenvectors for $B_n$ with eigenvalue $\sqrt{n}$ is of dimension $2^{n}/2$. Using $|V_1|<2^n/2$, there must exists a nonzero eigenvector for $B_n$ with eigenvalue $\sqrt{n}$ that vanishes on $V_1$. 
Fix $v$ to be any such vector. 

Let $v'$  be the vector whose entries are the absolute values of the entries of $v$.
We claim that $\|A_f \cdot v'\|_2 \ge \sqrt{n}\cdot \|v'\|_2$.
To see so, note that for every $x\in V_0$ we have 
\begin{align}(A_f \cdot v')_x &= \sum_{y\sim x:f(y)\neq f(x)} v'_y = \sum_{y\sim x:y\in V_0} v'_y = \sum_{y\sim x} v'_y \nonumber \\ 
	&\ge \sum_{y\in \B^n} |B_{x,y} v_y| \ge  \left|\sum_{y\in \B^n}B_{x,y} v_y\right| = \sqrt{n}\cdot |v_x| = \sqrt{n} \cdot v'_x\;.
\end{align}
On the other hand, for $x\in V_1$ we have $(A_f \cdot v')_x = 0 = v'_x$.
Thus the norm of $A_f\cdot v'$ is at least $\sqrt{n}$ times the norm of $v'$, and hence $\lambda(f)= \|A_f\| \ge \sqrt{n} = \sqrt{\deg(f)}$. 
\end{proof}

Finally, we prove that $\lambda(f) = O(\Q(f))$. This proof goes via the spectral adversary method, $\SA(f)$, introduced by Barnum, Saks, and Szegedy~\cite{BSS03}, which has many other equivalent formulations described in~\cite{SSpalekS06}. This result also follows from the equivalent characterization of $\lambda(f)$ as $\K(f)$, a complexity measure defined by Koutsoupias~\cite{Kou93} that we describe in more detail in \Cref{sec:properties}, and the known result that $\K(f) \leq \SA(f)$ due to Laplante, Lee, and Szegedy~\cite[Theorem 5.2]{LLS06}. We provide a self-contained proof here for completeness.

\begin{definition}[Spectral Adversary method]
Let $\{D_i\}_{i\in [n]}$ and $F$ be matrices of size $\B^n \times \B^n$ with entries in $\B$ satisfying $D_i[x,y]=1$ if and only if $x_i\neq y_i$, and $F[x,y]=1$ if and only if $f(x)\neq f(y)$. Let $\Gamma$ denote a $\B^n \times \B^n$ nonnegative symmetric matrix such that $\Gamma\circ F = \Gamma$ (i.e., the nonzero entries of  $\Gamma$ are a subset of the  the nonzero entries of $F$). Then 
$\SA(f) = \max_{\Gamma} \frac{\|\Gamma\|}{\max_{i\in[n]}{\|\Gamma \circ D_i\|}}$.
\end{definition}
Barnum, Saks, and Szegedy \cite{BSS03} proved that $\Q(f) = \Omega(\SA(f))$.
\begin{lemma}\label{lem:lambdaQ}
	For all partial Boolean functions $f$, $\lambda(f) \leq \SA(f) = O(Q(f))$.
\end{lemma}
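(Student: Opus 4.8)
The plan is to split the chain $\lambda(f) \le \SA(f) = O(\Q(f))$ into its two parts. The second inequality $\SA(f) = O(\Q(f))$ is precisely the Barnum--Saks--Szegedy bound quoted just above, so the only thing left to prove is $\lambda(f) \le \SA(f)$. Since $\SA(f)$ is a maximum over all legal adversary matrices $\Gamma$, it suffices to exhibit one good choice of $\Gamma$ and estimate the resulting ratio. The natural candidate is $\Gamma = A_f$, the adjacency matrix of the sensitivity graph $G_f$ itself.

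First I would check that $A_f$ is a legal adversary matrix: it is nonnegative, symmetric, and satisfies $A_f \circ F = A_f$, because by \Cref{def:sengraph} every edge $(x, x\oplus e_i)$ of $G_f$ joins a pair of inputs with $f(x)\neq f(x\oplus e_i)$, i.e.\ a pair on which $F$ has a $1$. Consequently $\SA(f) \ge \|A_f\| \big/ \max_{i\in[n]} \|A_f \circ D_i\| = \lambda(f) \big/ \max_{i\in[n]} \|A_f \circ D_i\|$, and it remains only to show that each $\|A_f \circ D_i\|$ is at most $1$.

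Fix $i\in[n]$. The entry $(A_f\circ D_i)[x,y]$ equals $1$ exactly when $x$ and $y$ differ in precisely coordinate $i$ \emph{and} $f(x)\neq f(y)$; in particular, for each fixed $x$ there is at most one such $y$, namely $x\oplus e_i$. Hence $A_f\circ D_i$ is a symmetric $0/1$ matrix with at most one nonzero entry in every row and every column, so up to simultaneous row/column permutation it is a direct sum of zero blocks and $2\times 2$ blocks of the form $\bigl(\begin{smallmatrix}0&1\\1&0\end{smallmatrix}\bigr)$; every such block has spectral norm $1$, so $\|A_f\circ D_i\|\le 1$. This yields $\SA(f)\ge \lambda(f)$, and combining with $\SA(f)=O(\Q(f))$ finishes the proof.

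I do not expect any real obstacle: the entire content is the observation that $\Gamma=A_f$ is the right test matrix and that Hadamard-multiplying it by $D_i$ peels off a single matching (the sensitive edges in direction $i$). The only points needing a line of care are the degenerate case where $f$ is constant, in which $A_f=0$ and the claim is trivial, and, for partial $f$, indexing all of $\Gamma$, the $D_i$, and $F$ by $\Dom(f)$ so that $A_f$ is the adjacency matrix of the sensitivity graph restricted to the domain; neither point affects the argument above.
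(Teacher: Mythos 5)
Your proposal is correct and matches the paper's proof: both take $\Gamma=A_f$, note $\|\Gamma\|=\lambda(f)$, and bound $\|\Gamma\circ D_i\|\le 1$ because the sensitive edges in direction $i$ form a matching (maximum degree $1$), then invoke the Barnum--Saks--Szegedy bound $\SA(f)=O(\Q(f))$. Your explicit decomposition of $A_f\circ D_i$ into $2\times 2$ blocks is just a slightly more detailed justification of the same degree-$1$ observation.
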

\begin{proof}
	We first prove that $\lambda(f) \leq \SA(f)$.
	Indeed, one can take $\Gamma$ to be simply the adjacency matrix of $G_f$. That is, for any $x,y\in \B^n$ put $\Gamma[x,y] = 1$ if and only if $y \sim x$ in the hypercube and $f(x)\neq f(y)$. We observe that $\|\Gamma\|=\lambda(f)$. On the other hand, for any $i\in [n]$, $\Gamma \circ D_i$ is  the restriction of the sensitive edges in direction $i$. 
	The maximum degree in the graph represented by $\Gamma \circ D_i$ is $1$ hence  $\|\Gamma \circ D_i\|$ is at most $1$.
	Thus we have 
	\begin{equation}
			\SA(f) \ge \frac{\|\Gamma\|}{\max_{i\in[n]}{\|\Gamma \circ D_i\|}} \ge \lambda(f).
	\end{equation}
	Combining this with $\Q(f) = \Omega(\SA(f))$~\cite{BSS03}, we get $\lambda(f) \leq \SA(f) = O(Q(f))$.
\end{proof}

From \Cref{lemma:Huang} and \Cref{lem:lambdaQ} we immediately get \Cref{thm:deg vs. Q}.

\section{Degree vs.\ approximate degree}
\label{sec:degree}

In this section we establish that for all total functions, spectral sensitivity is lower bounded by approximate degree. We first prove the simpler result for exact degree and then provide two proofs of the result for approximate degree.

\subsection{Spectral sensitivity lower bounds degree}

We first show the simpler result that spectral sensitivity lower bounds (exact) degree. 

\begin{theorem}\label{thm:degree}
For all total Boolean functions $f:\B^n \to \B$, $\lambda(f) \leq \deg(f)$.
\end{theorem}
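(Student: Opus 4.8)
The plan is to bound the spectral norm of the sensitivity matrix $A_f$ by the degree of $f$ directly, using the fact that $\lambda(f) = \|A_f\|$ equals the maximum of $v^{\transpose} A_f w$ over unit vectors $v$ supported on $f^{-1}(0)$ and $w$ supported on $f^{-1}(1)$ (this uses bipartiteness of $G_f$, noted after \Cref{def:sengraph}). So I would fix an optimal such pair $v, w$ and aim to exhibit a multilinear polynomial of degree at most $\deg(f)$ that ``explains'' the quadratic form $v^{\transpose} A_f w = \lambda(f)$. Concretely, write $f$ as a multilinear polynomial $f(x) = \sum_{S} \hat f(S)\, \chi_S(x)$ of degree $d = \deg(f)$ (in the $\pm 1$ or $0/1$ basis, whichever is cleaner), and use the discrete-derivative identity $f(x) - f(x \oplus e_i) = \partial_i f(x)$, where $\partial_i f$ has degree at most $d$ and is independent of $x_i$.

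The key step is to recognize that $A_f$, restricted to rows in $f^{-1}(0)$ and columns in $f^{-1}(1)$, has entries $A_f[x,y] = \sum_{i} |f(x) - f(y)| \cdot \one{y = x \oplus e_i}$; since $f(x) \ne f(y)$ on these pairs, $A_f[x,y] = \sum_i \one{y = x\oplus e_i}\cdot |\partial_i f(x)|$. I would then split $A_f = \sum_{i=1}^n A_f^{(i)}$ where $A_f^{(i)}$ collects the sensitive edges in direction $i$, and bound $\|A_f\| \le \sum_i \|A_f^{(i)}\|$ — no, that loses too much. The better route: observe that $\|A_f\|^2 = \|A_f^{\transpose} A_f\| = \|A_f A_f^{\transpose}\|$, and try to show $A_f A_f^{\transpose}$ (or a suitable variant) is governed by a degree-$2d$ object, then take a square root; alternatively, and more promisingly, use the operator $L = \sum_i \partial_i$-type construction and the identity that the directional-derivative operators $D_i \colon g \mapsto g - g\circ(\text{flip }i)$ on the space of functions of degree $\le d$ form a commuting family whose sum has spectral norm controlled by $d$. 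In fact the cleanest argument I would pursue: let $\Pi_d$ be the projection onto Fourier levels $\le d$; then for the sign representation, $A_f$ acts as a sub-matrix of $\sum_i X_i$ restricted to where $f$ is sensitive, and since $f$ has degree $d$, every sensitive edge direction at $x$ contributes a nonzero Fourier coefficient of a level-$\le d$ polynomial — so at most... this still needs care.

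Given the subtlety, the approach I would actually commit to is the direct test-vector / variational one: take the top singular pair $(v,w)$ of $A_f$ and consider the polynomial $p(x) = \sum_{x' \in f^{-1}(0)} v_{x'}\, \one{x = x'}$ extended via the degree-$d$ structure of $f$; then $\lambda(f) = \langle v, A_f w\rangle = \sum_{i} \sum_{x\in f^{-1}(0)} v_x\, w_{x\oplus e_i}\, \one{(x,x\oplus e_i)\in E}$, and I would bound each inner sum over $i$ by relating it to the coefficient of $x_i$ in a restriction of $f$, using Cauchy–Schwarz per direction and summing. The total number of directions that can appear ``sensitively'' at any single vertex with nonzero weight is $s(f)$, but we want $\deg(f)$, so the real mechanism must be: $\deg(f) \ge$ the number of Fourier-relevant coordinates, and one shows $\langle v, A_f w\rangle \le \sqrt{\sum_i \|A_f^{(i)} w\|^2}$ combined with the fact that $\sum_i \|A_f^{(i)} w\|^2 \le \deg(f)\cdot\|w\|^2$ because each $A_f^{(i)}$ is a partial matching (norm $\le 1$) and at most $\deg(f)$ of them act nontrivially on the relevant subcube.

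The main obstacle I anticipate is exactly this last counting step — going from ``at most $s(f)$ sensitive directions per vertex'' (which gives the weaker $\lambda(f) \le s(f)$) to the sharper ``$\deg(f)$'' bound. The resolution will have to invoke that one may restrict to a subcube of dimension $\deg(f)$ on which $f$ has full degree (the same reduction used in the proof of \Cref{thm:Huang}), so that $n = \deg(f)$ and the trivial bound $\lambda(f) \le \sqrt{n\cdot 1}$ coming from $A_f = \sum_{i=1}^n A_f^{(i)}$ with each summand a matching — wait, $\|\sum_i M_i\| \le \sum_i \|M_i\| = n$ is too weak; but $\|\sum_i M_i\|^2 = \|(\sum_i M_i)^{\transpose}(\sum_i M_i)\|$ and for a graph that is a union of $n$ matchings (i.e. max degree $\le n$) one only gets $\le n$. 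So the honest statement is that after the subcube restriction $\lambda(f) \le \sqrt{\deg(f)}\cdot\sqrt{\deg(f)} = \deg(f)$ is \emph{not} automatic and one genuinely needs the matching structure more cleverly. I expect the actual proof to simply note $\lambda(f) \le s(f) \le \deg(f)$ via $s(f)\le\deg(f)$ — but that inequality is false in general ($s(f)$ can exceed $\deg(f)$? no, actually $s(f)\le\deg(f)$ is a known classical fact for total Boolean functions). Hence the cleanest proof is: $\lambda(f) \le s(f)$ (spectral norm $\le$ max degree, the easy step from Huang) combined with the classical $s(f) \le \deg(f)$; if a self-contained argument is wanted, reprove $s(f)\le\deg(f)$ by restricting to the sensitive coordinates at a maximally sensitive point and using that parity-like behaviour forces full degree. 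I would present this two-line argument, flagging the $s(f)\le\deg(f)$ input, unless the authors want the fully spectral route, in which case the subcube-restriction-plus-matching-decomposition sketch above is the fallback.
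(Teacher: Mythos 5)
Your proposal ultimately commits to the two-line argument $\lambda(f)\le \s(f)\le \deg(f)$, but the second inequality is false. Sensitivity is \emph{not} bounded by degree for total Boolean functions: Kushilevitz's function on $6$ variables has degree $3$ and sensitivity $6$, and iterating such constructions (the Nisan--Wigderson-type examples recorded in the paper's own Table~\ref{tab:sep}, row $\s$, column $\deg$) yields $\s(f)\ge \deg(f)^{1.63-o(1)}$. The only classical fact in this direction is $\s(f)\le \bs(f) = O(\deg(f)^2)$. Indeed, the entire point of \Cref{thm:degree} is that $\lambda(f)$, unlike $\s(f)$, \emph{is} bounded by $\deg(f)$ --- so any proof that routes through $\s(f)$ cannot work. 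Your fallback also fails: restricting $f$ to a subcube can only \emph{decrease} $\lambda$ (you pass to a submatrix of $A_f$), so bounding $\lambda$ of a restriction of dimension $\deg(f)$ says nothing about $\lambda(f)$ itself; the subcube reduction in the proof of \Cref{thm:Huang} works only because there the inequality to be proved points the other way. Likewise, on the full cube all $n$ matchings $A_f^{(i)}$ can act nontrivially, so the ``at most $\deg(f)$ of them are relevant'' counting step has no justification.

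The mechanism the paper actually uses is the one you circled around with $\Pi_d$ but did not land on: conjugate by the Hadamard transform. Writing $g=1-2f$ and $R=H\diag(g)H$, one gets $\lambda(f)=\max_{\norm{v}=1} v^{\transpose}(RXR-X)v$ where $X$ is the diagonal Hamming-weight operator (\Cref{lem:RXRX}); the crucial point is that $R_{xy}=\hat g(x\oplus y)$, so $R$ is an orthogonal matrix supported on pairs at Hamming distance at most $d=\deg(f)$ (\Cref{lem:sparse}). Since $\norm{R}\le 1$ and $R$ only moves $\ell_2$-mass between Hamming levels differing by at most $d$, the level distribution of $Rv$ stochastically dominates that of $v$ by at most a shift of $d$, whence $v^{\transpose}RXRv - v^{\transpose}Xv\le d$ (\Cref{lem:cibj,lem:d}). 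If you want to salvage your write-up, this Fourier-side sparsity of $R$ is the missing idea that replaces the (false) appeal to $\s(f)\le\deg(f)$.
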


We start by expressing $\lambda(f)$ in a way that allows us to relate it to a polynomial representing $f$. In the following let $H$ denote the Hadamard matrix of size $2^n \times 2^n$, defined as $H_{xy} = (-1)^{\langle x,y \rangle}2^{-n/2}$. For any function $f:\B^n \to \Reals$, we let $\diag(f)$ be the diagonal matrix that satisfies $\diag(f)_{xx}=f(x)$. We also let $X_{xx}$ be the diagonal matrix satisfying $X_{xx} = |x|$, where $|x|$ is the Hamming weight of $x$.

\begin{lemma}\label{lem:RXRX}
Let $f:\B^n \to \B$ be a total Boolean function and $g:\B^n \to \{-1,1\}$ be defined as $g=1-2f$. 
Then $\lambda(f) = \max_{v:\norm{v}=1} v^{\mathsf{T}} (RXR - X) v$, where $R = H\diag(g)H$.
\end{lemma}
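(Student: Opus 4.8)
The plan is to pass to the Hadamard basis, where everything diagonalizes, and reduce the statement to the elementary Fourier identity $H\Delta H = nI - 2X$, with $\Delta$ the adjacency matrix of the full Boolean hypercube (so $\Delta_{xy}=1$ iff $|x\oplus y|=1$) and $X$ the diagonal Hamming-weight matrix.

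First I would rewrite $A_f$ in terms of $\Delta$ and $\diag(g)$. Since $g=1-2f$ takes values in $\{-1,1\}$, we have $f(x)\neq f(y)$ exactly when $g(x)g(y)=-1$, so $\one{f(x)\neq f(y)} = \tfrac12(1-g(x)g(y))$. Multiplying this entrywise by $\Delta_{xy}$ and noting $(\diag(g)\Delta\diag(g))_{xy} = g(x)\Delta_{xy}g(y)$ gives the identity
\[ A_f = \tfrac12\bigl(\Delta - \diag(g)\,\Delta\,\diag(g)\bigr). \]
Next, recall that $H$ is symmetric with $H^2=I$, and that the characters are eigenvectors of $\Delta$ with eigenvalue $n-2|S|$; equivalently $H\Delta H = nI - 2X$. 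Conjugating the displayed identity by $H$, inserting $H^2=I$ between adjacent factors, and writing $R=H\diag(g)H$ yields
\[ H A_f H = \tfrac12\bigl(H\Delta H - R\,(H\Delta H)\,R\bigr) = \tfrac12\bigl((nI-2X) - R(nI-2X)R\bigr). \]
Because $g(x)^2=1$ for every $x$, we have $R^2 = H\diag(g)^2 H = HIH = I$, so the $nI$ terms cancel and $H A_f H = RXR - X$.

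Finally, since $H$ is orthogonal, $\lambda(f) = \|A_f\| = \|H A_f H\| = \|RXR - X\| = \max_{\|v\|=1}\bigl|v^{\mathsf T}(RXR-X)v\bigr|$, the last equality because $RXR-X$ is symmetric. To remove the absolute value: $G_f$ is bipartite, so (as recorded after \Cref{def:sengraph}) the spectrum of $A_f$, hence of the conjugate $RXR-X$, is symmetric about $0$; therefore $\lambda_{\max}(RXR-X) = \|RXR-X\|$, and we conclude $\lambda(f) = \max_{\|v\|=1} v^{\mathsf T}(RXR-X)v$.

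I do not expect a genuine obstacle here; the proof is essentially bookkeeping. The only points requiring care are getting the Fourier-diagonalization identity $H\Delta H = nI - 2X$ right (the factor of $2$ and the sign), keeping track of the $H^2=I$ insertions when conjugating the product $\diag(g)\Delta\diag(g)$, and invoking bipartiteness of $G_f$ to justify dropping the absolute value in the final quadratic-form expression. An alternative, more pedestrian route would verify the matrix identity $H A_f H = RXR - X$ entrywise, but the basis-change computation above is cleaner and makes the cancellation $R^2=I$ transparent.
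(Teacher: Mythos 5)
Your proposal is correct and follows essentially the same route as the paper's proof: the same decomposition $2A_f = A_H - \diag(g)A_H\diag(g)$, the same diagonalization $HA_HH = n\id - 2X$, the cancellation via $R^2=\id$, and bipartiteness of $G_f$ to drop the absolute value (the paper just invokes bipartiteness at the start, on $A_f$, rather than at the end on the conjugated matrix). No gaps.
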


\begin{proof}
Following \Cref{def:sengraph}, let the sensitivity graph of $f$ be $G_f$, its adjacency matrix be $A_f$, and its spectral sensitivity be $\lambda(f) = \left\|A_f\right\|$. 
Since $A_f$ is a symmetric matrix, we have $\lambda(f) = \max_{v:\norm{v}=1} |v^{\mathsf{T}} A_f v|$. 
Furthermore, $G_f$ is a bipartite matrix, as there are no edges between vertices of odd and even Hamming weight, which means the spectrum of $A_f$ is symmetric about $0$~\cite[Theorem 8.8.2]{GR01}. 
Thus $\lambda(f) = \max_{v:\norm{v}=1} v^{\mathsf{T}} A_f v$.

Let $A_H$ be the adjacency matrix of the hypercube graph $(V,E)$ with $V=\B^n$ and edges $(x,x\oplus e_i)$ for all $x\in\B^n$ and $i \in [n]$. Then we can express $A_f$ as
\begin{equation}\label{eq:identity on A_f and A_H}
2A_f = A_H-\diag(g)A_H\diag(g),
\end{equation}
since the $(x,y)$ entry of the right hand side is $1-g(x)g(y)$ when $(x,y)$ is an edge in the hypercube and $0$ otherwise.
Let us rewrite this expression in the basis that diagonalizes $A_H$. It is known that $H$ diagonalizes $A_H$, and $A_H = H(n\id-2X)H$, where $\id$ is the identity matrix. This is because
\begin{equation}
    (HA_HH)_{xy} = \frac{1}{2^n}\sum_{z \in \B^n}\sum_{i \in [n]} (-1)^{\langle x,z \rangle + \langle z\oplus e_i,y\rangle}
    = \frac{1}{2^n}\sum_{z \in \B^n}(-1)^{\langle x\oplus y,z \rangle} \sum_{i \in [n]} (-1)^{y_i}.
\end{equation}
The first sum is $0$ if $x\neq y$ and otherwise the expression evaluates to $n-2|x|$ showing that $HA_HH=(n\id-2X)$. Furthermore, since $H$ is an involution (i.e., $H^2=\id$), we have $H(n\id -2X)H = A_H$. Using these two identities, we have
\begin{align}
\lambda(f) = \max_{v:\norm{v}=1} v^{\mathsf{T}} A_f v &= \max_{v:\norm{v}=1} v^{\mathsf{T}} HA_fH v \label{eq:Hv}\\
&= \frac{1}{2}  \max_{v:\norm{v}=1} v^{\mathsf{T}} (H A_H H - H \diag(g) A_H \diag(g) H) v \\
&= \frac{1}{2} \max_{v:\norm{v}=1} v^{\mathsf{T}} ( n\id-2X - H \diag(g) H (n\id-2X) H \diag(g) H ) v \\
&= \max_{v:\norm{v}=1} v^{\mathsf{T}} (-X + (H \diag(g) H) X (H \diag(g) H))v\\
&= \max_{v:\norm{v}=1} v^{\mathsf{T}} (RXR-X)v,
\end{align}
where \cref{eq:Hv} uses $H^{\mathsf{T}}=H$ and $\norm{Hv}=\norm{v}$.
\end{proof}

In the proof below we use two main properties of the matrix $R$. First, $R$ is a symmetric, orthonormal matrix. 
Second, that $R_{xy}=0$ if $|x \oplus y|>\deg(g)$, where $|x \oplus y|$ is the Hamming distance between $x$ and $y$. 
Since any polynomial representing $f$ can be transformed into one representing $g$ without changing its degree, note that $\deg(g)=\deg(f)$.
The first property follows straightforwardly and we establish the second property now. Note that this lemma does not require the output of $g$ to be in $\{-1,1\}$, a fact we use in the next section.

\begin{lemma}\label{lem:sparse}
Let $g:\B^n \to \Reals$ have real degree $d$ and let $R = H\diag(g)H$. Then for all $x,y \in \B^n$, $R_{xy} = \hat{g}(x \oplus y)$, where for all $z\in\B^n$, $\hat{g}(z) = \frac{1}{2^n} \sum_{y \in \B^n} (-1)^{\langle z,y \rangle} g(y)$. Consequently, $R_{xy}=0$ if $|x \oplus y|>d$.
\end{lemma}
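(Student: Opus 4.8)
The plan is to compute the entry $R_{xy}$ directly from the definitions of $H$ and $\diag(g)$, and then invoke the standard correspondence between the real degree of $g$ and the support of its Fourier expansion.

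First I would expand $R_{xy} = (H\diag(g)H)_{xy} = \sum_{z\in\B^n} H_{xz}\, g(z)\, H_{zy} = 2^{-n}\sum_{z\in\B^n} (-1)^{\langle x,z\rangle + \langle z,y\rangle} g(z)$. The key elementary fact is the coordinatewise identity $\langle x,z\rangle + \langle z,y\rangle \equiv \langle x\oplus y,z\rangle \pmod 2$, which holds because for each $i$ the integers $(x_i+y_i)z_i$ and $(x_i\oplus y_i)z_i$ have the same parity. Substituting this into the sign gives $R_{xy} = 2^{-n}\sum_{z\in\B^n}(-1)^{\langle x\oplus y,z\rangle}g(z) = \hat g(x\oplus y)$, which is the first claim.

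For the ``consequently'' part, I would recall that every function $g:\B^n\to\Reals$ has a unique multilinear representation, and that its real degree $d$ (the minimum degree of any representing polynomial) coincides with the degree of this multilinear representation. Writing $g$ in the Fourier basis, $g(y) = \sum_{S\subseteq[n]} \hat g(S)\prod_{i\in S}(-1)^{y_i}$, the quantity $\hat g(z)$ from the statement is precisely the Fourier coefficient indexed by the set $\{i : z_i=1\}$, whose corresponding monomial has degree $|z|$. Hence $\hat g(z)=0$ whenever $|z|>d$, and therefore $R_{xy}=\hat g(x\oplus y)=0$ whenever $|x\oplus y|>d$.

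There is no genuine obstacle here; the statement is essentially a bookkeeping exercise. The only two points worth stating with care are (i) the mod-$2$ identity that collapses the double sum over the intermediate index $z$ into a single Fourier coefficient, and (ii) that the identification of ``real degree'' with ``largest Hamming weight in the Fourier support'' is valid for arbitrary real-valued $g$, not only Boolean-valued ones — which is exactly the generality the subsequent section requires.
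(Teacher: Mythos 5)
Your proposal is correct and follows essentially the same route as the paper: expand the matrix product, collapse the two signs via $\langle x,z\rangle+\langle z,y\rangle\equiv\langle x\oplus y,z\rangle \pmod 2$ to recognize a Fourier coefficient, and then use that a degree-$d$ real polynomial has no Fourier mass above level $d$. The only difference is that you spell out the mod-$2$ identity and the degree/Fourier-support equivalence more explicitly than the paper does.
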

\begin{proof}
From the definition of $R$, we have
\begin{equation}
R_{xy} = \frac{1}{2^n} \sum_{z \in \B^n} (-1)^{\langle x, z \rangle}  (-1)^{\langle z, y \rangle} g({z})
= \frac{1}{2^n} \sum_{z \in \B^n} (-1)^{\langle x \oplus y, z \rangle} g({z})
= \hat{g}(x \oplus y).    
\end{equation}
Since $g$ can be represented by a polynomial of degree $d$, all Fourier coefficients $\hat{g}(z)$  with $|z|>d$ are $0$ and hence if $|x \oplus y|>d$ we have $R_{xy} = \hat{g}(x \oplus y) = 0$.
\end{proof}

From \Cref{lem:RXRX}, we know that $\lambda(f)$ is the maximum value of $v^{\mathsf{T}} (RXR-X) v$ over all unit vectors $v$, which can be written as 
\begin{equation}\label{eq:cibj}
    v^{\mathsf{T}} (RXR-X) v = \sum_{x \in \B^n} |x| (Rv)_x^2 - \sum_{x \in \B^n} |x|v_x^2 = \sum_{i=1}^n ic_i - \sum_{j=1}^n jb_j,
\end{equation}
where we have defined $c_i := \sum_{x:|x|=i} (Rv)_x^2$ and $b_j := \sum_{x:|x|=j} v_x^2$. 

To upper bound this expression, we need to relate the $c_i$ and $b_j$ quantities. We can establish a relationship using the fact that because $R$ is sparse, if the input vector $v$ is concentrated on one Hamming weight, then $Rv$ will be concentrated on nearby Hamming weights (up to distance $d$). 
We formalize this idea in the lemma below. Note that this lemma does not use all the properties of $R$ that we have established and we will need this in the next section.

\begin{lemma}\label{lem:cibj}
Let $R$ be a matrix with $\norm{R}\leq 1$ satisfying $R_{xy}=0$ when $|x \oplus y|>d$. For any vector $v$, define $c_i := \sum_{x:|x|=i} (Rv)_x^2$ and $b_j := \sum_{x:|x|=j} v_x^2$. Then for any $r \in \{d+1,\ldots, n\}$, we have 
\begin{equation}
    \sum_{i=r}^n c_i \leq \sum_{j=r-d}^n b_j.
\end{equation}
\end{lemma}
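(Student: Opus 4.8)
The plan is to exploit the locality of $R$ (it moves Hamming weight by at most $d$) together with $\|R\|\le 1$. Fix $r\in\{d+1,\dots,n\}$. The intuition is that the "high weight" part of $Rv$ — meaning the coordinates indexed by strings of Hamming weight $\ge r$ — can only draw mass from coordinates of $v$ of Hamming weight $\ge r-d$, since $R_{xy}=0$ whenever $|x\oplus y|>d$ and $|x|\ge r$, $|y|< r-d$ forces $|x\oplus y|\ge |x|-|y|>d$. So I would split $v=v^{\le}+v^{>}$, where $v^{>}$ is $v$ restricted to coordinates of Hamming weight $\ge r-d$ and $v^{\le}$ is the rest. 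Let $P_{\ge r}$ denote the coordinate projection onto strings of Hamming weight $\ge r$. The key structural claim is $P_{\ge r}\,R\,v^{\le}=0$: for any $x$ with $|x|\ge r$, $(Rv^{\le})_x=\sum_{y:|y|<r-d} R_{xy}v_y$, and every term vanishes because $|x\oplus y|\ge |x|-|y|>r-(r-d)=d$, so $R_{xy}=0$.

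Given that claim, the computation is short:
\begin{align}
\sum_{i=r}^n c_i &= \|P_{\ge r}\,Rv\|^2 = \|P_{\ge r}\,R\,v^{>} + P_{\ge r}\,R\,v^{\le}\|^2 = \|P_{\ge r}\,R\,v^{>}\|^2 \\
&\le \|R\,v^{>}\|^2 \le \|R\|^2\,\|v^{>}\|^2 \le \|v^{>}\|^2 = \sum_{j=r-d}^n b_j,
\end{align}
where the first inequality drops the projection, the second uses $\|R\|\le 1$, and the last equality is just the definition of $v^{>}$ and the $b_j$. That is exactly the desired bound.

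I do not expect a serious obstacle here; this is the "easy" locality lemma underlying the main argument. The one point that needs care is the direction of the Hamming-weight inequality: one must check that $|x|\ge r$ and $|y|\le r-d-1$ genuinely force $|x\oplus y|\ge |x|-|y|\ge r-(r-d-1)=d+1>d$, which uses only the triangle-type bound $|x\oplus y|\ge \big||x|-|y|\big|$ on the hypercube; and that the constraint $r\ge d+1$ is what makes $r-d\ge 1$ so that the index set $\{r-d,\dots,n\}$ is meaningful. Everything else is the trivial observation that orthogonal coordinate projections and a norm-$\le 1$ operator can only shrink the $\ell_2$ norm. No properties of $R$ beyond sparsity and $\|R\|\le 1$ are used, consistent with the remark preceding the lemma.
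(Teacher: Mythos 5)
Your proof is correct and follows essentially the same route as the paper's: the paper likewise replaces $v$ by $\Pi_{(\geq r-d)}v$ inside $Rv$ using the sparsity condition $R_{xy}=0$ for $|x\oplus y|>d$, drops the outer restriction to $|y|\geq r$, and finishes with $\norm{R}\leq 1$. The only difference is notational (you phrase the key step as $P_{\geq r}Rv^{\leq}=0$ rather than restricting the inner sum), so there is nothing further to add.
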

\begin{proof} By expanding the definition of $c_i$, we have 
\begin{equation}
    \sum_{i=r}^n c_i = \sum_{y:|y|\geq r} (Rv)_y^2 
    = \sum_{y:|y|\geq r} \left(\sum_{x \in \B^n} R_{yx}v_x \right)^2 
    = \sum_{y:|y|\geq r} \left(\sum_{x \in \B^n} R_{yx} \Pi_{(\geq r-d)} v_x \right)^2, 
\end{equation}
where we define $\Pi_{(\geq r)}$ to be the diagonal projector that satisfies the following for any vector $v$:
\begin{equation}
(\Pi_{(\geq r)} v)_x = 
    \begin{cases}
    v_x & \textrm{if} \quad |x|\geq r\\
    0 & \textrm{otherwise}
    \end{cases}.
\end{equation}
The last equality holds because $R(x,y)=0$ if $|x \oplus y|>d$ so we can restrict the sum over $x$ to be over those $x$ with $|x|\geq r-d$, and thus the only entries $v_x$ that appear in the sum have $|x|\geq r-d$.
Thus we have
\begin{equation}
    \sum_{i=r}^n c_i = \sum_{y:|y|\geq r} \left( (R(\Pi_{(\geq r-d)}v)_y \right)^2 \leq \sum_{y \in \B^n} \left( (R(\Pi_{(\geq r-d)}v)_y \right)^2 = \norm{R(\Pi_{(\geq r-d)}v)}^2, 
\end{equation}
where the inequality holds because we only added more positive numbers to the sum by relaxing the sum over $y$. Finally, 
\begin{equation}
     \norm{R(\Pi_{(\geq r-d)}v)}^2 \leq \norm{\Pi_{(\geq r-d)}v}^2  = \sum_{x:|x|\geq r-d} v_x^2 = \sum_{j=r-d}^n b_j.
\end{equation}
where the inequality uses $\norm{R}\leq 1$. This yields the claimed result.
\end{proof}

We can now formally show the upper bound on $v^{\mathsf{T}} (RXR-X) v$ for a unit vector $v$.

\begin{lemma}\label{lem:d}
Let $R$ be a matrix with $\norm{R}\leq 1$ satisfying $R_{xy}=0$ when $|x \oplus y|>d$. For any unit vector $v$, we have
\begin{equation}
    v^{\mathsf{T}} (RXR-X) v \leq d.
\end{equation}
\end{lemma}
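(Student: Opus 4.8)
The plan is to use the relationship from \Cref{eq:cibj}, namely that $v^{\mathsf{T}}(RXR-X)v = \sum_{i=1}^n i c_i - \sum_{j=1}^n j b_j$ where $c_i = \sum_{|x|=i}(Rv)_x^2$ and $b_j = \sum_{|x|=j} v_x^2$, together with the tail inequality $\sum_{i=r}^n c_i \le \sum_{j=r-d}^n b_j$ from \Cref{lem:cibj}. The basic idea is an Abel summation (summation by parts): rewrite the weighted sums $\sum i c_i$ and $\sum j b_j$ as telescoping sums of tail sums $C_{\ge r} := \sum_{i \ge r} c_i$ and $B_{\ge r} := \sum_{j \ge r} b_j$. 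Specifically, for a nonnegative sequence $(a_i)_{i=0}^n$ with total mass $A = \sum_i a_i$, we have $\sum_{i=0}^n i a_i = \sum_{r=1}^n A_{\ge r}$ where $A_{\ge r} = \sum_{i \ge r} a_i$. Applying this to both sequences gives
\begin{equation}
    v^{\mathsf{T}}(RXR-X)v = \sum_{r=1}^n C_{\ge r} - \sum_{r=1}^n B_{\ge r}.
\end{equation}

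Next I would split the first sum at $r = d$: for $r \le d$ bound $C_{\ge r} \le C_{\ge 1} \le \norm{Rv}^2 \le \norm{v}^2 = 1$ (using $\norm{R} \le 1$ and $\norm{v}=1$), contributing at most $d$ in total; for $r \ge d+1$ apply \Cref{lem:cibj} to get $C_{\ge r} \le B_{\ge r-d}$. Then
\begin{equation}
    v^{\mathsf{T}}(RXR-X)v \le d + \sum_{r=d+1}^n B_{\ge r-d} - \sum_{r=1}^n B_{\ge r} = d + \sum_{r=1}^{n-d} B_{\ge r} - \sum_{r=1}^n B_{\ge r} \le d,
\end{equation}
where the reindexing $r \mapsto r-d$ in the first sum runs over $r = 1, \ldots, n-d$, and the final inequality holds because $\sum_{r=1}^{n-d} B_{\ge r} \le \sum_{r=1}^n B_{\ge r}$ (each $B_{\ge r} \ge 0$, and we are subtracting a sum over a superset of indices). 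This gives the claimed bound.

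The main subtlety — more a bookkeeping point than a genuine obstacle — is making sure the index ranges in the Abel summation and the reindexing line up exactly, including the contribution of Hamming weight $0$ (which carries coefficient $0$ and so is harmless) and the fact that \Cref{lem:cibj} is only valid for $r \in \{d+1, \ldots, n\}$, which is exactly the range over which we invoke it. One should also note that $c_i$ and $b_j$ are only defined for $i, j$ in a range, so the tail sums $C_{\ge r}, B_{\ge r}$ should be read as sums over the valid indices; since all terms are nonnegative this causes no trouble. No deeper idea is needed beyond \Cref{lem:cibj}; the content of the lemma is precisely the summation-by-parts packaging of that tail inequality.
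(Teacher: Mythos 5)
Your proposal is correct and is essentially the paper's own proof: the paper likewise sums the tail inequalities of \Cref{lem:cibj} over $r \in \{d+1,\dots,n\}$, adds the trivial bounds $\sum_{i \ge r} c_i \le 1$ for $r \le d$ and $0 \le \sum_{j \ge n-k} b_j$ to balance the index ranges, and this summation of tail sums is exactly your Abel-summation packaging. The only cosmetic difference is that you discard the surplus terms $\sum_{r=1}^{n-d} B_{\ge r} - \sum_{r=1}^{n} B_{\ge r} \le 0$ where the paper pads the right-hand side with the extra nonnegative tails to make the totals match exactly.
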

\begin{proof}
As shown in \cref{eq:cibj}, 
\begin{equation}\label{eq:sum}
v^{\mathsf{T}} (RXR-X) v = \sum_{i=1}^n ic_i - \sum_{j=1}^n jb_j ,   
\end{equation}
where $c_i := \sum_{x:|x|=i} (Rv)_x^2$ and $b_j := \sum_{x:|x|=j} v_x^2$. \Cref{lem:cibj} already established the following inequalities for all $r \in \{d+1,\ldots,n\}$:
\begin{equation}\label{eq:sumbc}
    \sum_{i=r}^n c_i \leq \sum_{j=r-d}^n b_j,
\end{equation}
If we sum up these inequalities for all values of $r \in \{d+1,\ldots,n\}$, we don't quite get the sums that appear in \cref{eq:sum}. So let's throw in some additional inequalities.  For $r \in \{1,\ldots,d\}$, we have
\begin{equation}\label{eq:sumc}
    \sum_{i=r}^n c_i \leq 1,
\end{equation}
which uses the fact that $\sum_{i=r}^n c_i\leq \sum_{i=0}^n c_i = \norm{Rv}^2 \leq 1$ because $\norm{R}\leq 1$ and $\norm{v}=1$. We also have for all $k\in \{0,\ldots, d-1\}$
\begin{equation}\label{eq:sumb}
    0 \leq \sum_{j=n-k}^n b_j,
\end{equation}
using the fact that all $b_j\geq 0$.

If we sum up all the inequalities in \cref{eq:sumbc} for $r \in \{d+1,\ldots,n\}$, the inequalities in \cref{eq:sumc} for $r \in \{1,\ldots,d\}$, and the inequalities in \cref{eq:sumb} for $k\in \{0,\ldots, d-1\}$, we get
\begin{equation}
\sum_{i=1}^n ic_i \leq \sum_{j=1}^n jb_j + d,   
\end{equation}
which shows that $v^{\mathsf{T}} (RXR-X) v \leq d$.
\end{proof}

We can now establish \Cref{thm:degree}.

\begin{proof}[Proof of \Cref{thm:degree}]
From \Cref{lem:RXRX}, we have that $\lambda(f) = \max_{\norm{v}=1} v^{\mathsf{T}} (RXR-X) v$ and from \Cref{lem:d} we know that for any unit vector $v$, $v^{\mathsf{T}} (RXR-X) v \leq d = \deg(g) = \deg(f)$.
\end{proof}

\subsection{Spectral sensitivity lower bounds approximate degree}

We will now strengthen the previous result to show that spectral sensitivity also lower bounds approximate degree. As an intermediate result, we first establish this bound with a log factor. The stronger result without a log factor will follow from this in a completely black box way.

\begin{lemma}\label{lem:weakadeg}
For all total Boolean functions $f:\B^n \to \B$, $\lambda(f) = O(\adeg(f)\log n)$.
\end{lemma}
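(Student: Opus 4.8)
The plan is to mimic the proof of \Cref{thm:degree} (which showed $\lambda(f)\le\deg(f)$), but now feed in an \emph{approximating} polynomial $p$ for $f$ rather than an exact one, and carefully control the error this introduces. Let $p$ be a polynomial of degree $d=\adeg(f)$ with $\abs{p(x)-f(x)}\le 1/3$ for all $x\in\B^n$, and set $g=1-2p$, so $g$ is a real-valued polynomial of degree $d$ with $g(x)\in\{-1,1\}$ only approximately — precisely, $\sgn(g(x))$ encodes $f(x)$ and $\abs{\,\abs{g(x)}-1\,}\le 2/3$. Define $R=H\diag(g)H$ as before. By \Cref{lem:sparse}, $R_{xy}=\hat g(x\oplus y)=0$ whenever $\abs{x\oplus y}>d$, so $R$ is still $d$-banded; this is the property that drives \Cref{lem:cibj} and \Cref{lem:d}. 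The two places where the exactness of $g$ was used are: (i) $R$ being orthogonal (hence $\norm R\le 1$), and (ii) the clean identity $2A_f=A_H-\diag(g)A_H\diag(g)$ relating $A_f$ to the hypercube Laplacian. Both now hold only up to error, and the whole argument is about bounding that error.

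First I would record that since $g$ is bounded ($\norm g_\infty\le 5/3$), $\diag(g)$ has operator norm $O(1)$, hence $\norm R=\norm{H\diag(g)H}=\norm{\diag(g)}=O(1)$; rescaling $R$ by a constant lets us apply \Cref{lem:d} to conclude $v^{\mathsf T}(RXR-X)v \le O(d)$ for unit $v$ after the rescaling is accounted for — more carefully, \Cref{lem:cibj} and \Cref{lem:d} only need $\norm R\le c$ for a constant, which costs a constant factor in the final bound, and we lose a $\log n$ somewhere else (see below), so constants are harmless. Next, I would relate $v^{\mathsf T}(RXR-X)v$ back to $\lambda(f)$ despite the approximation. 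Running the computation of \Cref{lem:RXRX} in reverse, $H(RXR-X)H = \tfrac12\bigl(A_H-\diag(g)A_H\diag(g)\bigr)$, and the $(x,y)$ entry of the right side is $\tfrac12(1-g(x)g(y))$ on hypercube edges. When $(x,y)$ is a \emph{sensitive} edge, $f(x)\ne f(y)$ so $g(x),g(y)$ have opposite signs, giving $\tfrac12(1-g(x)g(y))=\tfrac12(1+\abs{g(x)}\abs{g(y)})\ge \tfrac12$ (and $\le$ some constant); when $(x,y)$ is a non-sensitive hypercube edge, $g(x)g(y)>0$ but the entry $\tfrac12(1-g(x)g(y))$ need not vanish — it is an error term of size $O(1)$ per edge. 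So if $M$ denotes $H(RXR-X)H$, then $M = \tfrac12 A'_f + N$ where $A'_f$ is a nonnegative weighting of the sensitivity graph with all weights in $[\tfrac12,O(1)]$ (hence $\norm{A'_f}=\Theta(\lambda(f))$), and $N$ is an error matrix supported on non-sensitive hypercube edges.

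The heart of the argument — and the source of the $\log n$ — is bounding $\norm N$. Here the key observation is that on a non-sensitive edge $(x,x\oplus e_i)$ we have $f(x)=f(x\oplus e_i)$, so $\sgn(g(x))=\sgn(g(x\oplus e_i))$ and the entry $\tfrac12(1-g(x)g(x\oplus e_i)) = \tfrac12(1-\abs{g(x)}\abs{g(x\oplus e_i)})$, which, since both $\abs g$-values lie in $[1/3,5/3]$, is a bounded quantity but \emph{not} small pointwise. To get a genuinely small operator-norm bound I would instead not try to make $N$ small entrywise; rather I would choose the approximating polynomial more cleverly. The standard trick: replace $p$ by the amplified polynomial obtained from a degree-$O(\log n)$ univariate amplifier applied to $p$, so that the new polynomial $q$ has degree $O(\adeg(f)\log n)$ and satisfies $\abs{q(x)-f(x)}\le 1/n^{10}$ for all $x$. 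Now with $g=1-2q$ we get $\abs{g(x)g(y)-1}\le O(1/n^{10})$ on non-sensitive edges and $\abs{g(x)g(y)+1}\le O(1/n^{10})$ on sensitive edges, so $N$ has every entry $O(n^{-10})$ and is supported on a matrix of max-degree $n$ (the hypercube), giving $\norm N\le n\cdot O(n^{-10})=O(n^{-9})$, which is negligible. Then $\lambda(f)=\Theta(\norm{A'_f}) \le 2\norm M + 2\norm N = 2\norm{RXR-X}+o(1) \le 2\bigl(\max_{\norm v=1} v^{\mathsf T}(RXR-X)v\bigr)+o(1) = O(\deg(g)) = O(\adeg(f)\log n)$, using \Cref{lem:d} with the (constant-rescaled) $d$-banded $R$, where $d=\deg(g)=O(\adeg(f)\log n)$.

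I expect the main obstacle to be the last displayed chain: \Cref{lem:d} bounds the \emph{quadratic form} $v^{\mathsf T}(RXR-X)v$, not the operator norm $\norm{RXR-X}$, and $RXR-X$ need not be positive semidefinite, so I must be slightly careful that $\norm M$ is controlled by $\max_v v^{\mathsf T}(RXR-X)v$ plus a lower-bound term. In fact $RXR-X$ is symmetric, so $\norm M=\max(\lambda_{\max},-\lambda_{\min})$; \Cref{lem:d} handles $\lambda_{\max}\le d$, and for $-\lambda_{\min}$ — equivalently an upper bound on $v^{\mathsf T}(X-RXR)v$ — I would run the same $c_i/b_j$ bookkeeping as in \Cref{lem:cibj}--\Cref{lem:d} with the roles of "input" and "output" Hamming levels swapped (using $\norm R\le 1$ in the other direction, $\norm{R^{-1}}=\norm{R^{\mathsf T}}\le 1$ when $R$ orthogonal, or just $\norm R\le O(1)$ after rescaling), getting $v^{\mathsf T}(X-RXR)v\le O(d)$ as well. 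So $\norm{RXR-X}=O(d)=O(\adeg(f)\log n)$, and combined with $\norm N=o(1)$ this gives $\lambda(f)=O(\adeg(f)\log n)$, as claimed.
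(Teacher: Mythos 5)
Your overall strategy is sound and shares the paper's core machinery: amplify the approximating polynomial to error $1/\poly(n)$ so its degree is $O(\adeg(f)\log n)$, form the banded matrix $R=H\diag(g)H$, and invoke \Cref{lem:d}; your error accounting ($n$ hypercube edges times $O(n^{-10})$ per entry) is also of the right order. The difference is that you route the argument through the \emph{operator norm} of $RXR-X$, which creates an obligation the paper never incurs. The paper keeps the exact identity $\lambda(f)=\max_{\norm{v}=1}v^{\transpose}(RXR-X)v$ from \Cref{lem:RXRX} (valid because $A_f$ is bipartite, so its largest eigenvalue equals its norm) and perturbs $\diag(g)\to\diag(\tilde g)$ \emph{inside the quadratic form}, paying $3\eps n$ (\Cref{lem:RXRXapprox}); it therefore only ever needs the one-sided bound supplied by \Cref{lem:d}.

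The genuine soft spot is your treatment of $-\lambda_{\min}(RXR-X)$. Running "the same $c_i/b_j$ bookkeeping with the roles swapped" would require $(R^{-1})_{xy}=0$ whenever $||x|-|y||>d$, i.e., bandedness of $R^{-1}$; the inverse of a banded matrix is not banded, and the hint "$\norm{R^{-1}}=\norm{R^{\transpose}}$" holds only for exactly orthogonal $R$. The claim you need is true, but for a different reason: after amplification $R^2=\id+E$ with $\norm{E}=O(n^{-10})$, so the map $S\mapsto -RSR$ approximately fixes $S=X-RXR$, the spectrum of $RXR-X$ is symmetric about $0$ up to $O(n\norm{E})$, and hence $-\lambda_{\min}\le\lambda_{\max}+O(n^{-9})\le d+O(n^{-9})$. (Alternatively, one can rerun \Cref{lem:cibj} in reverse using bandedness of $R$ itself together with $\norm{Rv}\ge(1-O(n^{-10}))\norm{v}$.) Two smaller repairs: the identity $H(RXR-X)H=\tfrac12\bigl(A_H-\diag(g)A_H\diag(g)\bigr)$ is missing the term $\tfrac{n}{2}H(\id-R^2)H$, which is again $O(n^{-9})$ after amplification but must be accounted for; and the step $\norm{A'_f}\ge(1-o(1))\norm{A_f}$ needs the nonnegative-top-singular-vector argument for entrywise-dominating nonnegative matrices. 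With these fixes your proof goes through; the lesson from the paper's version is that staying inside the quadratic form makes all of these issues disappear.
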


We use the same notation as in the previous section, where $f:\B^n \to \B$ is the total Boolean function under consideration and $g(x)=1-2f(x)$. 

Let $\tilde{g}$ be a minimum degree polynomial that $\eps$-approximates the function $g:\B^n \to \{-1, 1\}$ for some $\eps<1/2$ to be chosen later. 
We know that there exists a polynomial that $1/3$-approximates $f$ and has degree $\adeg(f)$. 
It is also known that $\adeg_\eps(f) = O(\adeg(f) \log (1/\eps))$ for any (total or partial) function $f$. See \cite{BNRdW07} for an explicit construction using ``amplification polynomials.'' 
Hence there there is a $\eps$-approximating polynomial for $f$, and hence for $g$, of degree at most $O(\adeg(f)\log (1/\eps))$. Let $\tilde{g}$ be such a polynomial with degree $d = O(\adeg(f)\log (1/\eps))$. Specifically, we have for all $x \in \B^n$ that $g(x)=-1 \implies \tilde{g}(x)\in [-1,-1+\eps]$ and $g(x)=1 \implies \tilde{g}(x)\in [1-\eps,1]$.

We start by proving a statement analogous to \Cref{lem:RXRX}, but using a polynomial that $\eps$-approximates $f$.

\begin{lemma}\label{lem:RXRXapprox}
Let $f:\B^n \to \B$ be a total Boolean function and $g:\B^n \to \{-1,1\}$ be defined as $g=1-2f$. Let $\tilde{g}$ be a degree $O(\adeg(f)\log(1/\eps))$ polynomial that $\eps$-approximates $g$.
Then $\lambda(f) = \max_{v:\norm{v}=1} v^{\mathsf{T}} (\tilde{R}X\tilde{R} - X)v + 3\eps n$, where $\tilde{R} = H\diag(\tilde{g})H$.
\end{lemma}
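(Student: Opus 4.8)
The plan is to mimic the proof of \Cref{lem:RXRX} essentially verbatim, replacing the exact sign function $g$ by its approximation $\tilde g$, and to absorb the resulting discrepancies into the additive $3\eps n$ term. The one genuinely new feature is that $\tilde g$ does not square to the all-ones function, so $\tilde R^2\neq\id$; I expect this to be the main obstacle, and it will be handled by noting that the offending correction matrix is positive semidefinite and has small norm.

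First I would recall the ingredients already isolated in the proof of \Cref{lem:RXRX}: the identity $2A_f = A_H - \diag(g)A_H\diag(g)$, the diagonalization $HA_HH = n\id - 2X$, the facts that $H$ is an orthogonal involution and $H\diag(g)H = R$, and that $\lambda(f) = \max_{\norm{v}=1} v^{\mathsf{T}}A_f v$. Then I introduce the ``approximate sensitivity matrix'' $\tilde A := \tfrac12\bigl(A_H - \diag(\tilde g)A_H\diag(\tilde g)\bigr)$, a real symmetric matrix supported on the edges of the hypercube, and bound $\norm{A_f - \tilde A}$. On an edge $(x,y)$ the $(x,y)$-entry of $2(A_f-\tilde A)$ is $\tilde g(x)\tilde g(y)-g(x)g(y) = \tilde g(x)\bigl(\tilde g(y)-g(y)\bigr) + g(y)\bigl(\tilde g(x)-g(x)\bigr)$, of magnitude at most $2\eps$ since $\norm{\tilde g-g}_\infty\le\eps$ and $\norm{\tilde g}_\infty,\norm{g}_\infty\le1$; as the hypercube is $n$-regular, every row of $\lvert A_f-\tilde A\rvert$ sums to at most $n\eps$, so $\norm{A_f-\tilde A}\le n\eps$, and therefore $\lambda(f)\le \max_{\norm{v}=1} v^{\mathsf{T}}\tilde A v + n\eps$.

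Next I would diagonalize $\tilde A$ by $H$. Using $H\diag(\tilde g)A_H\diag(\tilde g)H = \tilde R(HA_HH)\tilde R = n\tilde R^2 - 2\tilde R X\tilde R$ and $\tilde R^2 = H\diag(\tilde g)^2H = \id - HDH$, where $D := \diag\!\bigl(1-\tilde g(x)^2\bigr)$ has all diagonal entries in $[0,2\eps]$ because $\tilde g(x)^2\in[(1-\eps)^2,1]$, a short computation gives
\[
H\tilde A H \;=\; \tilde R X\tilde R - X + \tfrac n2\, HDH .
\]
For a unit vector $v$, writing $w = Hv$ (again a unit vector, and $v\mapsto Hv$ bijects the unit sphere), we get $v^{\mathsf{T}}\tilde A v = w^{\mathsf{T}}(H\tilde A H)w = w^{\mathsf{T}}(\tilde R X\tilde R - X)w + \tfrac n2 w^{\mathsf{T}}HDHw$, and the last summand lies in $[0,n\eps]$ since $\tfrac n2 HDH\succeq0$ has norm at most $\tfrac n2\cdot2\eps$. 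Combining with the previous paragraph, $\lambda(f)\le \max_{\norm{v}=1} v^{\mathsf{T}}(\tilde R X\tilde R - X)v + 2n\eps \le \max_{\norm{v}=1} v^{\mathsf{T}}(\tilde R X\tilde R - X)v + 3\eps n$; running the same chain in the other direction (using $\tfrac n2 HDH\succeq0$ and $\norm{\tilde A}\le\norm{A_f}+\norm{A_f-\tilde A}\le\lambda(f)+n\eps$) gives the reverse bound up to $n\eps$, so the two quantities agree up to $3\eps n$, which is the stated conclusion.

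The step I expect to be the main obstacle is controlling the term $\tfrac n2 HDH$ coming from the failure of $\tilde g^2\equiv1$: its operator norm is of order $\eps n$, the same order as the entire slack we are allowed, so the argument only goes through because this matrix is positive semidefinite — it can only help in one direction and costs at most $n\eps$ in the other. A secondary point worth flagging is that the factor $n$ in $\norm{A_f-\tilde A}\le n\eps$, which comes from the $n$-regularity of the hypercube, is exactly what will later force the choice $\eps\approx1/n$ and produce the logarithmic loss in \Cref{lem:weakadeg}.
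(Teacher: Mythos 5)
Your proof is correct, but it routes the perturbation through a different decomposition than the paper does. The paper's proof is a direct operator-norm perturbation \emph{inside} the expression from \Cref{lem:RXRX}: it writes $\diag(g)=\diag(\tilde{g})+\diag(g-\tilde{g})$, expands $RXR$, and bounds each of the three error terms by $\norm{H}^2\norm{X}\norm{\diag(g-\tilde g)}\le \eps n$ (using $\eps^2 n\le \eps n$ for the quadratic term), giving the $3\eps n$ slack in three lines. Because every error term there has $X$ sandwiched between the two perturbed factors, the question of whether $\tilde{R}^2=\id$ never arises. You instead perturb at the level of the sensitivity matrix, $A_f\mapsto \tilde{A}=\tfrac12(A_H-\diag(\tilde g)A_H\diag(\tilde g))$, bound $\norm{A_f-\tilde{A}}\le n\eps$ combinatorially via row sums, and then must confront the extra term $\tfrac n2(\id-\tilde{R}^2)=\tfrac n2 HDH$ when rediagonalizing — which you correctly identify as the crux and dispatch by noting $D=\diag(1-\tilde g(x)^2)\succeq 0$ with $\norm{D}\le 2\eps$. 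Both arguments are sound and both land within the $3\eps n$ budget; the paper's is shorter and sidesteps the $\tilde R^2$ issue entirely, while yours makes the origin of the factor $n$ (the $n$-regularity of the hypercube) transparent and cleanly yields the two-sided bound, which is arguably closer to the lemma's literal ``$=$'' phrasing (the paper's own proof only establishes the $\le$ direction, which is all that is used later).
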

\begin{proof}
Using \Cref{lem:RXRX}, we have
\begin{align}
\lambda(f) &=  \max_{v:\norm{v}=1} v^{\mathsf{T}} (RXR-X)v \\
&= \max_{v:\norm{v}=1} v^{\mathsf{T}}  (H\diag(g)HXH\diag(g)H - X)v \\
&= \max_{v:\norm{v}=1} v^{\mathsf{T}} (H(\diag(\widetilde{g}) + \diag(g-\widetilde{g})) HXH (\diag(\widetilde{g}) + \diag(g-\widetilde{g}))H - X) \\
&\leq \max_{v:\norm{v}=1} v^{\mathsf{T}} (H\diag(\widetilde{g})HXH\diag(\widetilde{g})H - X)v + 3\eps n\\
&=\max_{v:\norm{v}=1} v^{\mathsf{T}} (\tR X \tR - X)v + 3\eps n,
\end{align}
where the inequality follows from the fact that $\|\diag(g-\widetilde{g}))\| \leq \eps, \|H\| = 1$, and $\|X\| = n$.
\end{proof}

As before, let us examine the matrix $\tilde{R} = H\diag(\tilde{g})H$. It follows from the definition that $\tilde{R}$ is a symmetric matrix. It is also nearly orthonormal and satisfies $\norm{\tilde{R}}\leq 1$ 
because
\begin{align}\label{eq:normR}
\norm{\tilde{R}} = \max_x |\tilde{g}(x)| \le 1. 
\end{align}

Finally, we still have $\tilde{R}_{xy}=0$ if $|x \oplus y|>d=\deg(\tilde{g})$ as before from \Cref{lem:sparse}, since the lemma did not assume that the output of $g$ was in $\{-1,1\}$. We are now ready to prove \Cref{lem:weakadeg}.

\begin{proof}[Proof of \Cref{lem:weakadeg}]
From \Cref{lem:RXRXapprox} we have that  $\lambda(f) = \max_{v:\norm{v}=1} v^{\mathsf{T}} (\tilde{R}X\tilde{R} - X)v + 3\eps n$, where $\tilde{R}=H\diag(\tilde{g})H$ and $\tilde{g}$ is an $\eps$-approximating polynomial for $g$ of degree $d=O(\adeg(f)\log(1/\eps))$. The matrix $\tR$ satisfies the assumptions of \Cref{lem:d}, so we have $\lambda(f) \leq d + 3\eps n$. Choosing $\eps = 1/3n$ gives us
$\lambda(f) = O(\adeg(f) \log n)$.
\end{proof}

Our main result now follows from this weaker statement in a black box way.

\begin{theorem}\label{thm:lambda lower bounds adeg}
For all total Boolean functions $f:\B^n \to \B$, $\lambda(f) = O(\adeg(f))$.
\end{theorem}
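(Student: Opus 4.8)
The plan is to bootstrap the logarithmically-lossy bound of \Cref{lem:weakadeg} using amplification by composition, exactly the standard trick for removing spurious log factors from a lower bound. First I would fix a large constant power $k$ and consider the composed function $f^{\circ k} := f \circ f \circ \cdots \circ f$ (the function on $n^k$ bits obtained by feeding blocks of $n$ inputs into copies of $f$, $k$ levels deep). The idea is that $\lambda$ is multiplicative under composition --- or at least super-multiplicative in the right direction --- while $\adeg$ grows at most multiplicatively by Sherstov's robust composition theorem \cite{She13}; but the $\log n$ loss only applies once at the end, to $f^{\circ k}$, whose input length is $n^k$, giving a loss of $\log(n^k) = k\log n$. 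Taking $k$th roots then kills the $\log n$ factor in the limit.

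Concretely, the key steps in order would be: (1) Show $\lambda(f^{\circ k}) \ge \lambda(f)^k$. This should follow by exhibiting a good test vector for the sensitivity graph of $f^{\circ k}$ built as a tensor product of the optimal test vectors for $f$ at $k$ levels --- a sensitive edge at the top level composed with sensitive edges in each active sub-block produces a sensitive edge of $f^{\circ k}$, and the adjacency matrix of $G_{f^{\circ k}}$ dominates (entrywise, after restricting to the relevant support) a tensor-like object whose norm is $\lambda(f)^k$. Alternatively, one can invoke a known composition property of $\lambda$ (or of the equivalent measure $\K$) from \Cref{sec:properties}. (2) Invoke Sherstov's theorem \cite{She13}: $\adeg(f^{\circ k}) = O(\adeg(f))^k$, i.e.\ $\adeg(f^{\circ k}) \le C^k \adeg(f)^k$ for an absolute constant $C$. (3) Apply \Cref{lem:weakadeg} to $f^{\circ k}$, which has $n^k$ inputs: $\lambda(f^{\circ k}) = O(\adeg(f^{\circ k}) \cdot \log(n^k)) = O(k \log n) \cdot C^k \adeg(f)^k$. (4) Combine: $\lambda(f)^k \le \lambda(f^{\circ k}) \le O(k\log n)\, C^k \adeg(f)^k$, so $\lambda(f) \le (O(k \log n))^{1/k}\, C\, \adeg(f)$. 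Letting $k \to \infty$ with $n$ fixed, $(O(k\log n))^{1/k} \to 1$, so $\lambda(f) \le (C + o(1)) \adeg(f)$, i.e.\ $\lambda(f) = O(\adeg(f))$.

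The main obstacle I expect is step (1), the composition lower bound $\lambda(f^{\circ k}) \ge \lambda(f)^k$: one must be careful that the tensor-product witness vector is genuinely supported on edges of the sensitivity graph $G_{f^{\circ k}}$ and not merely on hypercube edges, which requires that a single sensitive top-level edge together with a sensitive configuration in each queried sub-block really does flip the global output --- this is true but needs the combinatorial bookkeeping of how sensitivity composes. A secondary subtlety is that the constant $C$ from Sherstov's theorem survives the argument (so this route does not give the optimal constant, which is precisely why the paper gives a second, self-contained proof via \Cref{thm:tightadeg}). Everything else --- the limit computation in step (4), the application of \Cref{lem:weakadeg} --- is routine.
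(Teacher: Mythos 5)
Your proposal is correct and follows essentially the same route as the paper's proof: the tensor power trick applied to $f^{\circ k}$, combining the super-multiplicativity $\lambda(f^{\circ k})\ge\lambda(f)^k$ (which the paper proves as $\lambda(f\circ g)=\lambda(f)\lambda(g)$ in \Cref{thm:composition}) with Sherstov's composition theorem and \Cref{lem:weakadeg}, then taking $k$th roots. You also correctly identified both the main technical burden (the composition lower bound for $\lambda$) and the reason the paper includes a second self-contained proof (the non-optimal constant inherited from \cite{She13}).
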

\begin{proof}
This proof uses Boolean function composition and the tensor power trick~\cite[1.9.4]{Tao08}. It relies on the composition properties of the complexity measures $\lambda(f)$ and $\adeg(f)$. 

First, it is not too hard to show that for all Boolean functions $f$ and $g$,
\begin{equation}
    \lambda(f \circ g) = \lambda(f) \lambda(g).
\end{equation}
This essentially follows from known results on the quantum adversary bound, but we include a proof the appendix (\Cref{thm:composition}) for completeness. We only need the $\geq$ direction in this proof. 

Second, we need the fact that approximate degree composes with at most a constant factor overhead in the upper bound direction. Sherstov~\cite{She13} showed that for all total functions $f$ and $g$, we have
\begin{equation}
    \adeg(f \circ g) \leq c \adeg(f) \adeg(g),
\end{equation}
for some universal constant $c \geq 1$.

Now from \Cref{lem:weakadeg}, we know that there exists a constant $c'$ such that for all $f:\B^n \to \B$, we have 
\begin{equation}
\lambda(f) \leq c' \adeg(f) \log n.    
\end{equation}
Let $f^k:\B^{n^k}\to\B$ denote the function $f$ composed with itself $k$ times. Then we have for all $k \in \N$,
\begin{equation}
    \lambda(f)^k = \lambda(f^k) \leq c' \adeg(f^k) \log(n^k) \leq c' c^{k-1} \adeg(f)^k (k \log n). 
\end{equation}
Taking the $k$th root on both sides gives us
\begin{equation}
    \lambda(f) \leq (c'k\log n)^{1/k} c \adeg(f).
\end{equation}
Since this equation holds for arbitrarily large $k$, we must have
\begin{equation}
    \lambda(f) \leq c \adeg(f),
\end{equation}
which completes the proof.
\end{proof}

\subsection{An alternate self-contained proof}

We now reprove the previous result, that spectral sensitivity lower bounds approximate degree, without using Sherstov's composition theorem for approximate degree~\cite{She13}. Our alternate proof also has the advantage of yielding a tighter upper bound by a constant factor. 

As in the previous section, $f:\B^n \to \B$ is the total function under consideration and we want to relate $\lambda(f)$ to $\adeg_\eps(f)$, where for any $\eps \in [0,1/2)$, $\adeg_\eps(f)$ is the minimum degree of a polynomial $q$ such that for all $x\in\B^n$, $f(x)=1 \implies q(x) \in [1-2\eps,1]$ and $f(x)=0 \implies q(x) \in [-1,-1+2\eps]$.

The main result of this section is the following, which also implies \Cref{thm:degree} by setting $\eps=0$.

\begin{theorem}\label{thm:tightadeg}
For all total Boolean functions $f:\B^n \to \B$ and $\eps \in [0,1/2)$, $\lambda(f) \leq \frac{1}{1-2\eps}\adeg_\eps(f)$.
\end{theorem}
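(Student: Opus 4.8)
The plan is to mimic the proof of \Cref{thm:degree}, but track an additive loss that comes from replacing the exact polynomial by an $\eps$-approximating one, and then to optimize the approximating polynomial to make this loss take the form of a multiplicative $\frac{1}{1-2\eps}$ factor rather than the crude $3\eps n$ of \Cref{lem:RXRXapprox}. The source of that crude bound was the triangle-inequality split $\diag(g) = \diag(\tilde g) + \diag(g-\tilde g)$, which pays $\|X\| = n$; instead I would work directly with a single approximating polynomial and use the structure of $RXR - X$ more carefully. Concretely, let $q$ be a polynomial of degree $d = \adeg_\eps(f)$ with $q(x)\in[1-2\eps,1]$ on $f^{-1}(1)$ and $q(x)\in[-1,-1+2\eps]$ on $f^{-1}(0)$, so that $q = g + e$ where $g = 1-2f$ is $\pm 1$-valued and $\|\diag(e)\| \le 2\eps$. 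Rescale: set $\hat q = \frac{1}{1-2\eps} \cdot (\text{a centered version of } q)$ so that $\hat q$ still has degree $d$, still satisfies $\|\diag(\hat q)\|\le 1$ (hence $\|\tilde R\|\le 1$ for $\tilde R = H\diag(\hat q)H$ by \Cref{eq:normR}), and $\tilde R_{xy}=0$ when $|x\oplus y|>d$ by \Cref{lem:sparse}; but now crucially $\hat q$ agrees in \emph{sign} with $g$ and has magnitude at least $1$ on... — the right normalization to aim for is that $\diag(g)\diag(\hat q) \succeq \id$ entrywise on the diagonal, i.e.\ $g(x)\hat q(x) \ge 1$ for all $x$, while $\|\diag(\hat q)\| \le \frac{1}{1-2\eps}$.

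With that normalization, I would revisit the chain in \Cref{lem:RXRX}. Writing $A_f$ via $2A_f = A_H - \diag(g)A_H\diag(g)$ and diagonalizing $A_H = H(n\id - 2X)H$, the quantity $\max_{\|v\|=1} v^\transpose A_f v = \lambda(f)$ becomes, as there, $\max_{\|v\|=1} v^\transpose(RXR - X)v$ with $R = H\diag(g)H$. Now I replace $R$ by $\tilde R$: since $g(x)\hat q(x)\ge 1$ pointwise, one gets a matrix inequality letting us lower-bound the relevant quadratic form involving $\diag(g)$ by one involving $\diag(\hat q)$ — more precisely, for the vector $u = \diag(g) H X^{1/2}(\ldots)$ the cross terms line up so that $v^\transpose R X R v \le v^\transpose \tilde R X \tilde R v$ is \emph{not} quite what holds, so instead I would bound $v^\transpose(RXR - X)v$ directly: write $w = HXH \diag(g) Hv$ and note $\langle \diag(g)Hv, w\rangle$; here using $g(x)\hat q(x)\ge1$ and $\hat q$ sign-matching $g$ lets me insert $\hat q$ at a cost of the scalar $\frac{1}{1-2\eps}$. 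Running this through, $\lambda(f) \le \frac{1}{1-2\eps}\max_{\|v\|=1} v^\transpose(\tilde R X \tilde R - X) v$ up to the $-X$ term which is handled because $X \succeq 0$. Then \Cref{lem:d}, which only needs $\|\tilde R\|\le 1$ and $\tilde R_{xy}=0$ for $|x\oplus y|>d$ — both of which $\tilde R$ satisfies with $d = \adeg_\eps(f)$ — gives $v^\transpose(\tilde R X\tilde R - X)v \le d$, yielding $\lambda(f) \le \frac{1}{1-2\eps}\,\adeg_\eps(f)$ as claimed. Setting $\eps = 0$ recovers $\lambda(f) \le \deg(f)$.

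The main obstacle I anticipate is getting the normalization and the matrix manipulation in the second paragraph exactly right so that the loss is genuinely multiplicative $\frac{1}{1-2\eps}$ and not something like $\frac{1}{(1-2\eps)^2}$ — the subtlety is that $\diag(g)$ appears \emph{twice} in $RXR$, once on each side, so a naive substitution of $\hat q$ for $g$ on both sides costs the square of the scalar. The fix is to substitute $\hat q$ on only one side while keeping the exact $\pm1$-valued $g$ on the other (this is legitimate because $RXR$ is a product $(\diag(g)H)X(H\diag(g))$ and one is free to replace only the outer factors asymmetrically provided one then re-symmetrizes via $\|M\| = \|M^\transpose\|$ or works with $\max_v v^\transpose(\cdot)v$ carefully), or alternatively to observe that the relevant object is really $\|X^{1/2} H \diag(g) H v\| = \|X^{1/2} R v\|$ and the substitution $R \mapsto \tilde R = \frac{1}{1-2\eps}(\ldots)$ changes this norm by at most the factor $\frac{1}{1-2\eps}$ directly. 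I would also need to double-check that the ``centered and rescaled'' $\hat q$ can be taken to simultaneously satisfy $g(x)\hat q(x)\ge 1$ pointwise and $|\hat q(x)| \le \frac{1}{1-2\eps}$ — this just amounts to the affine reparametrization $t \mapsto \frac{1}{1-2\eps}(t - (\text{midpoint shift}))$ applied to the interval constraints, which is routine.
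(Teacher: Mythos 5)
There is a genuine gap, and it sits exactly where you flagged your ``main obstacle'': the substitution of an approximating polynomial for $g$ inside $RXR$ cannot be made to cost only a multiplicative $\tfrac{1}{1-2\eps}$, no matter how you normalize or how asymmetrically you substitute. The identity $2A_f = A_H - \diag(g)A_H\diag(g)$ underlying \Cref{lem:RXRX} relies on \emph{exact} cancellation on non-sensitive hypercube edges: the $(x,y)$ entry is $1-g(x)g(y)$, which vanishes there only because $g$ is exactly $\pm 1$-valued. Once you replace $g$ (on one side or both) by any $\hat q$ with $g(x)\hat q(x)\ge 1$ and $|\hat q(x)|\le\tfrac{1}{1-2\eps}$, each non-sensitive edge acquires a residual entry of magnitude up to $O(\eps)$, and since every vertex has up to $n$ such edges the resulting perturbation of the quadratic form has spectral norm $\Theta(\eps n)$. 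This is precisely the additive $3\eps n$ of \Cref{lem:RXRXapprox}; it is intrinsic to this route, not an artifact of a crude triangle inequality. Your alternative formulation fares no better: there is no pointwise comparison between $(Rv)_x=2^{-n/2}\sum_y(-1)^{\langle x,y\rangle}g(y)(Hv)_y$ and $(\tilde Rv)_x$, because the Hadamard transform can cancel one sum while leaving the other large, so $\norm{X^{1/2}Rv}\le\tfrac{1}{1-2\eps}\norm{X^{1/2}\tilde Rv}$ is false in general. (Separately, your normalization $|\hat q|\ge 1$ forces $\norm{\tilde R}=\max_x|\hat q(x)|>1$, so \Cref{lem:d} does not apply to your $\tilde R$ as stated.) The route you describe can only reproduce \Cref{lem:weakadeg}, i.e.\ $\lambda(f)=O(\adeg(f)\log n)$ after taking $\eps=1/3n$, which the paper then de-logs via composition and the tensor-power trick (\Cref{thm:lambda lower bounds adeg}); it does not yield the clean constant of \Cref{thm:tightadeg}.

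The paper's actual proof sidesteps the non-sensitive edges entirely by changing what is approximated. It forms the edge-difference matrix $B_q$ with $(B_q)_{xy}=\tfrac12(q(x)-q(y))$ on hypercube edges. This matrix is \emph{not} required to vanish on non-sensitive edges; instead, on sensitive edges its entries have magnitude at least $1-2\eps$, so $B_q$ entrywise dominates $(1-2\eps)A_f$ on the relevant bipartite block and a nonnegative test vector gives $\norm{B_q}\ge(1-2\eps)\lambda(f)$ (\Cref{lem:lambdaBq}). The whole difficulty then shifts to upper-bounding $\norm{B_q}$: the identity $HB_qH = X\tilde R-\tilde RX = W\odot\tilde R$ (\Cref{lem:BqWR}), the Schur-multiplier bound $\norm{V\odot\tilde R}\le\gamma_2(V)\norm{\tilde R}$, and the explicit construction of an $(n+1)\times(n+1)$ banded matrix $M$ with $M_{st}=s-t$ for $|s-t|\le d$ and $\gamma_2(M)\le d$ (\Cref{lem:M}). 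That $\gamma_2$ construction is the technical heart of the theorem and has no counterpart in your proposal.
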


We start by upper bounding $\lambda(f)$ by the norm of a matrix $B_q$ that is derived from the polynomial $q$. For any $\eps$, let $q$ be an $\eps$-approximating polynomial for $f$ (as defined above). We then define for all $x,y \in \B^n$, 
\begin{equation}\label{eq:Bq}
    (B_q)_{xy} = 
    \begin{cases}
    \frac{1}{2}(q(x)-q(y)) & \mathrm{if} \  |x \oplus y|=1, \\
    0 & \mathrm{if} \ |x\oplus y| \ne 1.
    \end{cases}
\end{equation}

\begin{lemma}\label{lem:lambdaBq}
For any total Boolean function $f:\B^n \to \B$ and $B_q$ as defined in \cref{eq:Bq} from an $\eps$-approximating polynomial $q$, we have $\lambda(f) \leq \frac{1}{1-2\eps} \norm{B_q}$.
\end{lemma}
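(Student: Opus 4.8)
The plan is to upper bound $\lambda(f)=\norm{A_f}$ by evaluating the bilinear form of $B_q$ on a pair of carefully chosen test vectors that exploit the bipartiteness of the sensitivity graph $G_f$. Write $V_1=f^{-1}(1)$ and $V_0=f^{-1}(0)$; every edge of $G_f$ joins $V_1$ to $V_0$, so $A_f$ has the block form $\left(\begin{smallmatrix}0&M\\ M^{\mathsf{T}}&0\end{smallmatrix}\right)$ where $M$ is the $V_1\times V_0$ biadjacency matrix, and hence $\lambda(f)=\norm{A_f}=\norm{M}$. If $\lambda(f)=0$ there is nothing to prove, so assume $\lambda(f)>0$. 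Since $M$ is entrywise nonnegative, a Perron eigenvector $b$ of the nonnegative matrix $M^{\mathsf{T}}M$ is a nonnegative unit vector on $V_0$ with $M^{\mathsf{T}}Mb=\lambda(f)^2 b$; setting $a:=Mb/\lambda(f)$ gives a nonnegative unit vector on $V_1$ with $a^{\mathsf{T}}Mb=\lambda(f)$. Zero-pad $a,b$ to unit vectors $u,w\in\Reals^{\B^n}$ supported on $V_1$ and $V_0$ respectively.

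The key point is the evaluation of $u^{\mathsf{T}}B_q w$. Because $u$ is supported on $V_1$ and $w$ on $V_0$, only entries $(B_q)_{xy}$ with $x\in V_1,\,y\in V_0$ contribute, and such an entry is nonzero only when $\abs{x\oplus y}=1$ — in which case $(x,y)$ is an edge of $G_f$. Thus all the ``spurious'' nonzero entries of $B_q$ living on non-sensitive hypercube edges (which sit inside $V_0$ or inside $V_1$, have magnitude at most $\eps$, and carry arbitrary sign) never enter the form at all. On an actual edge $(x,y)$ with $x\in V_1,\,y\in V_0$ we have $q(x)\ge 1-2\eps$ and $q(y)\le -1+2\eps$, so $(B_q)_{xy}=\tfrac12(q(x)-q(y))\ge 1-2\eps$. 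Using $u,w\ge 0$,
\[
u^{\mathsf{T}}B_q w=\sum_{\substack{x\in V_1,\,y\in V_0\\ \abs{x\oplus y}=1}}\tfrac12\bigl(q(x)-q(y)\bigr)u_x w_y\ \ge\ (1-2\eps)\sum_{\substack{x\in V_1,\,y\in V_0\\ \abs{x\oplus y}=1}}u_x w_y\ =\ (1-2\eps)\,a^{\mathsf{T}}Mb\ =\ (1-2\eps)\,\lambda(f).
\]
Hence $\norm{B_q}\ge u^{\mathsf{T}}B_q w\ge (1-2\eps)\lambda(f)$, and dividing by $1-2\eps>0$ (legitimate since $\eps<1/2$) yields $\lambda(f)\le\frac{1}{1-2\eps}\norm{B_q}$.

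The only genuine subtlety is the choice of test vectors: one must split along the bipartition and use nonnegative (Perron) singular vectors, so that the many small, sign-indefinite entries of $B_q$ on non-sensitive edges drop out entirely while the sensitive-edge entries, all of the same sign, add up constructively to recover $\lambda(f)$ exactly. Testing $B_q$ against, say, a single eigenvector of $A_f$ would force one to estimate those off-bipartition contributions and would spoil the clean constant $\frac{1}{1-2\eps}$. Once the test vectors are chosen correctly, no further computation is needed for this lemma (the real work — bounding $\norm{B_q}$ itself in terms of $\deg(q)$ — happens afterward).
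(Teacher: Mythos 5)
Your proof is correct and follows essentially the same route as the paper: both exploit the bipartite block structure of $A_f$, pick nonnegative maximizing vectors for the biadjacency block (you via Perron--Frobenius on $M^{\mathsf{T}}M$, the paper by taking entrywise absolute values of arbitrary maximizers), and observe that on sensitive edges $(B_q)_{xy}=\tfrac12(q(x)-q(y))\ge 1-2\eps$ while all other entries drop out of the bilinear form. The only cosmetic difference is that the paper phrases the last step as ``the relevant block $B$ is a submatrix of $B_q$, so $\norm{B_q}\ge\norm{B}$,'' whereas you evaluate the form directly on zero-padded vectors; these are the same argument.
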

\begin{proof}
We know that $\lambda(f)=\norm{A_f}$, where $(A_f)_{xy} = 1$ if and only if $|x \oplus y|=1$ and $f(x)\neq f(y)$, and otherwise $(A_f)_{xy} = 0$. Because $A_f$ has nonzero entries only when $f(x) \neq f(y)$, if we reorder the basis $\B^n$ such that all inputs with $f(x)=0$ come first and all inputs with $f(y)=1$ come after, then $A_f$ will be a block matrix of the form
\begin{equation}
    A_f = \begin{pmatrix} 0 & A^{\mathsf{T}}\\ A & 0 \end{pmatrix}.
\end{equation}
It is easy to see that $\norm{A_f}=\norm{A}$. Let us now write $B_q$ in the same reordered basis, and call the bottom left matrix $B$.  For $x \in f^{-1}(1)$ and $y \in f^{-1}(0)$, $B_{xy} = \frac{1}{2}(q(x)-q(y))$ if $|x \oplus y|=1$. For these inputs, we know that $q(x) \in [1-2\eps,1]$ and $q(y) \in [-1,-1+2\eps]$, and thus $B_{xy} \in [1-2\eps,1]$ if $|x \oplus y|=1$. All other entries in $B$ equal $0$. The matrix $A$  satisfies $A_{xy} = 1$ if $|x \oplus y|=1$. Thus we observe that $B_{xy} \geq (1-2\eps)A_{xy}$ for all $x \in f^{-1}(1)$ and $y \in f^{-1}(0)$.
Let $u$ and $v$ be unit vectors such that $u^\mathsf{T} A v=\norm{A}$.
Since $A$ is a nonnegative matrix, we may assume without loss of generality that $u,v\ge 0$.\footnote{Otherwise, by taking point-wise absolute values on $u$ and $v$ we would get two unit vectors $\tilde{u}$ and $\tilde{v}$ with $\tilde{u}^{\transpose} A \tilde{v} = \sum_{i,j} \tilde{u}_i A_{i,j} \tilde{v}_j \ge \sum_{i,j} u_i A_{i,j} v_j = u^{\transpose} A v$.
}
Using these vectors, we see that $\norm{B} \geq u^{\mathsf{T}}Bv \geq (1-2\eps)u^{\mathsf{T}}Av = (1-2\eps)\norm{A}$.

Since $B$ is a submatrix of $B_q$, we have $\norm{B_q} \geq \norm{B} \geq (1-2\eps)\norm{A} =
(1-2\eps)\norm{A_f} =(1-2\eps)\lambda(f)$.
\end{proof}

We now express $B_q$ in terms of matrices that are easier to work with.
The first matrix, which was also defined in the previous section, is 
$\tilde{R}=H\diag(q)H$, where $\diag(q)$ denotes the $2^n\times 2^n$ matrix with $q(x)$ on the diagonal.
The second matrix is $W$ of size $2^n\times 2^n$, defined as $W_{xy}=|x|-|y|$ for all $x,y\in \B^n$. 

In the following, for two $m\times n$ matrices $A$ and $B$, we denote by $A \odot B$ the Hadamard product (i.e., entrywise product) of $A$ and $B$.  $A \odot B$ is an $m\times n$ matrix, with elements given by $(A\odot B)_{i,j} = A_{i,j} B_{i,j}$.
\begin{lemma}\label{lem:BqWR}
For the matrix $B_q$ defined in \cref{eq:Bq}, and $W$ and $\tilde{R}$ defined above, $\|B_q\|=\|W\odot\tilde{R}\|$.
\end{lemma}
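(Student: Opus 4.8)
The plan is to prove the stronger statement that $B_q$ and $W\odot\tilde{R}$ are conjugate by the Hadamard matrix $H$. Since $H$ is a real symmetric involution, it is orthogonal, so $\norm{B_q}=\norm{HB_qH}$, and the claimed equality of spectral norms follows immediately once we show $HB_qH=W\odot\tilde{R}$.

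First I would rewrite $B_q$ as a commutator with the hypercube adjacency matrix $A_H$. Reading off \cref{eq:Bq}, the $(x,y)$ entry of $\diag(q)A_H-A_H\diag(q)$ is $(A_H)_{xy}\bigl(q(x)-q(y)\bigr)$, which equals $q(x)-q(y)$ precisely when $|x\oplus y|=1$ and vanishes otherwise; hence $B_q=\tfrac12\bigl(\diag(q)A_H-A_H\diag(q)\bigr)$. Next I would conjugate by $H$ and invoke the two identities already established in the proof of \Cref{lem:RXRX}, namely $HA_HH=n\id-2X$ and $H^2=\id$. Inserting copies of $H^2$ appropriately gives $H\diag(q)A_HH=(H\diag(q)H)(HA_HH)=\tilde{R}(n\id-2X)$ and, symmetrically, $HA_H\diag(q)H=(n\id-2X)\tilde{R}$, where $\tilde{R}=H\diag(q)H$. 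Subtracting, the $n\id$ terms cancel, leaving $HB_qH=X\tilde{R}-\tilde{R}X$. The final step is the entrywise observation that $(X\tilde{R}-\tilde{R}X)_{xy}=(|x|-|y|)\tilde{R}_{xy}=W_{xy}\tilde{R}_{xy}=(W\odot\tilde{R})_{xy}$, so $HB_qH=W\odot\tilde{R}$ as desired.

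I do not expect a genuine obstacle here: the whole argument is a short computation resting on identities the paper has already proved. The only point that needs a little care is the bookkeeping in the conjugation step — $\diag(q)$ sits on the left of $A_H$ in one term and on the right in the other, and since $\diag(q)$ does not commute with $X$ the two terms must be carried separately until the commutator is formed; getting the order right is what produces the clean cancellation of the $n\id$ piece.
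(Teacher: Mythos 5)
Your proposal is correct and follows essentially the same route as the paper: write $B_q=\tfrac12(\diag(q)A_H-A_H\diag(q))$, conjugate by $H$ using $HA_HH=n\id-2X$ and $H^2=\id$ to obtain $HB_qH=X\tilde{R}-\tilde{R}X$, and identify this entrywise with $W\odot\tilde{R}$. All steps check out, including the sign bookkeeping in the commutator.
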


\begin{proof}
We have $B_q=(\diag(q)A_H-A_H\diag(q))/2$, where $A_H$ is the adjacency matrix of the Boolean hypercube. This can be verified by observing that at position $(x,y)$, both sides are $0$ if
$|x\oplus y|\ne 1$, and otherwise both sides are $(q(x)-q(y))/2$.
Now, recall that $A_H=H(n\id-2X)H$, where $\id$ is the identity matrix, $X$ is the matrix satisfying $X_{xx}=|x|$ and $H$ is the Hadamard matrix. Hence
\begin{equation}
HB_qH=\frac{H\diag(q)H(nI-2X)-(nI-2X)H\diag(q)H}{2}
=X\tilde{R}-\tilde{R}X.    
\end{equation}
Then for any $x,y \in \B^n$,
\begin{equation}
    (HB_qH)_{xy} = (X\tilde{R}-\tilde{R}X)_{xy} = |x|\tilde{R}_{xy} - |y|\tilde{R}_{xy} =(|x|-|y|)\tilde{R}_{xy} = (W \odot \tilde{R})_{xy}.
\end{equation}
Since $H$ is unitary, $\|B_q\|=\|HB_qH\|=\|W\odot \tilde{R}\|$.
\end{proof}

We now need to upper bound $\|W\odot \tilde{R}\|$.
First, we observe that~\Cref{lem:sparse} shows that the matrix $\tilde{R}$ satisfies $\tilde{R}_{xy} = 0$ if $|x \oplus y| \geq d = \adeg_\eps(f)$. The lemma was established for a different matrix $R$, but $\tilde{R}$ satisfies the assumptions of the lemma as well. 
Then, due to the above observation, $W \odot \tilde{R} = V\odot \tilde{R}$ for any matrix $V$ that satisfies $V_{xy} = W_{xy}$ for $x,y$ with $|x\oplus y|\le d$. Thus, it suffices to bound  $\| V\odot \tilde{R}\|$ for any matrix $V$ as above. In particular, we could design a matrix $V$ as above all whose entries are bounded by $d$ in absolute value. It is tempting to say that since $V$'s entries are bounded by $d$ and since $\|\tilde{R}\|\le 1$ then $\|V \odot \tilde{R}\|\le d$ however this is not well-justified since matrix norms can increase dramatically even if we just change the signs of some entries in a matrix.
Nevertheless, we would show that there exists a choice of $V$ for which the above assertion hold. Moreover we would show that for a certain choice of $V$ as above, it holds that $\|V \odot A\| \le d \|A\|$ for any matrix $A$. Indeed, such a property is captured by the $\gamma_2$-norm of $V$. This $\gamma_2$-norm arises 
in communication complexity~\cite{LMSS07,LSS08} and is also known as the
Schur product norm~\cite{Wal86}.
It is defined as follows.
\begin{definition}[$\gamma_2$ norm]\label{def:gamma}
For any $m \times n$ matrix $A$, we define 
\begin{equation}
    \gamma_2(A) 
    = \min_{X,Y: X^{\mathsf{T}}Y = A} c(X) c(Y),
\end{equation}
where $c(X)$ is the maximum $\ell_2$ norm of any column of $X$.
\end{definition}
A crucial property of the $\gamma_2$-norm is that $\|A \odot B\| \le \gamma_2(A) \cdot \|B\|$ for any two $m\times n$ matrices $A$ and $B$.%
\footnote{This inequality can also be used to define the $\gamma_2$ norm. 
As \cite[Theorem 9]{LSS08} shows, we can equivalently define $\gamma_2(A)$ as the maximum of $\norm{A \odot B}$ over all matrices $B$ with $\norm{B}\leq 1$.} We use it to prove the next lemma.

\begin{lemma}\label{lem:BqV}
Let $B_q$ and $W$ be as defined above. Let $V$ be any matrix that satisfies $V_{xy}=W_{xy}$ when $||x|-|y||\le d$. Then $\norm{B_q} \leq \gamma_2(V)$.
\end{lemma}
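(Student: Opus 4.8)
The plan is to combine the two facts established just before the lemma: namely that $\|B_q\| = \|W \odot \tilde{R}\|$ (\Cref{lem:BqWR}) and that $\tilde{R}_{xy} = 0$ whenever $|x \oplus y| \ge d$ (from \Cref{lem:sparse}, applied to the degree-$d$ polynomial $q$). The sparsity of $\tilde{R}$ means that in the Hadamard product $W \odot \tilde{R}$, only entries $(x,y)$ with $|x \oplus y| \le d$ matter, and for such $x,y$ we have $||x|-|y|| \le |x \oplus y| \le d$, so in particular $W_{xy} = |x|-|y|$ agrees with $V_{xy}$ on exactly the support of $\tilde{R}$. Therefore $W \odot \tilde{R} = V \odot \tilde{R}$ for any $V$ satisfying $V_{xy} = W_{xy}$ on the band $||x|-|y|| \le d$.

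Given that, I would apply the crucial property of the $\gamma_2$ norm quoted right before the lemma: $\|A \odot B\| \le \gamma_2(A) \cdot \|B\|$ for any two matrices $A,B$ of the same dimensions. Taking $A = V$ and $B = \tilde{R}$, this gives $\|V \odot \tilde{R}\| \le \gamma_2(V) \cdot \|\tilde{R}\|$. Finally, recall from \cref{eq:normR} that $\|\tilde{R}\| = \max_x |q(x)| \le 1$. Chaining everything together:
\begin{equation}
\norm{B_q} = \norm{W \odot \tilde{R}} = \norm{V \odot \tilde{R}} \le \gamma_2(V) \cdot \norm{\tilde{R}} \le \gamma_2(V),
\end{equation}
which is exactly the claim.

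I don't anticipate a genuine obstacle in this lemma itself — it is essentially an assembly of pieces already proved — but the one point that needs a little care is verifying that $W \odot \tilde{R} = V \odot \tilde{R}$, i.e., that $V$ and $W$ genuinely agree on all entries where $\tilde{R}$ is nonzero. This uses the inequality $||x|-|y|| \le |x \oplus y|$ (flipping one coordinate changes the Hamming weight by exactly $1$) together with $\tilde{R}_{xy} = 0$ for $|x \oplus y| \ge d$, so the nonzero support of $\tilde{R}$ lies inside $\{(x,y) : |x \oplus y| \le d - 1\} \subseteq \{(x,y): ||x|-|y|| \le d\}$, precisely the band on which $V$ is constrained to equal $W$. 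The real work — constructing an explicit such $V$ with small $\gamma_2(V)$, specifically $\gamma_2(V) \le d/(1-2\eps)$ after accounting for \Cref{lem:lambdaBq}, or rather $\gamma_2(V) \le d$ — is deferred to a subsequent lemma, and \Cref{thm:tightadeg} will then follow by combining \Cref{lem:lambdaBq}, \Cref{lem:BqV}, and that bound.
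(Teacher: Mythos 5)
Your proposal is correct and follows essentially the same route as the paper's proof: reduce to $\norm{W \odot \tilde{R}}$ via \Cref{lem:BqWR}, use the sparsity of $\tilde{R}$ from \Cref{lem:sparse} together with $||x|-|y|| \le |x \oplus y|$ to replace $W$ by $V$, and then apply $\norm{A \odot B} \le \gamma_2(A)\norm{B}$ and $\norm{\tilde{R}} \le 1$. No gaps; the careful check that $V$ and $W$ agree on the support of $\tilde{R}$ is exactly the point the paper also makes.
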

\begin{proof}
In \Cref{lem:BqWR} we showed that $\norm{B_q} = \norm{W \odot \tilde{R}}$. Since $\tilde{R}_{xy} = 0$ if $|x\oplus y| \geq d$, $W \odot \tilde{R} = V \odot \tilde{R}$ since $V$ and $W$ agree on inputs where $||x|-|y|| \leq d$, which is implied by the condition $|x \oplus y| \leq d$. 
Thus $\norm{B_q} = \norm{V \odot \tilde{R}}$.
We then use the relationship $\norm{A \odot B} \leq \gamma_2(A) \norm{B}$, which is not hard to show (see, e.g., \cite[Section 5]{AHJ87} or \cite[Theorem 9]{LSS08}).
This gives us $\norm{B_q} \leq \gamma_2(V) \norm{\tilde{R}}$. 
Finally, since $\tilde{R} = H \diag(q) H$ and for any $x$, $|q(x)|\leq 1$, we have $\norm{\tilde{R}} \leq 1$.
\end{proof}

Now we can upper bound $\norm{B_q}$ by $\gamma_2(V)$ for any matrix $V$ that satisfies $V_{xy}=|x|-|y|$ when $||x|-|y||\leq d$. Instead of working with $V$, which is a $2^n \times 2^n$ matrix, the following lemma will allow us work with an $(n+1) \times (n+1)$ matrix.

\begin{lemma}\label{lem:VM}
Let $M$ be any $(n+1) \times (n+1)$ matrix that satisfies $M_{st} = s-t$ for all $s,t \in \{0,\ldots, n\}$ with $|s-t|\leq d$. Then there exists a matrix $V$ satisfying the conditions of \Cref{lem:BqV}, such that $\gamma_2(V) \leq \gamma_2(M)$.
\end{lemma}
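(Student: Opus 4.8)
The plan is to build $V$ by pulling $M$ back along the Hamming-weight map $x \mapsto |x|$. Concretely, I would set $V_{xy} = M_{|x|,|y|}$ for all $x,y \in \B^n$. The first thing to verify is that this $V$ meets the hypothesis of \Cref{lem:BqV}: if $||x|-|y|| \le d$, then the assumption on $M$ gives $M_{|x|,|y|} = |x|-|y|$, so $V_{xy} = |x|-|y|$ on exactly the entries required (and $V$ is a $2^n \times 2^n$ matrix, as needed).

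It then remains to show $\gamma_2(V) \le \gamma_2(M)$, and the point is that $\gamma_2$ cannot increase under this ``duplicate rows and columns'' operation. I would start from an optimal factorization $M = X^{\mathsf{T}} Y$ with $\gamma_2(M) = c(X)c(Y)$, writing $X$ and $Y$ in terms of their columns $X_0,\dots,X_n$ and $Y_0,\dots,Y_n$, so that $M_{st} = \langle X_s, Y_t\rangle$. Define matrices $X'$ and $Y'$ whose columns are indexed by $\B^n$ via $X'_x := X_{|x|}$ and $Y'_y := Y_{|y|}$. Then $(X'^{\mathsf{T}} Y')_{xy} = \langle X_{|x|}, Y_{|y|}\rangle = M_{|x|,|y|} = V_{xy}$, so $X'^{\mathsf{T}} Y' = V$ is a valid factorization of $V$. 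Since every weight in $\{0,\dots,n\}$ is attained by some string in $\B^n$, the set of columns appearing in $X'$ equals the set of columns of $X$, hence $c(X') = c(X)$, and likewise $c(Y') = c(Y)$. Therefore $\gamma_2(V) \le c(X')c(Y') = c(X)c(Y) = \gamma_2(M)$, which is the claimed bound.

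There is no real obstacle here; the only point requiring a little care is to blow up the factorization of $M$ so that the column $\ell_2$-norms are preserved exactly rather than merely bounded, which is what makes the inequality tight enough to conclude. (One could alternatively phrase this as the fact that $\gamma_2$ is invariant under pre- and post-composition with $0/1$ row- and column-selection matrices, applied to the $(n+1)\times(n+1)$ matrix $M$ with the selector that picks row/column $|x|$ for each $x\in\B^n$, but the direct factorization argument above is the cleanest route.)
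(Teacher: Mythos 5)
Your proof is correct and follows essentially the same route as the paper: both define $V_{xy}=M_{|x|,|y|}$ and argue that duplicating rows and columns cannot increase $\gamma_2$ by lifting an optimal factorization $M=X^{\mathsf{T}}Y$ to a factorization of $V$ with the same column norms. Your version packages the duplication into a single pullback of the factorization rather than iterating row/column duplications, but the underlying argument is identical.
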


\begin{proof}
If $M$ is such a matrix, then we can get
an appropriate matrix $V$ simply by duplicating
rows and columns of $M$. That is, index the rows
and columns of $M$ by $\{0,1,\dots,n\}$,
and let $V$ be the $2^n\times 2^n$
matrix defined by $V_{x,y}=M_{|x|,|y|}$.
Then we can get $V$ from $M$ by duplicating
row $j$ of $M$ to create $\binom{n}{j}$ copies of it,
for each $j\in\{0,1,\dots,n\}$, and then repeating
the duplication for the columns.

To ensure that $\gamma_2(V)\le\gamma_2(M)$,
all we need to show is that duplicating a row
or column of a matrix does not increase
$\gamma_2(\cdot)$. This is easy to see:
if the matrix we start with is $M=X^{\mathsf{T}}Y$ with
$c(X)c(Y)=\gamma_2(M)$, then duplicating a row
of $M$ gives the matrix $M'$, which
can be factored $M'=(X')^{\mathsf{T}}Y$, where
$X'$ is the matrix we get by duplicating
the corresponding row of $X^{\mathsf{T}}$ (which is a column of $X$).
Since duplicating a column of $X$ does not affect
$c(X)$, we get a factorization of $M'$
which certifies that
$\gamma_2(M')\le c(X)c(Y)=\gamma_2(M)$.
A similar argument shows that duplicating a column
of $M$ also does not increase $\gamma_2(\cdot)$.
\end{proof}

\begin{lemma}\label{lem:M}
There is an $(n+1)\times(n+1)$ matrix $M$
such that $\gamma_2(M)\le d$ and 
$M_{st}=s-t$ for all $s,t\in\{0,1,\dots,n\}$ with
$|s-t|\le d$.
\end{lemma}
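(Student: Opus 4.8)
The plan is to realize $M$ as a Toeplitz matrix $M_{st}=g(s-t)$ for a carefully chosen $g\colon\Z\to\Reals$, and then bound $\gamma_2(M)$ by the $\ell_1$-norm of the Fourier coefficients of $g$. For the band condition we need $g(k)=k$ whenever $|k|\le d$; the idea is to pick $g$ periodic so that it ``folds back'' outside this window and thereby carries very little Fourier mass. Concretely, I would take $g$ to be the $4d$-periodic triangle wave of slope $1$, i.e.\ $g(x)=x$ on $[-d,d]$, $g(x)=2d-x$ on $[d,3d]$, and so on; this satisfies $g(k)=k$ for all $|k|\le d$, so defining $M_{st}:=g(s-t)$ for $s,t\in\{0,\dots,n\}$ immediately gives $M_{st}=s-t$ on the band, as required. (The degenerate case $d=0$ is handled separately by $M=0$.)

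Next I would record the Fourier series of $g$, which is the classical expansion
\[
g(x)\;=\;\frac{8d}{\pi^2}\sum_{m\ge 0}\frac{(-1)^m}{(2m+1)^2}\,\sin\!\Bigl(\tfrac{(2m+1)\pi x}{2d}\Bigr),
\]
and observe that its coefficients $b_m:=\frac{8d(-1)^m}{\pi^2(2m+1)^2}$ satisfy $\sum_{m\ge 0}|b_m|=\frac{8d}{\pi^2}\cdot\frac{\pi^2}{8}=d$. The reason to use the triangle wave rather than, say, a square wave or a naive piecewise-linear extension is exactly that the identity $\sum_{m\ge 0}(2m+1)^{-2}=\pi^2/8$ makes this $\ell_1$-mass equal $d$ on the nose, which is what the lemma demands (and what produces the clean constant in \Cref{thm:tightadeg}).

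For the $\gamma_2$ bound I would apply the angle-subtraction formula to each term, writing $\sin\!\bigl(\tfrac{(2m+1)\pi(s-t)}{2d}\bigr)=\sin\!\bigl(\tfrac{(2m+1)\pi s}{2d}\bigr)\cos\!\bigl(\tfrac{(2m+1)\pi t}{2d}\bigr)-\cos\!\bigl(\tfrac{(2m+1)\pi s}{2d}\bigr)\sin\!\bigl(\tfrac{(2m+1)\pi t}{2d}\bigr)$. This exhibits $M$ as $X^{\transpose}Y$, where the rows of $X$ (indexed by $m$ and a sign bit) are $\sqrt{|b_m|}$ times the sine-vector $\bigl(\sin\tfrac{(2m+1)\pi s}{2d}\bigr)_s$ and the cosine-vector $\bigl(\cos\tfrac{(2m+1)\pi s}{2d}\bigr)_s$, and the rows of $Y$ are the same two vectors with the roles of sine and cosine swapped and multiplied by $\pm\sgn(b_m)$. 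Since $\sin^2+\cos^2=1$, every column of $X$ and of $Y$ has squared norm exactly $\sum_m|b_m|=d$, so this factorization certifies $\gamma_2(M)\le c(X)\,c(Y)\le\sqrt d\cdot\sqrt d=d$.

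The one genuine subtlety — the part I expect to need care — is that this factorization has infinitely many rows, whereas \Cref{def:gamma} is most naturally read with finite matrices. I would resolve this by truncating the series at $m\le N$: the resulting finite matrices $X^{(N)},Y^{(N)}$ still have every column of $\ell_2$-norm at most $\sqrt d$, hence $\gamma_2\bigl((X^{(N)})^{\transpose}Y^{(N)}\bigr)\le d$; and since $\sum_m|b_m|<\infty$, the entries of $(X^{(N)})^{\transpose}Y^{(N)}$ converge to those of $M$, so by continuity of the norm $\gamma_2(\cdot)$ on the finite-dimensional space of $(n+1)\times(n+1)$ matrices we conclude $\gamma_2(M)\le d$. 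Everything else is routine Fourier computation and bookkeeping.
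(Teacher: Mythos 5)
Your construction is correct, and in fact your matrix $M$ is literally the same one the paper uses: the paper's $M$ is the Toeplitz matrix $M_{st}=g(s-t)$ for the $4d$-periodic triangle wave $g$ (check the $d=3$ example, whose first row reads $0,-1,-2,-3,-2,-1,0,+1,\dots$). Where you genuinely differ is in the certificate that $\gamma_2(M)\le d$. The paper writes $M=C_0+\cdots+C_{d-1}$, where each $C_k$ is a shifted submatrix of a Kronecker product $J\otimes(\cdot)\otimes J_d$ with $\gamma_2(C_k)\le 1$ --- in effect decomposing the triangle wave into $d$ shifted square waves, each of rank $2$ with entries in $\{-1,0,+1\}$ --- and then invokes subadditivity of $\gamma_2$ together with monotonicity under passing to submatrices; it also records an explicit integer factorization $M=S^{\transpose}T$ with $c(S)=c(T)=\sqrt{d}$ that avoids even these properties. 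You instead expand $g$ in its Fourier series and use the angle-subtraction formula to produce a trigonometric factorization whose columns have squared norm exactly $\sum_{m}\lvert b_m\rvert=d$ via $\sum_{m\ge0}(2m+1)^{-2}=\pi^2/8$. Your route buys a conceptually transparent reason why the constant is exactly $d$ (it is the $\ell_1$-mass of the Fourier coefficients of the periodization), and the one subtlety you flag --- the infinite factorization --- is handled correctly by truncation plus continuity of the norm $\gamma_2$ on a finite-dimensional space (one should also note that the series converges pointwise to $g$, which holds since $g$ is continuous and $\sum_m\lvert b_m\rvert<\infty$). The paper's route buys a purely combinatorial, finite, integer-valued factorization with no convergence argument. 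Both yield the sharp constant needed for \Cref{thm:tightadeg}.
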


\begin{proof}
We start with a slightly informal ``picture'' proof.
To explain the matrix $M$, it will be simplest
to give an example for the case $d=3$, $n=12$.
We pick $M$ to be the following matrix:
{\small \begin{align}\left(\begin{smallmatrix*}[r]
 0 & -1 & -2 & -3 & -2 & -1 &  0 & +1 & +2 & +3 & +2 & +1 &  0 \\
+1 &  0 & -1 & -2 & -3 & -2 & -1 &  0 & +1 & +2 & +3 & +2 & +1 \\
+2 & +1 &  0 & -1 & -2 & -3 & -2 & -1 &  0 & +1 & +2 & +3 & +2 \\
+3 & +2 & +1 &  0 & -1 & -2 & -3 & -2 & -1 &  0 & +1 & +2 & +3 \\
+2 & +3 & +2 & +1 &  0 & -1 & -2 & -3 & -2 & -1 &  0 & +1 & +2 \\
+1 & +2 & +3 & +2 & +1 &  0 & -1 & -2 & -3 & -2 & -1 &  0 & +1 \\
 0 & +1 & +2 & +3 & +2 & +1 &  0 & -1 & -2 & -3 & -2 & -1 &  0 \\
-1 &  0 & +1 & +2 & +3 & +2 & +1 &  0 & -1 & -2 & -3 & -2 & -1 \\
-2 & -1 &  0 & +1 & +2 & +3 & +2 & +1 &  0 & -1 & -2 & -3 & -2 \\
-3 & -2 & -1 &  0 & +1 & +2 & +3 & +2 & +1 &  0 & -1 & -2 & -3 \\
-2 & -3 & -2 & -1 &  0 & +1 & +2 & +3 & +2 & +1 &  0 & -1 & -2 \\
-1 & -2 & -3 & -2 & -1 &  0 & +1 & +2 & +3 & +2 & +1 &  0 & -1 \\
 0 & -1 & -2 & -3 & -2 & -1 &  0 & +1 & +2 & +3 & +2 & +1 &  0 
\end{smallmatrix*}\right)\end{align}}

Here the matrix $M$ is a $13\times 13$ matrix which satisfies
$M_{st}=s-t$ when $|s-t|\le 3$: that is, within distance $3$ of
its diagonal, the entries of $M$ are equal to the distance to the
diagonal (and they are positive below the diagonal and negative
above it).
Next, we write $M$ as a sum of $d=3$ matrices
that have entries in $\{-1,0,+1\}$.
Note that each of the following $d$ matrices have $d\times d$ blocks of $-1$ just above the diagonal and $d\times d$ blocks of $+1$ just below the diagonal. 
{\small
\begin{align}\label{eq:matrix}
\left(\begin{smallmatrix*}[r]
 0 & -1 & -2 & -3 & -2 & -1 &  0 & +1 & +2 & +3 & +2 & +1 &  0 \\
+1 &  0 & -1 & -2 & -3 & -2 & -1 &  0 & +1 & +2 & +3 & +2 & +1 \\
+2 & +1 &  0 & -1 & -2 & -3 & -2 & -1 &  0 & +1 & +2 & +3 & +2 \\
+3 & +2 & +1 &  0 & -1 & -2 & -3 & -2 & -1 &  0 & +1 & +2 & +3 \\
+2 & +3 & +2 & +1 &  0 & -1 & -2 & -3 & -2 & -1 &  0 & +1 & +2 \\
+1 & +2 & +3 & +2 & +1 &  0 & -1 & -2 & -3 & -2 & -1 &  0 & +1 \\
 0 & +1 & +2 & +3 & +2 & +1 &  0 & -1 & -2 & -3 & -2 & -1 &  0 \\
-1 &  0 & +1 & +2 & +3 & +2 & +1 &  0 & -1 & -2 & -3 & -2 & -1 \\
-2 & -1 &  0 & +1 & +2 & +3 & +2 & +1 &  0 & -1 & -2 & -3 & -2 \\
-3 & -2 & -1 &  0 & +1 & +2 & +3 & +2 & +1 &  0 & -1 & -2 & -3 \\
-2 & -3 & -2 & -1 &  0 & +1 & +2 & +3 & +2 & +1 &  0 & -1 & -2 \\
-1 & -2 & -3 & -2 & -1 &  0 & +1 & +2 & +3 & +2 & +1 &  0 & -1 \\
 0 & -1 & -2 & -3 & -2 & -1 &  0 & +1 & +2 & +3 & +2 & +1 &  0 
\end{smallmatrix*}\right)
&=
\left(\begin{smallmatrix*}[r]
 0 &  0 &  0 & -1 & -1 & -1 &  0 &  0 &  0 & +1 & +1 & +1 &  0 \\
 0 &  0 &  0 & -1 & -1 & -1 &  0 &  0 &  0 & +1 & +1 & +1 &  0 \\
 0 &  0 &  0 & -1 & -1 & -1 &  0 &  0 &  0 & +1 & +1 & +1 &  0 \\
+1 & +1 & +1 &  0 &  0 &  0 & -1 & -1 & -1 &  0 &  0 &  0 & +1 \\
+1 & +1 & +1 &  0 &  0 &  0 & -1 & -1 & -1 &  0 &  0 &  0 & +1 \\
+1 & +1 & +1 &  0 &  0 &  0 & -1 & -1 & -1 &  0 &  0 &  0 & +1 \\
 0 &  0 &  0 & +1 & +1 & +1 &  0 &  0 &  0 & -1 & -1 & -1 &  0 \\
 0 &  0 &  0 & +1 & +1 & +1 &  0 &  0 &  0 & -1 & -1 & -1 &  0 \\
 0 &  0 &  0 & +1 & +1 & +1 &  0 &  0 &  0 & -1 & -1 & -1 &  0 \\
-1 & -1 & -1 &  0 &  0 &  0 & +1 & +1 & +1 &  0 &  0 &  0 & -1 \\
-1 & -1 & -1 &  0 &  0 &  0 & +1 & +1 & +1 &  0 &  0 &  0 & -1 \\
-1 & -1 & -1 &  0 &  0 &  0 & +1 & +1 & +1 &  0 &  0 &  0 & -1 \\
 0 &  0 &  0 & -1 & -1 & -1 &  0 &  0 &  0 & +1 & +1 & +1 &  0 
\end{smallmatrix*}\right)
 \nonumber \\
&\mkern-260mu% negative spacing
+
\left(\begin{smallmatrix*}[r]
 0 &  0 & -1 & -1 & -1 &  0 &  0 &  0 & +1 & +1 & +1 &  0 &  0 \\
 0 &  0 & -1 & -1 & -1 &  0 &  0 &  0 & +1 & +1 & +1 &  0 &  0 \\
+1 & +1 &  0 &  0 &  0 & -1 & -1 & -1 &  0 &  0 &  0 & +1 & +1 \\
+1 & +1 &  0 &  0 &  0 & -1 & -1 & -1 &  0 &  0 &  0 & +1 & +1 \\
+1 & +1 &  0 &  0 &  0 & -1 & -1 & -1 &  0 &  0 &  0 & +1 & +1 \\
 0 &  0 & +1 & +1 & +1 &  0 &  0 &  0 & -1 & -1 & -1 &  0 &  0 \\
 0 &  0 & +1 & +1 & +1 &  0 &  0 &  0 & -1 & -1 & -1 &  0 &  0 \\
 0 &  0 & +1 & +1 & +1 &  0 &  0 &  0 & -1 & -1 & -1 &  0 &  0 \\
-1 & -1 &  0 &  0 &  0 & +1 & +1 & +1 &  0 &  0 &  0 & -1 & -1 \\
-1 & -1 &  0 &  0 &  0 & +1 & +1 & +1 &  0 &  0 &  0 & -1 & -1 \\
-1 & -1 &  0 &  0 &  0 & +1 & +1 & +1 &  0 &  0 &  0 & -1 & -1 \\
 0 &  0 & -1 & -1 & -1 &  0 &  0 &  0 & +1 & +1 & +1 &  0 &  0 \\
 0 &  0 & -1 & -1 & -1 &  0 &  0 &  0 & +1 & +1 & +1 &  0 &  0
\end{smallmatrix*}\right)+
\left(\begin{smallmatrix*}[r]
 0 & -1 & -1 & -1 &  0 &  0 &  0 & +1 & +1 & +1 &  0 &  0 &  0 \\
+1 &  0 &  0 &  0 & -1 & -1 & -1 &  0 &  0 &  0 & +1 & +1 & +1 \\
+1 &  0 &  0 &  0 & -1 & -1 & -1 &  0 &  0 &  0 & +1 & +1 & +1 \\
+1 &  0 &  0 &  0 & -1 & -1 & -1 &  0 &  0 &  0 & +1 & +1 & +1 \\
 0 & +1 & +1 & +1 &  0 &  0 &  0 & -1 & -1 & -1 &  0 &  0 &  0 \\
 0 & +1 & +1 & +1 &  0 &  0 &  0 & -1 & -1 & -1 &  0 &  0 &  0 \\
 0 & +1 & +1 & +1 &  0 &  0 &  0 & -1 & -1 & -1 &  0 &  0 &  0 \\
-1 &  0 &  0 &  0 & +1 & +1 & +1 &  0 &  0 &  0 & -1 & -1 & -1 \\
-1 &  0 &  0 &  0 & +1 & +1 & +1 &  0 &  0 &  0 & -1 & -1 & -1 \\
-1 &  0 &  0 &  0 & +1 & +1 & +1 &  0 &  0 &  0 & -1 & -1 & -1 \\
 0 & -1 & -1 & -1 &  0 &  0 &  0 & +1 & +1 & +1 &  0 &  0 &  0 \\
 0 & -1 & -1 & -1 &  0 &  0 &  0 & +1 & +1 & +1 &  0 &  0 &  0 \\
 0 & -1 & -1 & -1 &  0 &  0 &  0 & +1 & +1 & +1 &  0 &  0 &  0
\end{smallmatrix*}\right)
\end{align}
}

For a general $d$, $M$ will be a sum of $d$ matrices that we call $C_0,C_1,\dots, C_{d-1}$. We now explain how to upper bound $\gamma_2(C_k)$ and obtain an upper bound on $\gamma_2(M)$. Define $Q$ to be the matrix \begin{equation}
Q=J_{\lceil n/4d\rceil+1}\otimes
\left(\begin{smallmatrix*}[r]
 0 & -1 &  0 & +1 \\
+1 &  0 & -1 &  0 \\
 0 & +1 &  0 & -1 \\
-1 &  0 & +1 &  0
\end{smallmatrix*}\right)
\otimes J_d,
\end{equation}
where $\otimes$ denotes the Kronecker (tensor) product,
and where $J_m$ denotes the $m\times m$ all-ones matrix.
Further, this middle matrix is
$\big(\begin{smallmatrix*}[r]
+1 & -1 \\
-1 & +1
\end{smallmatrix*}\big)
\otimes
\big(\begin{smallmatrix*}[r]
0 & -1 \\
+1 & 0
\end{smallmatrix*}\big)$.

For $k = 0, \ldots, d-1$, we let $C_{k}$ be the $n \times n$ submatrix defined by $(C_{k})_{i, j} = Q_{i+k, j+k}$.
In other words, $C_0$ is the top left $n \times n$ corner of $Q$, and for each $k\ge1$, we obtain $C_{k}$ from $C_{k-1}$ by shifting the submatrix diagonally.

Thus, the $(i, j)$ entry of $C_0+\dots+C_{d-1}$ is equal to $Q_{i, j}+Q_{i+1, j+1}+\cdots+Q_{i+d-1, j+d-1}$.
Because $Q$ is a block matrix whose blocks are of size $d \times d$, this is equal in magnitude to $|i-j|$, as $d-|i-j|$ is the length of the intersection of the line segment from $(i, j)$ to $(i+d-1, j+d-1)$ and the $d \times d$ blocks along the main diagonal.
Finally, whether the value is positive or negative depends whether the line segment additionally intersects a block of $1$'s (if $i > j$) or a block of $-1$'s (if $i < j$,) and thus $C_0+\dots+C_{d-1}$ is in fact equal to $M$.

Now that we have argued that all the matrices $C_k$ are submatrices of shifted versions of $Q$, let us compute $\gamma_2(Q)$. We now claim that $\gamma_2(Q)=1$. This follows from a few
easy-to-verify facts about the $\gamma_2(\cdot)$ norm:
\begin{enumerate}
\item $\gamma_2(A\otimes B)=\gamma_2(A)\gamma_2(B)$.
\item  $\gamma_2(J_m)=1$ for all $m$.
\item $\gamma_2(\big(\begin{smallmatrix*}[r]
+1 & -1 \\
-1 & +1
\end{smallmatrix*}\big)) = 1$ and $\gamma_2(
\big(\begin{smallmatrix*}[r]
0 & -1 \\
+1 & 0
\end{smallmatrix*}\big))=1$.
\end{enumerate}
Since $Q$ decomposes into a Kronecker product of matrices
with $\gamma_2(\cdot)=1$ we get that $\gamma_2(Q)=1$.
Finally, we use two additional properties of $\gamma_2(\cdot)$:
first, that it is subadditive (indeed, it is a norm),
so $\gamma_2(M)\le \sum_{k=0}^{d-1} \gamma_2(C_k)$;
and second, that it is non-increasing under restriction
to a submatrix, so $\gamma_2(C_k)\le\gamma_2(Q)$ for each
$k =0, 1, \ldots, d-1$. Together, these properties
imply that $\gamma_2(M)\le d$, completing the picture proof.

We also provide a more explicit and formal way of showing
that there is an appropriate matrix $M$ satisfying
$\gamma_2(M)\le d$; this method
also avoids using any properties of $\gamma_2(\cdot)$
by directly giving a factorization
$M=S^{\mathsf{T}}T$. This factorization still corresponds
to writing $M$ as the sum of $C_0+C_1+\dots +C_{d-1}$,
together with the observation that each of the matrices
$C_k$ has rank $2$ (as can be seen from their Kronecker product
decomposition).

Specifically, the matrices $S$ and $T$ will
have dimensions $2d\times (n+1)$, and will be defined as follows.
For any $s\in\{0,1,\dots, n\}$ and any $j\in\{0,\dots, 2d-1\}$,
we can write $s+j=a+b(2d)$ for unique integers
$a\in\{0,1,\dots,2d-1\}$ and $b\in\mathbb{N}$;
that is, $a$ is the remainder of $s+j$ modulo $2d$,
and $b=\lfloor(s+j)/2d\rfloor$.
If $a\in[0,d-1]$, we define $S_{js}=(-1)^b$ and $T_{js}=0$.
Otherwise (i.e., if $a\in[d,2d-1]$), we define
$S_{js}=0$ and $T_{js}=-(-1)^b$.
We will use $S_j$ to denote row $j$ of $S$, and similarly
for $T_j$.
(For reference, in terms of the picture proof, we will have
$S_j^{\mathsf{T}}T_j+S_{j+d}^{\mathsf{T}}T_{j+d}=C_j$ for all $j \in \{0,1,\ldots, d-1\}$.)

In each column of $S$ and $T$, half the entries are zero
and half are $\pm1$, and hence we have
$c(S)=c(T)=\sqrt{d}$. To
complete the proof of the lemma, all we need to do
is to set $M=S^{\mathsf{T}}T$ and to show that $M_{st}=s-t$
whenever $|s-t|\le d$.
To this end, observe that
$(S_j^{\mathsf{T}}T_j)_{st}$ is nonzero if and only if
$s+j \mod 2d \in [0,d-1]$ and $t+j\mod 2d\in [d,2d-1]$.
When $|s-t|\le d$, it's easy to see that the number
of different values of $j\in\{0,1,\dots, 2d-1\}$ for which
this happens is exactly $|s-t|$.
Moreover, if $s<t$,
then it's not hard to see that $(S_j^{\mathsf{T}}T_j)_{st}$
will be negative if it is nonzero, while if $s>t$,
it will be positive if it is nonzero.
Hence $(S^{\mathsf{T}}T)_{st}=\sum_{j=0}^{2d-1} (S_j^{\mathsf{T}}T_j)_{st}=s-t$
whenever $|s-t|\le d$.
\end{proof}

Finally, we can put the pieces together to prove \Cref{thm:tightadeg}.

\begin{proof}[Proof of \Cref{thm:tightadeg}]
From \Cref{lem:lambdaBq}, we have $\lambda(f) \leq \frac{1}{1-2\eps}\norm{B_q}$. \Cref{lem:BqV} and \Cref{lem:VM} then give us $\norm{B_q} \leq \gamma_2(M)$, for any matrix $M$ that satisfies the conditions in \Cref{lem:M}. \Cref{lem:M} constructs such a matrix $M$ with $\gamma_2(M) \leq d= \adeg_\eps(f)$, completing the proof.
\end{proof}

\section{Monotone graph properties}
\label{sec:AKR}

The Aanderaa--Karp--Rosenberg conjectures are a collection of conjectures related to the query complexity of deciding whether an input graph specified by its adjacency matrix satisfies a given property in various models of computation. 

Specifically, let the input be an $n$-vertex undirected simple graph specified by its adjacency matrix. This means we can query any unordered pair $\{i,j\}$, where $i,j\in[n]$, and learn whether there is an edge between vertex $i$ and $j$. Note that the input size is $\binom{n}{2}=\Theta(n^2)$. 

A function $f$ on $\binom{n}{2}$ variables is a graph property if it treats the input as a graph and not merely a string of length $\binom{n}{2}$. Specifically, the function must be invariant under permuting vertices of the graph. 
In other words, the function can only depend on the isomorphism class of the graph, not the specific labels of the vertices. 
A function $f$ is monotone (increasing) if for all $x,y\in\B^n$, $x \leq y \implies f(x) \leq f(y)$, where $x \leq y$ means $x_i \leq y_i$ for all $i\in[n]$. 
For a monotone function, negating a $0$ in the input cannot change the function value from $1$ to $0$. 
In the context of graph properties, if the input graph has a certain monotone graph property, then adding more edges cannot destroy the property. 

Examples of monotone graph properties include ``$G$ is connected,'' ``$G$ contains a clique of size $k$,'' ``$G$ contains a Hamiltonian cycle,'' ``$G$ has chromatic number greater than $k$,'' ``$G$ is not planar'', and ``$G$ has diameter at most $k$.'' Many commonly encountered graph properties (or their negation) are monotone graph properties. Finally, we say a function $f:\B^n\to\B$ is nontrivial if there exist inputs $x$ and $y$ such that $f(x)\neq f(y)$.

The deterministic Aanderaa--Karp--Rosenberg conjecture, also called the \emph{evasiveness conjecture},\footnote{A function $f$ is called \emph{evasive} if its deterministic query complexity equals its input size.} states that for all nontrivial monotone graph properties $f$, $\D(f) = \binom{n}{2}$. This conjecture remains open to this day, although the weaker claim that $\D(f)=\Omega(n^2)$ was proved over 40 years ago by Rivest and Vuillemin~\cite{RV76}. Several works have improved on the constant in their lower bound, and the best current result is due to Scheidweiler and Triesch~\cite{ST13}, who prove a lower bound of $\D(f)\geq (1/3-o(1)) \cdot n^2$. The evasiveness conjecture has been established in several special cases including when $n$ is prime~\cite{KSS84} and when restricted to bipartite graphs~\cite{Yao88}.

The randomized Aanderaa--Karp--Rosenberg conjecture asserts that all nontrivial monotone graph properties $f$ satisfy $\R(f) = \Omega(n^2)$. A sequence of increasingly stronger lower bounds, starting with a lower bound of $\Omega(n\log^{1/12}n)$ due to Yao~\cite{Yao91}, a lower bound of $\Omega(n^{5/4})$ due to King~\cite{Kin88}, and a lower bound of $\Omega(n^{4/3})$ due to Hajnal~\cite{Haj91}, has led to the current best lower bound of $\Omega(n^{4/3}\log^{1/3}n)$ due to Chakrabarti and Khot~\cite{CK01}. There are also two lower bounds due to  Friedgut, Kahn, and Wigderson~\cite{FKW02} and O'Donnell, Saks,  Schramm, and Servedio~\cite{OSSS05} that are better than this bound for some graph properties.

The quantum Aanderaa--Karp--Rosenberg conjecture states that all nontrivial monotone graph properties $f$ satisfy $\Q(f) = \Omega(n)$. This is the best lower bound one could hope to prove since there exist properties with $\Q(f) = O(n)$, such as the property of containing at least one edge. In fact, for any $\alpha \in [1,2]$ it is possible to construct a graph property with quantum query complexity $\Theta(n^{\alpha})$ using known lower bounds for the threshold function~\cite{BBCMW01}.

As stated in the introduction, the question was first raised by Buhrman, Cleve, de Wolf, and Zalka~\cite{BCdWZ99}, who showed a lower bound of $\Omega(\sqrt{n})$. This was improved by Yao to $\Omega(n^{2/3}\log^{1/6}n)$ using the technique in \cite{CK01} and Ambainis' adversary bound~\cite{Amb02}. Better lower bounds are known in some special cases, such as when the property is a subgraph isomorphism property, where we know a lower bound of $\Omega(n^{3/4})$ due to Kulkarni and Podder~\cite{KP16}.

As stated in \Cref{thm:qAKR}, we resolve the quantum Aanderaa--Karp--Rosenberg conjecture and show an optimal $\Omega(n)$ lower bound. The proof combines \Cref{thm:deg vs. Q} with a quadratic lower bound on the degree of nontrivial monotone graph properties. With some work, the original quadratic lower bound on the deterministic query complexity of nontrivial monotone graph properties by Rivest and Vuillemin~\cite{RV76} can be modified to prove a similar lower bound for degree. 
We were not able to find such a proof in the literature, and instead combine the following two claims to obtain the desired claim.

First, we use the result of Dodis and Khanna~\cite[Theorem 2]{DK99}:
\begin{theorem}
For all nontrivial monotone graph properties, $\deg_2(f) = \Omega(n^2)$.
\end{theorem}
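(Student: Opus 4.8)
The plan is to adapt to the $\F_2$-degree the classical vertex-symmetry argument of Rivest and Vuillemin~\cite{RV76}, which they used to prove the $\Omega(n^2)$ lower bound on $\D(f)$. Since $f$ is monotone and nontrivial we may assume $f(\emptyset)=0$ and $f(K_n)=1$. Recall that for a set $S$ of edge-variables the $\F_2$-Möbius coefficient of $f$ on $S$ equals $\bigoplus_{T\subseteq S}f(\mathbbold{1}_T)$, i.e.\ the parity of the number of graphs $G$ with $E(G)\subseteq S$ and $f(G)=1$; hence $\deg_2(f)\ge|S|$ whenever that number is odd. So it suffices to exhibit a subgraph $S$ with $|S|=\Omega(n^2)$ edges such that the subproperty of $f$ it induces accepts an odd number of graphs.

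First I would set up the parity-counting engine. If a finite $2$-group $\Gamma$ acts on a ground set $[m]$ and $g\colon\B^m\to\B$ is invariant under the induced action on $\B^m$, then every $\Gamma$-orbit has size a power of $2$, so every orbit of accepting inputs other than the $\Gamma$-fixed points has even size, whence $|g^{-1}(1)|\equiv|g^{-1}(1)\cap\mathrm{Fix}(\Gamma)|\pmod 2$. Here $\mathrm{Fix}(\Gamma)$ is the subcube of dimension $(\#\text{orbits of }\Gamma\text{ on }[m])$ consisting of inputs constant on each orbit, and $g$ restricted to it is again monotone; iterating this collapse, the parity we want is governed by a function on far fewer variables, which one can hope to analyze by hand. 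Because every graph property is $S_n$-invariant, any $2$-subgroup of $S_n$ (acting on the $n$ vertices, hence on the edges) is available for the collapse.

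Next I would choose $S$ and a $2$-subgroup $\Gamma\le S_n$ preserving $S$ so that $|S|=\Omega(n^2)$, $\Gamma$ has only few orbits on $S$, and on the collapsed subcube the induced monotone function accepts an odd number of inputs (using $f(\emptyset)=0$ and monotonicity to pin down which of the few $\Gamma$-symmetric sub-configurations of $S$ are accepted). A natural family of choices comes from ``blow-ups'': fix a pattern graph $H$ on few vertices such that $f$ accepts a blow-up of $H$ by classes of size $2^k$ with $2^k=\Theta(n)$ — for instance a single edge blown up to $K_{n/2,n/2}$ when $f$ accepts that complete bipartite graph, a triangle or a small clique blown up for clique/odd-cycle-type properties, and clique-blow-ups with a Sylow-$2$ subgroup of $S_{2^k}$ acting inside each class (giving $O(k)$ ``tree-level'' orbits per class) when $f$ forces dense graphs. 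With $\Gamma$ the product of the translation groups $\F_2^k$ on the classes, the orbits of $\Gamma$ on $S$ are the class-to-class complete bipartite pieces together with those tree levels, the $\Gamma$-fixed sub-configurations of $S$ are exactly the sub-blow-ups, and in the cleanest case $f(\emptyset)=0$ forces $S$ itself to be the unique accepted one.

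The hard part will be exactly this last step. A general nontrivial monotone graph property need not be $\F_2$-evasive even on a prime-power number of vertices — for example ``$G$ has at least two edges'' is not, when the vertex count is $\equiv 2,3\pmod 4$ — so one cannot simply restrict to a complete subgraph on $\Theta(n)$ vertices; the pattern $H$, the blow-up sizes, and the group have to be tuned to $f$ so that the collapsed accepting count is genuinely odd, and degenerate properties for which every natural restriction trivializes $f$ (such as ``$G$ is complete'', which is just $\AND$ and has $\deg_2=\binom n2$ directly) need separate treatment. Carrying out this combinatorial case analysis in full is the real content of the theorem, and is the part worked out by Dodis and Khanna~\cite[Theorem 2]{DK99}.
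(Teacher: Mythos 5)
Your framework is sound as far as it goes: the reduction of $\deg_2(f)\ge|S|$ to the parity of $|\{T\subseteq S: f(\id_T)=1\}|$ is the correct $\F_2$ M\"obius inversion, and the observation that a $2$-group acting with few orbits lets you read off this parity from the fixed-point subcube is exactly the Rivest--Vuillemin engine. But taken as a proof, the proposal has a genuine gap at its center, and you name it yourself: for an \emph{arbitrary} nontrivial monotone graph property you never actually exhibit a set $S$ with $\Omega(n^2)$ edges and an odd accepting count. The blow-up construction is offered only as a menu of candidates; nothing in the write-up shows that for every $f$ some choice of pattern graph, blow-up sizes, and $2$-subgroup yields an odd collapsed count, and your own example (``at least two edges'' on $n\equiv 2,3\pmod 4$ vertices) demonstrates that the most natural choice can fail, so the case analysis cannot be waved away as routine. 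Deferring ``the real content'' to Dodis and Khanna makes this a proof plan, not a proof.

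You should also know that the paper itself does not prove this statement: it is imported as a black box, namely Theorem 2 of Dodis and Khanna \cite{DK99}, with no argument given. So in substance your treatment coincides with the paper's --- a citation --- except that you additionally sketch a plausible but unverified route. If the goal is to match the paper, the bare citation suffices and the incomplete sketch only adds exposure; if the goal is a self-contained proof, the missing combinatorial construction (the part Dodis and Khanna actually carry out) must be supplied before the theorem can be considered established.
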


Here $\deg_2(f)$ is the minimum degree of a Boolean function when represented as a polynomial over the finite field with two elements, $\mathbb{F}_2$. We combine this with a standard lemma that shows that this measure lower bounds $\deg(f)$. A proof can be found in~\cite[Proposition 6.23]{ODo09}:
\begin{lemma}
For all Boolean functions $f:\B^n \to \B$, we have $\deg_2(f) \leq \deg(f)$.
\end{lemma}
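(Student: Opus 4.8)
The plan is to exploit the fact that the unique multilinear representation of a Boolean function over $\Reals$ has integer coefficients, and then to reduce those coefficients modulo $2$.

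First I would pass to a multilinear representative: any $p \in \Reals[x_1,\dots,x_n]$ with $p|_{\B^n} = f$ can be multilinearized using the relations $x_i^2 = x_i$ (which hold on $\B^n$) without increasing its degree, so there is a multilinear polynomial $p = \sum_{S \subseteq [n]} c_S \prod_{i\in S} x_i$ with $\deg p = \deg(f)$ and $p|_{\B^n} = f$. Evaluating at the characteristic vector $\mathbbold{1}_T \in \B^n$ of a set $T \subseteq [n]$ gives $p(\mathbbold{1}_T) = \sum_{S \subseteq T} c_S = f(\mathbbold{1}_T)$, so Möbius inversion on the subset lattice yields $c_S = \sum_{T \subseteq S} (-1)^{|S| - |T|} f(\mathbbold{1}_T)$, and since $f$ is $\{0,1\}$-valued each $c_S \in \Z$.

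Next I would reduce modulo $2$: let $\bar{c}_S \in \F_2$ denote $c_S \bmod 2$ and set $\bar{p} = \sum_{S} \bar{c}_S \prod_{i\in S} x_i \in \F_2[x_1,\dots,x_n]$. For every $x \in \B^n$ we have $p(x) = f(x) \in \{0,1\} \subseteq \Z$, hence $\bar{p}(x) = p(x) \bmod 2 = f(x)$ in $\F_2$, so $\bar{p}$ represents $f$ over $\F_2$. Since the monomials appearing in $\bar{p}$ form a subset of those appearing in $p$, we conclude $\deg_2(f) \le \deg \bar{p} \le \deg p = \deg(f)$.

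There is essentially no obstacle here; the one point that needs care is the integrality of the coefficients $c_S$, which the Möbius-inversion formula makes immediate, together with the observation that reducing coefficients modulo $2$ can only delete monomials and never create new ones, so the degree cannot increase.
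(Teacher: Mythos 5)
Your proof is correct and is essentially the standard argument that the paper defers to by citation (O'Donnell, Proposition 6.23): the multilinear real representation of a $\{0,1\}$-valued function has integer coefficients by M\"obius inversion, and reducing them modulo $2$ yields an $\F_2$-representation of no larger degree. Nothing is missing.
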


Combining these with \Cref{thm:deg vs. Q}, we get that all nontrivial monotone graph properties $f$ satisfy $\Q(f) = \Omega(n)$, which is the statement of \Cref{thm:qAKR}.

\section{Approximate degree of read-once formulas}
\label{sec:readonce}

A read-once formula over the De Morgan basis, which consists of AND gates, OR gates, and NOT gates, is a formula in which each variable appears exactly once. Examples of read-once formulas include the $\AND_n$ and $\OR_n$ functions themselves, and compositions of these functions such as $\AND_n \circ \OR_n$. 

While it was already established in \cite{NS94} that $\adeg(\AND_n) = \adeg(\OR_n) = \Theta(\sqrt{n})$, the approximate degree of $\AND_{\sqrt{n}} \circ \OR_{\sqrt{n}}$ remained open until 2013 when it was shown that $\AND_{\sqrt{n}} \circ \OR_{\sqrt{n}} = \Theta(\sqrt{n})$~\cite{BT13,She13a} using a linear programming characterization of approximate degree. 
Notice that in both cases the approximate degree is the square root of the number of variables. 
This was later extended to constant-depth balanced read-once formulas~\cite{BT15} and constant-depth unbalanced read-once formulas~\cite{BBGK18}.
We finally resolve this question for all read-once formulas by establishing \Cref{thm:readonce} from the introduction:

\readonce*

Note that we already knew that for read-once formulas $f$, $\Q(f) = \Theta(\sqrt{n})$. The lower bound was established by Barnum and Saks~\cite{BS04} and the upper bound was established by Reichardt~\cite{Rei11}.

The upper bound in \Cref{thm:readonce} follows straightforwardly from Reichardt's upper bound~\cite{Rei11}, since approximate degree lower bounds quantum query complexity~\cite{BBCMW01}. The lower bound is a consequence of \Cref{thm:degadeg} because the degree of a read-once formula is equal to the number of variables.

\begin{lemma}
For any read-once formula $f:\B^n \to \B$, we have $\deg(f) = n$.
\end{lemma}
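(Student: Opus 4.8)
The plan is to prove the lower bound $\deg(f) \ge n$, the matching upper bound being the trivial fact that every Boolean function on $n$ variables has a multilinear representation of degree at most $n$. I would argue by structural induction on the read-once formula $f$. First, using De Morgan's laws, push all \textsc{not} gates down to the leaves, so that without loss of generality $f$ is built from \AND{} and \OR{} gates applied to (possibly negated) input literals, each variable occurring exactly once; this is harmless because $\deg(\neg g)=\deg(1-g)=\deg(g)$. The base case is a single literal $x_i$ or $\neg x_i$, of degree $1=n$.

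For the inductive step, write $f = \mathrm{OP}(g_1,\dots,g_k)$ with $\mathrm{OP}\in\{\AND,\OR\}$, where $g_1,\dots,g_k$ are read-once formulas on pairwise disjoint nonempty variable sets $S_1,\dots,S_k$ partitioning $[n]$. By the induction hypothesis $\deg(g_j)=|S_j|\ge 1$. As real multilinear polynomials, $\AND(g_1,\dots,g_k)=\prod_j g_j$ and $\OR(g_1,\dots,g_k)=1-\prod_j(1-g_j)$, using that each $g_j$ is $\{0,1\}$-valued. The key lemma is that if $p,q$ are nonzero multilinear polynomials on disjoint variable sets, then $\deg(pq)=\deg(p)+\deg(q)$: fixing maximal-degree monomials $m_p$ of $p$ and $m_q$ of $q$, the map $(m,m')\mapsto mm'$ from pairs of monomials of $p$ and $q$ to monomials of $pq$ is injective, since disjointness lets us recover $m$ and $m'$ from their product; hence the coefficient of $m_p m_q$ in $pq$ is exactly the product of the two leading coefficients, so it is nonzero and $m_p m_q$ has maximal degree. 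Iterating gives $\deg(\prod_j g_j)=\sum_j\deg(g_j)=n$, settling the \AND{} case.

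For the \OR{} case, expand $1-\prod_j(1-g_j)$. The product of the leading monomials of the $g_j$ is a monomial of degree $n$; it can only arise from the $\pm\prod_j g_j$ term of the expansion, and there it survives with nonzero coefficient. Every other monomial in the expansion is a product of the form $\prod_{j\in T}(\text{a monomial of }g_j)$ for some $T\subsetneq[k]$ and thus has degree at most $\sum_{j\in T}\deg(g_j) < \sum_{j=1}^k\deg(g_j)=n$, because each $\deg(g_j)\ge 1$. Hence $\deg(f)=n$ in both cases, completing the induction.

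The only point requiring genuine care — the ``main obstacle'', such as it is — is ruling out cancellation among the top-degree terms; this is precisely where read-onceness is used, via the injectivity of monomial multiplication across disjoint variable sets. Everything else is routine bookkeeping.
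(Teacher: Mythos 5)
Your proposal is correct and follows essentially the same route as the paper: reduce via De Morgan's laws, then do structural induction using the fact that multiplying multilinear polynomials over disjoint variable sets adds degrees with no cancellation of the top monomial. The only cosmetic difference is that you keep \textsc{or} gates and argue about the expansion of $1-\prod_j(1-g_j)$ directly (where applying the product lemma to $1-f=\prod_j(1-g_j)$ would be slightly shorter), whereas the paper eliminates \textsc{or} entirely and handles only \textsc{and} and \textsc{not}; you also spell out the no-cancellation step, which the paper asserts more tersely.
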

\begin{proof}
By using De Morgan's laws, we can assume that the read-once formula only contains AND and NOT gates. 
The base case of a formula with $n=1$ variable is easy, since the only such formulas as $x_1$ and $\overline{x_1}$, which have degree $1$.
More generally, $\deg(\bar{f}) = \deg(f)$, since if a polynomial $p(x)$ equals $f(x)$ for all $x\in \B^n$, then the polynomial $1-p$ equals $\bar{f}$. 

All that remains to be shown is that for read-once formulas $f:\B^n \to \B$ and $g:\B^m \to \B$, we have $\deg(f \wedge g) = \deg(f) + \deg(g)$, where $f \wedge g:\B^{n+m} \to \B$ is the function that evaluates to $f(x) \wedge g(y)$ for $x\in\B^n$ and $y \in \B^m$. The upper bound is obvious since multiplying the polynomials that represent $f$ and $g$ gives us a polynomial for $f \wedge g$ with degree equal to the sum of their degrees. However, since the polynomial representation of a Boolean function is unique, and there is no way of cancelling out higher degree terms by multiplying these polynomials (since the polynomials involve different sets of variables), we get that $\deg(f \wedge g) = \deg(f) + \deg(g)$. Induction on the structure of the formula completes the proof.
\end{proof}

\section{Open questions}\label{sec:open}

We saw that $\lambda(f)$ lower-bounds all the complexity measures in \Cref{fig:rel}, and is polynomially related to all of them. We know that $\deg(f) \leq \lambda(f)^2$ and $\s(f) \leq \lambda(f)^2$, and these relationships are optimal, but the optimal relationships between all other complexity measures and $\lambda(f)$ remain open. For example, perhaps $\bs(f) =O(\lambda(f)^2)$ or even $\RC(f) = O(\lambda(f)^2)$? The best relationship between block sensitivity and sensitivity also remains open.

It may also be possible to relate $\lambda(f)$ to the rational degree of $f$, which is the minimum degree of polynomials $p$ and $q$ such that for all $x\in \B^n$, $q(x) \neq 0$ and $f(x)=p(x)/q(x)$. It is unknown if rational degree is polynomially related to the complexity measures in \Cref{fig:rel}, although the question has been open for a long time~\cite{NS94}.

Another longstanding open problem is to show a quadratic relation between deterministic query complexity and block sensitivity:
\begin{conjecture}
    For all Boolean functions $f:\B^n \to \B$, we have $\D(f) = O(\bs(f)^2)$.
\end{conjecture}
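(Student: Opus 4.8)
The conjecture $\D(f)=O(\bs(f)^2)$ is a well-known open problem, so what follows is a plan of attack rather than a claimed proof. The two standard routes both stop at a cubic bound. The first is the decision-tree simulation underlying $\D(f)\le\bs(f)\,\C(f)$: the algorithm runs in at most $\bs(f)$ ``rounds,'' in each round picking an input $y$ consistent with the answers received so far and querying a minimal certificate of $y$ (of size at most $\C(f)$), while a charging argument against a family of disjoint sensitive blocks bounds the number of rounds by $\bs(f)$; since $\C(f)\le\bs(f)\,\s(f)\le\bs(f)^2$, this gives only $\D(f)=O(\bs(f)^3)$. The second route, Midrijanis' bound $\D(f)\le\bs(f)\deg(f)$ together with $\deg(f)\le\lambda(f)^2\le\s(f)^2\le\bs(f)^2$ (the first inequality being Huang's theorem, \Cref{thm:Huang}), again yields only $O(\bs(f)^3)$, and it cannot be improved this way: for $\AND_{\sqrt n}\circ\OR_{\sqrt n}$ one has $\deg=n=\bs^2$, so the degree factor is genuinely quadratic. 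Hence a proof of the conjecture must extract a saving that does not factor through $\deg$ or through worst-case certificate size.

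The plan is to keep the round-based simulation but amortize the per-round cost, proving that the \emph{total} number of queries over all rounds is $O(\bs(f)^2)$ even though a single certificate can have size $\Theta(\bs(f)^2)$ (as it does whenever $\C(f)=\Theta(\bs(f)^2)$, e.g.\ for the Gilmer--Saks--Srinivasan functions \cite{GSS13}). Concretely, one would design a potential function that simultaneously tracks (i) the number of disjoint sensitive blocks already ``consumed'' and (ii) a credit for bits that have been queried but not yet charged to a block, and then prove a single inequality of the form: in any round, (number of fresh bits queried) $\le c\cdot\bs(f)\cdot$(increase in the number of consumed blocks). The intuition is that a minimal certificate of $y$ is large only because $y$ has large block sensitivity, so deviating from $y$ inside that certificate ought to be attributable to correspondingly much block structure. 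Making this precise requires a careful choice of \emph{which} certificate to query --- for instance, one ``aligned'' with a fixed maximal family of disjoint sensitive blocks of $y$, rather than an arbitrary minimal certificate --- together with a careful bookkeeping of how the adversary's answers interact with that family.

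A complementary framing is an induction on $\bs(f)$: query a minimal $1$-certificate $C$ of some consistent $1$-input; if the answers confirm $C$ we are done, and otherwise branch on which of the bits of $C$ were flipped, recursing on subfunctions whose block sensitivity has dropped. The difficulty is that restricting $f$ need not decrease $\bs$ in a controlled way, so the recursion must be guided by the same block structure as above, and the right inductive invariant is probably something like $\D(f)=O(\bs_0(f)\,\bs_1(f))$, or a bound in terms of $\bs(f)$ times the number of disjoint sensitive blocks that survive a worst-case restriction, rather than $\D(f)=O(\bs(f)^2)$ stated directly.

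The main obstacle, in either framing, is precisely the point where all known proofs stall: a round may legitimately need to query $\Theta(\bs(f)^2)$ bits, and there is no a priori reason that doing so advances the algorithm by more than one ``unit'' toward termination. Closing the conjecture requires a genuinely global amortization that ``pre-pays'' for expensive certificates using structure that must be present whenever certificates are large, and it is unclear whether the combinatorics of sensitive blocks alone suffice or whether an extra ingredient --- such as spectral information about the sensitivity graph $G_f$ --- is needed. For this reason a sensible intermediate target is $\D(f)=O(\bs(f)^{3-\delta})$ for some absolute $\delta>0$, which would already be new and would follow from a partial version of the charging inequality above.
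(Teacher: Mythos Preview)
The paper does not prove this statement: it is listed as a \emph{conjecture} in the open problems section, with the remark that it ``remains open'' and that its truth would optimally resolve several relationships in the separations table (for instance it would imply $\D(f)=O(\R(f)^2)$). There is therefore no proof in the paper to compare against.

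You correctly identify the statement as a well-known open problem and explicitly label your write-up a plan of attack rather than a proof, which is the right stance. Your summary of the two standard cubic routes is accurate: $\D(f)\le\bs(f)\,\C(f)$ together with $\C(f)\le\bs(f)\,\s(f)\le\bs(f)^2$, and Midrijanis' $\D(f)\le\bs(f)\deg(f)$ together with Huang's $\deg(f)\le\lambda(f)^2\le\s(f)^2\le\bs(f)^2$, each give only $O(\bs(f)^3)$; and your observation that $\AND_{\sqrt n}\circ\OR_{\sqrt n}$ pins $\deg(f)$ at $\bs(f)^2$, so that no improvement can factor through $\deg$, is a genuine obstruction. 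The amortized-certificate idea and the $\bs_0(f)\,\bs_1(f)$ inductive invariant are reasonable research directions, but as you acknowledge, the key charging inequality is precisely the step no one knows how to carry out, so nothing here constitutes progress beyond what is already known. In short: your proposal is an honest and well-informed discussion of why the problem is hard, not a proof, and the paper agrees that no proof currently exists.
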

If this conjecture were true, it would optimally resolve several relationships in \Cref{tab:sep}, and would imply, for example, $\D(f) = O(\R(f)^2)$.

After settling the best relation between $\D(f)$ and $\Q(f)$, the next pressing question is to settle the best relation between  $\R(f)$ and $\Q(f)$.
Recently, two independent works~\cite{BS20,SSW20} showed a power $3$ separation between $\R(f)$ and $\Q(f)$, while the best known relationship is a power $4$ relationship (from this work). We conjecture that the upper bound can be improved.

\begin{conjecture}
    For all Boolean functions $f:\B^n \to \B$, we have $\R(f) = O(\Q(f)^3)$.
\end{conjecture}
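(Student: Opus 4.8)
The bound $\R(f)=O(\Q(f)^4)$ recorded above comes from $\R(f)\le\D(f)\le\bs(f)\,\deg(f)$ (Midrijanis, \cref{eq:Mid}) together with $\bs(f)=O(\Q(f)^2)$ and $\deg(f)=O(\Q(f)^2)$ (\Cref{thm:deg vs. Q}). To shave one power of $\Q$, I would try to replace the $\deg(f)$ factor --- which costs $\Q(f)^2$ --- by a measure costing only $\Q(f)$; that is, prove a \emph{randomized Midrijanis} inequality
\[
\R(f)=\tilde{O}\!\big(\bs(f)\cdot\mu(f)\big),\qquad \mu\in\{\adeg(f),\ \lambda(f)\}.
\]
Since $\adeg(f)\le 2\Q(f)$, $\lambda(f)=O(\Q(f))$ (\Cref{lem:lambdaQ}), and $\bs(f)=O(\Q(f)^2)$, either version gives $\R(f)=\tilde{O}(\Q(f)^3)$; an equivalent-strength target is $\R(f)=\tilde{O}(\bs(f)^{3/2})$. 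As sanity checks: $\bs(f)\,\mu(f)=O(\Q(f)^3)$ for \emph{every} $f$, so nothing here contradicts the cubic separations of \cite{BS20,SSW20}, and the bound is tight on $\AND_{\sqrt n}\circ\OR_{\sqrt n}$ for both choices of $\mu$.

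\textbf{Steps.} (1) Set up a recursive, Saks--Wigderson-style evaluation of $f$: maintain a partial assignment and, in each round, locate one sensitive block of the current restricted input and query it; a standard certificate/potential argument bounds the number of rounds by $O(\bs(f))$ (up to logarithmic factors) before the restricted subfunction becomes constant, and the randomized certificate bound $\RC(f)=O(\adeg(f)^2)$ \cite{KT16} can be used to control the fractional certificates that arise. (2) The crux is a randomized subroutine that, given a nonconstant subfunction, finds a sensitive coordinate or block using only $\tilde{O}(\mu(f))$ queries in expectation. For $\mu=\adeg$ I would exploit that an $O(\adeg)$-degree approximating polynomial, restricted to the current subcube, still points toward a sensitive direction, using random sampling of the subcube together with a low-degree concentration argument to isolate such a direction. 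For $\mu=\lambda$ I would instead use the spectral structure of the sensitivity graph $G_f$ --- sampling according to an estimate of a top eigenvector and arguing that $\tilde{O}(\lambda)$ probes detect a sensitive edge --- mirroring how $\lambda$ already drives the adversary bound (\Cref{lem:lambdaQ}). (3) Amplify to bounded error, multiply the round count by the per-round cost to obtain $\R(f)=\tilde{O}(\bs(f)\,\mu(f))$, and combine with $\bs(f)=O(\Q(f)^2)$ and $\mu(f)=O(\Q(f))$.

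\textbf{Main obstacle.} Step (2) is the entire difficulty. The deterministic Midrijanis bound works because $\deg(f)$ robustly \emph{certifies} progress --- fixing a coordinate outside a certificate provably lowers the degree --- whereas $\R(f)$ has no comparable polynomial-method handle, and a $\tilde{O}(\mu)$-query randomized procedure for finding a sensitive direction is precisely the kind of cubic-overhead quantum-to-classical simulation that the conjecture predicts; indeed, in the extremal regime $\bs(f)=\tilde{\Theta}(\Q(f)^2)$ and $\mu(f)=\tilde{\Theta}(\Q(f))$ the target inequality is no weaker than the conjecture itself. I therefore expect the realistic path is to first establish the bound in structured cases --- monotone functions, read-once formulas, functions of bounded alternation --- where the subroutine of step (2) can be carried out explicitly, or to settle for intermediate milestones such as $\R(f)=\tilde{O}(\Q(f)^{3.5})$ or the Las Vegas analogue $\R_0(f)=\tilde{O}(\Q(f)^3)$.
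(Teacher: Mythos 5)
This statement is Conjecture~2 in \Cref{sec:open} of the paper; it is explicitly posed as an \emph{open problem}, and the paper offers no proof of it. So there is nothing in the paper for your attempt to be checked against, and --- to your credit --- you do not actually claim to have proved it either. What you have written is a research program, and you are candid that its central step is unproven.

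As a research program, your calibration is accurate: the known $\R(f)=O(\Q(f)^4)$ route is indeed $\R(f)\le\D(f)\le\bs(f)\deg(f)$ (Midrijanis) together with $\bs(f)=O(\Q(f)^2)$ and $\deg(f)=O(\Q(f)^2)$, and a ``randomized Midrijanis'' bound of the form $\R(f)=\tilde{O}(\bs(f)\cdot\adeg(f))$ or $\R(f)=\tilde{O}(\bs(f)\cdot\lambda(f))$ (equivalently, $\R(f)=\tilde{O}(\bs(f)^{3/2})$) would suffice and would be tight on $\AND_{\sqrt n}\circ\OR_{\sqrt n}$. But you correctly flag that step~(2) --- a randomized $\tilde O(\mu)$-query subroutine for locating a sensitive direction, with $\mu\in\{\adeg,\lambda\}$ --- is not a technicality but essentially the whole conjecture in disguise: in the extremal regime $\bs(f)=\tilde\Theta(\Q(f)^2)$, $\mu(f)=\tilde\Theta(\Q(f))$, your target inequality and the conjecture coincide. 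Neither ``low-degree concentration on a random subcube'' nor ``sampling a top eigenvector of $G_f$'' is a constructive query algorithm; $\lambda$ in particular is a \emph{lower-bound} quantity that drives the adversary method, and there is no known way to turn spectral information about $G_f$ into a cheap classical search for a sensitive edge without already having non-trivial access to $f$. So there is no gap to point to in the usual sense --- only a frank acknowledgment that the key lemma is as hard as what it is supposed to prove. If you want to make progress, your suggested retreats to structured function classes, or to the Las Vegas analogue $\R_0(f)=\tilde O(\Q(f)^3)$, are the sensible places to start.
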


Finally, for the special case of {\em monotone} total Boolean functions $f$, Beals et al.~\cite{BBCMW01} already showed in 1998 that $\D(f)=O(\Q(f)^4)$. It would be interesting to know whether this can be improved, perhaps all the way to $\D(f)=O(\Q(f)^2)$.

\section*{Acknowledgements}

We thank Troy Lee for comments on the first version of his paper and for bringing to our attention Theorem 5.2 in \cite{LLS06}. We also thank Evgeny Vihrov for finding a bug in the argument in \Cref{lem:averagesen}, which has been fixed. We thank Alex Yu for sharing with us his proof of \Cref{thm:composition}.

\bibliographystyle{alphaurl}
\bibliography{bibs}

\appendix

\section{Properties of the measure \texorpdfstring{$\lambda(f)$}{lambda}}
\label{sec:properties}

We show that the measure $\lambda(f)$ satisfies various elegant
properties. First, it can be defined in multiple ways,
one of which was introduced by Koutsoupias
back in 1993 \cite{Kou93}. It also has a formulation
as a special case of the quantum adversary bound and hence can 
be expressed as as a semidefinite program closely related to that
of the quantum adversary bound. Due to this characterization, $\lambda(f)$
can be viewed as both a maximization problem and a minimization
problem. These equivalent formulations are described in \Cref{sec:equivalent}.

Second, we show that $\lambda(f)$ satisfies perfect composition:
$\lambda(f\circ g)=\lambda(f)\lambda(g)$. 
Third,
we show that $\lambda(f)\le\sqrt{\s_0(f)\s_1(f)}$, 
which was already observed by Laplante, Lee, and Szegedy~\cite{LLS06}
(though we give a slightly different proof).
Finally, we show lower bounds on $\lambda(f)$ and an optimal quadratic separation between $\lambda(f)$ and $\s(f)$.

\subsection{Equivalent formulations}\label{sec:equivalent}

\begin{theorem}\label{thm:equivalent}
For all Boolean functions $f\colon\B^n\to\B$, we have
\begin{equation}
\lambda(f)=\K(f)=\Adv_1(f)=\Adv_1^{\pm}(f),    
\end{equation}
where the measures $\K(f)$, $\Adv_1(f)$, and $\Adv_1^{\pm}(f)$
are defined below. Furthermore, $\Adv_1(f)$ itself
has several equivalent formulations:
$\Adv_1(f)\coloneqq\SA_1(f)=\SWA_1(f)=\MM_1(f)=\GSA_1(f)$.
\end{theorem}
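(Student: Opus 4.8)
The anchor of the argument is the identity $\Adv_1^{\pm}(f)=\lambda(f)$, from which everything else follows by comparing definitions. Recall that $\Adv_1^{\pm}(f)$ is obtained from the general (negative-weights) spectral adversary by additionally restricting the feasible matrices to be supported on distance-$1$ pairs: $\Gamma$ ranges over real symmetric matrices with $\Gamma\circ F=\Gamma$ and $\Gamma_{xy}=0$ whenever $|x\oplus y|\neq 1$, with objective $\norm{\Gamma}/\max_{i\in[n]}\norm{\Gamma\circ D_i}$. First I would observe that, since $\Gamma$ is supported on distance-$1$ pairs, $\Gamma\circ D_i$ is supported on the pairs $x,y$ with $x\oplus y=e_i$, which form a perfect matching of the hypercube; hence $\Gamma\circ D_i$ is block-diagonal with $2\times 2$ blocks and $\norm{\Gamma\circ D_i}=\max_{x\oplus y=e_i}|\Gamma_{xy}|$, so $\max_{i}\norm{\Gamma\circ D_i}=\max_{x,y}|\Gamma_{xy}|$. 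After rescaling we may assume $\max_{x,y}|\Gamma_{xy}|=1$, i.e.\ $|\Gamma_{xy}|\le (A_f)_{xy}$ entrywise. The key elementary fact is that if $|\Gamma_{xy}|\le M_{xy}$ entrywise for a nonnegative matrix $M$, then $\norm{\Gamma}\le\norm{M}$: choosing optimal unit vectors $u,v$ with $u^{\transpose}\Gamma v=\norm{\Gamma}$ and replacing their entries by absolute values keeps the vectors unit and only increases the bilinear form, so $\norm{\Gamma}\le \abs{u}^{\transpose} M \abs{v}\le \norm{M}$. Applied with $M=A_f$, this gives $\norm{\Gamma}\le\norm{A_f}=\lambda(f)$, with equality at $\Gamma=A_f$; hence $\Adv_1^{\pm}(f)=\lambda(f)$.

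Next, $\Adv_1(f)=\SA_1(f)$ is the same optimization with the extra constraint $\Gamma\ge 0$, so $\Adv_1(f)\le\Adv_1^{\pm}(f)=\lambda(f)$, while the witness $\Gamma=A_f$ (which is nonnegative) shows $\Adv_1(f)\ge\lambda(f)$; thus $\Adv_1(f)=\lambda(f)$ as well. (This in particular recovers $\lambda(f)\le\SA(f)$ from \Cref{lem:lambdaQ}, since $A_f$ is also a legal witness for the unrestricted $\SA(f)$.)

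For the chain $\SA_1(f)=\SWA_1(f)=\MM_1(f)=\GSA_1(f)$, I would invoke the known equivalences among the spectral, spectral-weighted, minimax, and generalized-spectral formulations of the positive-weights adversary bound~\cite{SSpalekS06}, and check that each reduction in that equivalence proof preserves the support of the weight matrix. Concretely, the reductions proceed by reweightings and by passing to an SDP dual, all of which act within the set of matrices supported on a fixed set of pairs; so imposing the distance-$1$ support restriction uniformly across all four programs leaves the four optima equal. This bookkeeping is routine given~\cite{SSpalekS06} but is the most tedious component of the proof.

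Finally, for $\K(f)=\lambda(f)$, I would recall Koutsoupias's weighted Khrapchenko measure~\cite{Kou93} (see also~\cite{LLS06}) and observe that, written out, it is exactly one of the minimax forms above (say $\MM_1(f)$, up to the rescaling normalization); equivalently, $\K(f)=\max_{w}\norm{A_f^{w}}/\max_e w_e$, where $w$ ranges over nonnegative weightings of the edges of the sensitivity graph and $A_f^{w}$ is the correspondingly weighted adjacency matrix. The domination lemma from the first paragraph then shows this maximum is attained at the uniform weighting and equals $\norm{A_f}=\lambda(f)$. Combining the four parts yields $\lambda(f)=\K(f)=\Adv_1(f)=\Adv_1^{\pm}(f)$ together with $\Adv_1=\SA_1=\SWA_1=\MM_1=\GSA_1$. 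I expect the main obstacle to be the third step --- verifying that the equivalence proofs of~\cite{SSpalekS06} survive the uniform distance-$1$ restriction --- together with pinning down Koutsoupias's original definition precisely enough to line it up with $\MM_1$ (all other steps are short and elementary).
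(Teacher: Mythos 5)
Your core argument is correct and is essentially the paper's: the observation that $\Gamma\circ D_i$ is supported on a perfect matching so that $\max_i\norm{\Gamma\circ D_i}$ reduces to the maximum entry of $\Gamma$, followed by the entrywise-domination/absolute-value argument showing the optimum is attained at $\Gamma=A_f$, is exactly how the paper establishes $\lambda(f)=\Adv_1(f)=\Adv_1^{\pm}(f)=\SA_1(f)$. Two caveats on the remaining pieces. First, your guessed definition of $\K(f)$ does not match the paper's: Koutsoupias's measure as used here is the \emph{unweighted} quantity $\max_{A\subseteq f^{-1}(0),\,B\subseteq f^{-1}(1)}\norm{Q}$ for the $0/1$ distance-one incidence matrix $Q$ between $A$ and $B$, and the identity $\K(f)=\lambda(f)$ follows simply because a submatrix has no larger spectral norm (so one may take $A=f^{-1}(0)$, $B=f^{-1}(1)$) and $A_f$ is block anti-diagonal with blocks $Q,Q^{\transpose}$; your weighted formulation would also work via your domination lemma, but it is not the definition in play. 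Second, the chain $\SA_1=\SWA_1=\MM_1=\GSA_1$ is where the paper spends most of its effort: it carries out the reweighting reductions and, for $\MM_1=\GSA_1$, a genuine SDP strong-duality argument (with a Cauchy--Schwarz step to extract a feasible weight scheme from the dual), rather than only checking that the support restriction is preserved. Your plan of importing the equivalences from the cited work and verifying support-preservation is sound and is indeed what the paper does in spirit, but as written it leaves the most substantive part of the proof unexecuted.
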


We now define all these measures before proving this theorem.

\paragraph{Koutsoupias complexity $\K(f)$.}
For a Boolean function $f$, let $A\subseteq f^{-1}(0)$,
and let $B\subseteq f^{-1}(1)$. Let $Q$ be the matrix
with rows and columns labeled by $A$ and $B$ respectively,
with $Q[x,y]=1$ if the Hamming distance of $x$ and $y$ is $1$,
and $Q[x,y]=0$ otherwise. Koutsoupias \cite{Kou93}
observed that $\|Q\|^2$ is a lower bound on formula size,
for every such choice of $A$ and $B$. We define
$\K(f)$ to be the maximum value of $\|Q\|$ over choices
of $A$ and $B$.
Thus $\K(f)^2$ is a lower bound on the formula size of $f$.

\paragraph{Single-bit positive adversary $\Adv_1(f)$.}
We define $\Adv_1(f)$ as a version of the adversary bound
where we are only allowed to put nonzero weight on input pairs
$(x,y)$ where $f(x)\ne f(y)$ and the Hamming distance between
$x$ and $y$ is exactly $1$. We will define $\Adv_1(f)$
in terms of the spectral adversary version, which we
also denote by $\SA_1(f)$.
$\Adv_1(f) = \SA_1(f)$ is defined as the maximum of
\begin{equation}
\frac{\|\Gamma\|}{\max_{i\in[n]}\|\Gamma\circ D_i\|}    
\end{equation}
over matrices $\Gamma$ of a special form. We require $\Gamma$
satisfy the following: (1) its entries are nonnegative reals;
(2) its rows and columns are indexed by $\Dom(f)$;
(3) $\Gamma[x,y]=0$ whenever $f(x)=f(y)$; 
(4) $\Gamma[x,y]=0$ whenever the Hamming distance of $x$ and $y$
is not $1$; and (5) $\Gamma$ is not all $0$.
In the above expression, $\circ$ refers to the Hadamard
(entrywise) product, $\Dom(f)$ is the domain of $f$, 
and $D_i$ is the $\B$-valued matrix with $D_i[x,y]=1$
if and only if $x_i\ne y_i$.

\paragraph{Single-bit negative adversary $\Adv_1^{\pm}(f)$.}
We define $\Adv_1^{\pm}(f)$ using the same definition as
$\Adv_1(f)$ above, except that the matrix $\Gamma$ is allowed
to have negative entries. Note that since
this is a relaxation of the
conditions on $\Gamma$, we clearly have
$\Adv_1^\pm(f)\ge\Adv_1(f)$.

\paragraph{Single-bit strong weighted adversary $\SWA_1(f)$.}
We define $\SWA_1(f)$ as a single-bit version of the
strong weighted adversary method $\SWA(f)$ from \cite{SSpalekS06}.
For this definition, we say a weight function
$w\colon\Dom(f)\times\Dom(f)\to[0,\infty)$ is feasible
if it is symmetric (i.e., $w(x,y)=w(y,x)$) and
if it satisfies the conditions on $\Gamma$ above
(i.e., it places weight $0$ on a pair $(x,y)$
unless both $f(x)\ne f(y)$ and the Hamming distance
between $x$ and $y$ is $1$).
We view such a feasible weight scheme $w$ as the weights
on a weighted bipartite graph, where the left vertex set is
$f^{-1}(0)$ and the right vertex set is $f^{-1}(1)$.
We let $wt(x)\coloneqq\sum_y w(x,y)$ denote the weighted degree of
$x$ in this graph, i.e., the sum of the weights of its incident
edges. Then $\SWA_1(f)$ is defined as the maximum, over
such feasible weight schemes $w$, of
\begin{equation}
\min_{x,i:w(x,x^i)>0}\frac{\sqrt{wt(x)wt(x^i)}}{w(x,x^i)}.    
\end{equation}
Here $x$ ranges over $\Dom(f)$, $i$ ranges over $[n]$, and
$x^i$ denotes the string $x$ with bit $i$ flipped.\footnote{%
Readers familiar with the adversary bound should note that
this definition is analogous a weighted version
of Ambainis's original adversary method; in the original method,
the denominator was the geometric mean of (a) the weight
of the neighbors of $x$ with disagree with $x$ at $i$,
and (b) the weight of the neighbors of $x^i$ which disagree
with $x^i$ at $i$; but in our case, both (a) and (b) are simply
$w(x,x^i)$, since $x^i$ is the only string that disagrees
with $x$ on bit $i$ and is connected to $x$ in the
bipartite graph.}

\paragraph{Single-bit minimax adversary $\MM_1(f)$.}
Unlike the other forms, we define $\MM_1(f)$ as a minimization
problem rather than a maximization problem. We
say a weight function $w\colon\Dom(f)\times[n]\to[0,\infty)$
is feasible if for all $x,y\in\Dom(f)$ with
$f(x)\ne f(y)$ and Hamming distance $1$, we have
$w(x,i)w(y,i)\ge 1$, where $i$ is the bit on which
$x$ and $y$ disagree. $\MM_1(f)$ is defined as the minimum,
over such feasible weight schemes $w$, of
\begin{equation}
    \max_{x\in\Dom(f)}\sum_{i\in[n]} w(x,i).
\end{equation}

\paragraph{Semidefinite program version $\GSA_1(f)$.}
We define $\GSA_1(f)$ to be the optimal value of the following
semidefinite program.
\begin{equation}
\begin{array}{lll}
\text{maximize}  &  \langle Z,A_f\rangle &\\
\text{subject to}& \Delta\mbox{ is diagonal} & \\
                & \tr\Delta = 1 & \\
                & \Delta-Z\circ D_i\succeq 0 & \forall i\in[n]\\
                & Z \ge 0 &
\end{array}
\end{equation}
Here $Z$ and $\Delta$ are variable matrices with rows
and columns indexed by $\Dom(f)$, $A_f$ is the $\B$-matrix with
$A_f[x,y]=1$ if and only if both $f(x)\ne f(y)$
and $(x,y)$ have Hamming distance $1$, and $D_i$
is the $\B$-matrix with $D_i[x,i]=1$ if and only if $x_i\ne y_i$.

We now prove \Cref{thm:equivalent}.

\begin{proof}
Recall that in the definition of $\K(f)$,
we picked $A\subseteq f^{-1}(0)$ and $B\subseteq f^{-1}(1)$
and defined the resulting matrix $Q$. Since the spectral norm
of a submatrix is always smaller than or equal to the spectral
norm of the original matrix, we can always assume without
loss of generality that $A=f^{-1}(0)$ and $B=f^{-1}(1)$.
Then $\K(f)=\|Q\|$ for the resulting matrix $Q$ with rows
and columns indexed by $f^{-1}(1)$ and $f^{-1}(0)$ respectively.
Now, recall that $A_f$ was the adjacency matrix of the graph
$G_f$, which has an edge between $x$ and $y$ if $f(x)\ne f(y)$
and the Hamming distance between $x$ and $y$ is $1$.
The rows and columns of $A_f$ are each indexed by $\Dom(f)$.
By rearranging them, we can make $A_f$ be block diagonal
with blocks equal to $Q$ and $Q^\dagger$. From there it
follows that $\|A_f\|=\|Q\|$, so $\lambda(f)=\K(f)$.

Next, recall that $\Adv_1(f)$ is defined as the maximum
ratio $\|\Gamma\|/\max_i\|\Gamma\circ D_i\|$ over valid
choices of $\Gamma$. Note that since $\Gamma[x,y]$
can only be nonzero if $x$ and $y$ disagree on one bit,
$\Gamma\circ D_i$ is nonzero only on pairs $(x,y)$
which disagree exactly on bit $i$. In other words,
if $P_i$ denotes the $\B$-valued matrix with $P_i[x,y]=1$
if and only if $x$ and $y$ disagree on bit $i$ and only on $i$,
then $\Gamma\circ D_i$ is nonzero only in entries where $P_i$
is $1$. Now, note that $P_i$ is a permutation matrix.
Hence, by rearranging the rows and columns of $\Gamma\circ D_i$,
we can get it to be diagonal. This means $\|\Gamma\circ D_i\|$
is the maximum entry of $\Gamma\circ D_i$, and hence
$\max_i\|\Gamma\circ D_i\|$ is the maximum entry of $\Gamma$.
It follows that $\Adv_1(f)$ is the maximum of $\|\Gamma\|$
over feasible matrices $\Gamma$ with $\max(\Gamma)\le 1$,
where $\max(\Gamma)=\max_{ij}|\Gamma_{ij}|$. This argument also holds for
$\Adv_1^\pm(f)$, which is the maximum of $\|\Gamma\|$
over feasible (possibly negative) matrices $\Gamma$
with $\max(\Gamma)\le 1$.

Next, observe that negative weights never help for maximizing
$\|\Gamma\|$: indeed, if we had $\Gamma$ with negative entries
maximizing $\|\Gamma\|$, then we would have vectors $u$ and $v$
with $\|u\|_2=\|v\|_2=1$ and $u^{\mathsf{T}}\Gamma v=\|\Gamma\|$;
but then replacing $u$ and $v$ with their entry-wise absolute
values, and replacing $\Gamma$ with its entry-wise absolute
value $\Gamma'$, we clearly get that $\|\Gamma'\|\ge\|\Gamma\|$.
However, $\max(\Gamma')=\max(\Gamma)$, so $\Gamma'$
remains feasible. This means we can always take the maximizing
matrix $\Gamma$ to be nonnegative, so $\Adv_1^\pm(f)=\Adv_1(f)$.
We can similarly assume that the unit vectors $u$ and $v$
maximizing $u^{\mathsf{T}}\Gamma v$ are nonnegative.

Finally, consider the maximizing matrix $\Gamma$ and the
maximizing unit vectors $u$ and $v$, all nonnegative,
and satisfying $\max(\Gamma)\le 1$. Note that
the expression $u^{\mathsf{T}}\Gamma v$ is nondecreasing in the entries
of $\Gamma$, since everything is nonnegative. Hence
to maximize $u^{\mathsf{T}}\Gamma v$, we can always take every nonzero
entry of $\Gamma$ to be $1$, since this maintains
$\max(\Gamma)\le 1$. In other words, the matrix maximizing
$\|\Gamma\|$ will always simply be $A_f$, and hence
$\Adv_1(f)$ is always exactly equal to $\lambda(f)$.

It remains to show that $\SA_1(f)=\SWA_1(f)=\MM_1(f)=\GSA_1(f)$.
The proof of this essentially
follows the arguments in \cite{SSpalekS06} for the regular
positive adversary, though some steps are a little simpler.
To start, we've seen that $\SA_1(f)=\lambda(f)$. Since
$A_f$ is symmetric, we have $\lambda(f)=v^{\mathsf{T}} A_f v$ for
some unit vector $v$, which we've established is nonnegative;
this vector is also an eigenvector, so $A_f v=\lambda(f)v$.
Consider the weight scheme $w(x,y)=v[x]v[y]A_f[x,y]$. Then
$wt(x)=\sum_y v[x]v[y]A_f[x,y]=v[x](A_f v)[x]=\lambda(f)v[x]^2$.
Hence if $w(x,x^i)>0$, we have
\begin{equation}
    \frac{\sqrt{wt(x)wt(x^i)}}{w(x,x^i)}=\frac{\lambda(f)v[x]v[x^i]}{v[x]v[x^i]A_f[x,x^i]}=\lambda(f).
\end{equation}
This means $\SWA_1(f)\ge\SA_1(f)$.
In the other direction, let $w$ be a feasible weight scheme
for $\SWA_1(f)$, let $\Gamma[x,y]=w(x,y)/\sqrt{wt(x)wt(y)}$,
and let $v[x]=\sqrt{wt(x)/W}$, where $W=\sum_x wt(x)$.
Then $\|v\|_2^2=\sum_x wt(x)/W=1$, and
\begin{equation}
    v^{\mathsf{T}}\Gamma v
=\sum_{x,y} \sqrt{wt(x)wt(y)}w(x,y)/W\sqrt{wt(x)wt(y)}
=(1/W)\sum_{x,y}w(x,y)=1.
\end{equation}
Hence $\|\Gamma\|\ge 1$. On the other hand, we have
$\max(\Gamma)=\max_{x,y} w(x,y)/\sqrt{wt(x)wt(y)}$.
This means that the ratio $\|\Gamma\|/\max(\Gamma)$
equals $\min_{x,y:w(x,y)>0}\sqrt{wt(x)wt(y)}/w(x,y)$,
which is $\SWA_1(f)$; thus $\SA_1(f)\ge\SWA_1(f)$.

Next we examine $\GSA_1(f)$. Consider a solution
$(Z,\Delta)$ to this semidefinite
program and define $\Gamma=Z\circ M\circ A_f$,
where $M$ is defined as $M=uu^{\mathsf{T}}$ and $u$ is defined by
$u[x]=1/\sqrt{\Delta[x,x]}$ when $\Delta[x,x]>0$
and $u[x]=0$ otherwise. Recall that $\Delta$ is diagonal
and that $\Delta-Z\circ D_i\succeq 0$ for all $i$.
Since positive semidefinite matrices are symmetric,
$Z\circ D_i$ must be symmetric for all $i$, so $Z$
is symmetric. Moreover, the diagonal of $Z\circ D_i$
is all zeros, so we must have $\Delta\ge 0$.
Further, if $\Delta[x,x]=0$ for some $x$, we must have
the corresponding row and column of $Z$ be all zeros.
If we let $\Delta'$ and $Z'$ be $\Delta$ and $Z$ with the
all-zero rows and columns deleted, then it is clear that
$\Delta-Z\circ D_i\succeq 0$ if and only if
$\Delta'-Z'\circ D_i\succeq 0$. Defining $M'$ as $M$
with those rows and columns deleted and $u'$ as $u$ with
those entries deleted, we have $M'=u'(u')^{\mathsf{T}}>0$.
Observe that $\Delta'-Z'\circ D_i\succeq 0$
if and only if $v^{\mathsf{T}}(\Delta'-Z'\circ D_i)v\ge 0$ for all
vectors $v$, which is if and only if
$(v\circ u')^{\mathsf{T}}(\Delta'-Z'\circ D_i)(v\circ u')\ge 0$
for all vectors $v$ (since we have $u'>0$). This, in turn,
is equivalent to $M'\circ (\Delta'-Z'\circ D_i)\succeq 0$.
Since $M'\circ \Delta'=I$, this is equivalent to
$I-M'\circ Z'\circ D_i\succeq 0$,
which is in turn equivalent to $I-M\circ Z\circ D_i\succeq 0$.
Since $Z\ge 0$ and we are maximizing $\langle Z,A_f\rangle$,
it never helps for $Z$ to have nonzero entries in places
where $A_f$ is $0$. Hence we can assume without loss of generality
that $Z=Z\circ A_f$, which means the constraint becomes
$I-\Gamma\circ D_i\succeq 0$, where we defined
$\Gamma=M\circ Z\circ A_f$. We thus have
$\|\Gamma\circ D_i\|\le 1$. On the other hand, letting
$v[x]=\sqrt{\Delta[x,x]}$, we have
\begin{equation}
    v^{\mathsf{T}}\Gamma v=\sum_{x,y}v[x]v[y]M[x,y]Z[x,y]A_f[x,y]
=\sum_{x,y:\Delta[x,x],\Delta[y,y]>0}Z[x,y]A_f[x,y]
=\langle Z,A_f\rangle.
\end{equation}
Hence $\SA_1(f)\ge\GSA_1(f)$.
The reduction in the other direction works similarly:
start with an adversary matrix $\Gamma$ with
$\max(\Gamma)\le 1$, and let $v$ be its
principle eigenvector. Then set $Z=\Gamma\circ (vv^{\mathsf{T}})$
and $\Delta=I\circ (vv^{\mathsf{T}})$. Then $I-\Gamma\circ D_i\succeq 0$,
which implies that $\Delta-Z\circ D_i\succeq 0$.
We also have $\tr\Delta=1$, $Z\ge 0$, and
$\langle Z,A_f\rangle=\|\Gamma\|$.

Finally, we handle $\MM_1(f)$. To do so, we first take the
dual of the semidefinite program for $\GSA_1(f)$.
This dual has the form
\begin{equation}
\begin{array}{lll}
\text{minimize}  &  \alpha &\\
\text{subject to}& \sum_i R_i\circ I\le\alpha I & \\
                & \sum_i R_i\circ D_i\ge A_f & \\
                & R_i\succeq 0 & \forall i\in[n]
\end{array}
\end{equation}
where the variables are $\alpha$ (a scalar) and matrices
$R_i$, each with rows and columns indexed by $\Dom(f)$.
Strong duality follows since when
$A_f$ is not all zeros, and the semidefinite program
in $\GSA_1(f)$ has a strictly feasible solution
(just take $Z$ to equal $\epsilon A_f$ for a small enough
positive constant $\epsilon$, and take $\Delta=I/|\Dom(f)|$).
This means the optimal solution of the minimization
problem above equals $\Adv_1(f)$. It remains
to show that this optimal solution $T$ also equals
$\MM_1(f)$.

Let $\alpha$ and $\{R_i\}_i$ be a feasible solution to
the semidefinite program above. Since $R_i\succeq 0$,
we have $R_i=X_iX_i^{\mathsf{T}}$ for some matrix $X_i$.
Define $w(x,i)=R_i[x,x]$.
Note that we also have $w(x,i)=\sum_a X_i[x,a]^2$.
Then by Cauchy--Schwarz,
$w(x,i)w(y,i)\ge
\left(\sum_a X_i[x,a]X_i[y,a]\right)^2
=(X_iX_i^{\mathsf{T}})[x,y]^2=R_i[x,y]^2$.
If $x$ and $y$ are such that $A_f[x,y]=1$, then they
disagree in only one bit $i$, and hence $D_i[x,y]=1$
for that $i$ and $D_j[x,y]=0$ for all $j\ne i$.
Since we have $\sum_i R_i\circ D_i\ge A_f$,
we conclude that for all such pairs $(x,y)$,
we have $w(x,i)w(y,i)\ge R_i[x,y]^2\ge A_f[x,y]^2=1$
on the bit $i$ where $x$ and $y$ differ; hence
the weight scheme $w$ is feasible.
Furthermore, for any $x$,
$\sum_i w(x,i)=\sum_i R_i[x,x]\le\alpha I[x,x]=\alpha$.
Hence $\MM_1(f)$ is at most the optimal value of this
semidefinite program.

In the other direction, consider a feasible weight scheme
$w$, and define $R_i[x,y]=\sqrt{w(x,i)w(y,i)}$.
Then $R_i=w(\cdot,i)w(\cdot,i)^{\mathsf{T}}$, where we treat
$w(\cdot,i)$ as a vector; hence $R_i\succeq 0$.
Moreover, $R_i\ge 0$, and for a pair $(x,y)$
with $A_f[x,y]=1$, there is some $i$ which is the unique
bit they disagree on, and hence $w(x,i)w(y,i)\ge 1$;
but this means that $R_i[x,y]\ge 1$, and so
$(R_i\cdot D_i)[x,y]\ge 1=A_f[x,y]$.
Finally, $\sum_i R_i[x,x]=\sum_i w(x,i)$,
which means that $\sum_i R_i\circ I\le \MM_1(f)\cdot I$, as desired.
\end{proof}

\subsection{Composition theorem}

Just like we have perfect composition theorems for degree (i.e., $\deg(f\circ g) = \deg(f)\deg(g)$) and deterministic query complexity (i.e., $\D(f \circ g) = \D(f)\D(g)$), we can show one for $\lambda(f)=\Adv_1(f)$.

\begin{theorem}\label{thm:composition}
For all (possibly partial) functions $f$ and $g$,
we have $\Adv_1(f\circ g)=\Adv_1(f)\Adv_1(g)$.
\end{theorem}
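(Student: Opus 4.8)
The plan is to prove the two inequalities $\Adv_1(f\circ g)\ge\Adv_1(f)\Adv_1(g)$ and $\Adv_1(f\circ g)\le\Adv_1(f)\Adv_1(g)$ separately, exploiting two different characterizations from \Cref{thm:equivalent}: the spectral form $\Adv_1=\lambda=\|A_\cdot\|$ for the lower bound, and the minimax form $\Adv_1=\MM_1$ for the upper bound. (The equivalences are stated for total functions but their proofs apply verbatim to partial ones, with the only caveat that strong duality for $\GSA_1$ needs $A_f$ not identically zero; the cases $\Adv_1(f)=0$ or $\Adv_1(g)=0$ are trivial, since then $f\circ g$ also has no sensitive edges, so I assume both are positive.) The common starting point is the structure of the sensitivity graph of $h:=f\circ g$ on $\Dom(h)=\{(z^1,\dots,z^n):z^j\in\Dom(g),\ (g(z^1),\dots,g(z^n))\in\Dom(f)\}$: writing $b(z):=(g(z^1),\dots,g(z^n))$, two vertices $z,z'$ differing only in bit $i$ of coordinate $j_0$ satisfy $A_h[z,z']=A_f[b(z),b(z')]\cdot A_g[z^{j_0},z'^{j_0}]$; in particular an edge of $G_h$ requires both an edge of $G_g$ in the active coordinate and the induced edge of $G_f$. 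I would verify this identity first.

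For the lower bound, let $\delta_f\ge 0$ and $\delta_g\ge 0$ be nonnegative principal eigenvectors of $A_f$ and $A_g$ (they exist by Perron--Frobenius as these matrices are nonnegative and symmetric), so $A_f\delta_f=\Adv_1(f)\delta_f$ and $A_g\delta_g=\Adv_1(g)\delta_g$. Define $\delta$ on $\Dom(h)$ by $\delta[z]=\delta_f[b(z)]\prod_{j=1}^n\delta_g[z^j]$. I would then compute $A_h\delta$ using the identity above: summing over $G_h$-neighbors $z'$ of $z$ and grouping by the coordinate $j_0$ where they differ, the inner sum over bit-flips inside coordinate $j_0$ collapses — because every $G_g$-neighbor of $z^{j_0}$ has the same ($g$-opposite) value, the $f$-input of all of them is the common string $b(z)\oplus e_{j_0}$ — giving $\Adv_1(g)\,\delta_g[z^{j_0}]\cdot A_f[b(z),b(z)\oplus e_{j_0}]\,\delta_f[b(z)\oplus e_{j_0}]$; the factor $\prod_{j\ne j_0}\delta_g[z^j]$ then recombines with $\delta_g[z^{j_0}]$ to reconstitute $\prod_j\delta_g[z^j]$, and the remaining outer sum over $j_0$ is $(A_f\delta_f)[b(z)]=\Adv_1(f)\,\delta_f[b(z)]$. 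Hence $A_h\delta=\Adv_1(f)\Adv_1(g)\,\delta$. Since $A_h$ is real symmetric and $\delta\ne 0$ — when $\Adv_1(g)>0$ the eigenvector $\delta_g$ is supported on both $g^{-1}(0)$ and $g^{-1}(1)$ because $A_g$ maps one side to the other, so picking each $z^j$ in the support of $\delta_g$ with $g(z^j)=b_j$ for any $b$ in the support of $\delta_f$ yields a nonzero entry of $\delta$ — we conclude $\lambda(h)\ge\Adv_1(f)\Adv_1(g)$.

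For the upper bound, take optimal feasible weight schemes $w_f$ for $\MM_1(f)$ and $w_g$ for $\MM_1(g)$, and define the product scheme $w_h(z,(j,i))=w_f(b(z),j)\,w_g(z^j,i)$, where $(j,i)$ indexes bit $i$ of coordinate $j$. Feasibility follows by multiplying the two feasibility inequalities: if $z,z'\in\Dom(h)$ are $G_h$-adjacent and differ on bit $i$ of coordinate $j$, then $z^j,z'^j$ are $G_g$-adjacent and $b(z),b(z')$ are $G_f$-adjacent, so $w_h(z,(j,i))\,w_h(z',(j,i))=\big(w_f(b(z),j)w_f(b(z'),j)\big)\big(w_g(z^j,i)w_g(z'^j,i)\big)\ge 1$. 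For the objective, for every $z$ we get $\sum_{j,i}w_h(z,(j,i))=\sum_j w_f(b(z),j)\big(\sum_i w_g(z^j,i)\big)\le\MM_1(g)\sum_j w_f(b(z),j)\le\MM_1(g)\,\MM_1(f)$. Thus $\MM_1(h)\le\MM_1(f)\MM_1(g)$, and combining this with $\Adv_1(h)\le\MM_1(h)$ (weak duality of the $\GSA_1$ program, which needs no strict feasibility) and $\MM_1(f)=\Adv_1(f)$, $\MM_1(g)=\Adv_1(g)$ gives $\Adv_1(h)\le\Adv_1(f)\Adv_1(g)$, completing the proof.

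I expect the main difficulty to be purely in the bookkeeping of the lower bound: verifying that the double sum defining $(A_h\delta)[z]$ collapses exactly as claimed — in particular tracking which domain constraints ($z'\in\Dom(h)$, $b(z)\oplus e_{j_0}\in\Dom(f)$) are automatically handled by the fact that $\delta_f$ and $A_f$ are indexed only over $\Dom(f)$ (and likewise for $g$) — and confirming the nonvanishing of $\delta$. The upper bound is essentially a one-line computation once the right product weight scheme is written down; an alternative but equivalent route would combine dual $\GSA_1$ SDP solutions, but the $\MM_1$ formulation is the most transparent.
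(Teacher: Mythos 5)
Your proposal is correct and takes essentially the same route as the paper: the lower bound is the same product of nonnegative principal eigenvectors (the paper evaluates the quadratic form $\alpha^{\mathsf{T}}A_{f\circ g}\alpha$ after normalizing $\|v_0\|_2^2=\|v_1\|_2^2=\|u_0\|_2^2=\|u_1\|_2^2=1/2$, whereas you verify the marginally cleaner exact eigenvector identity $A_h\delta=\lambda(f)\lambda(g)\delta$, which avoids that normalization), and the upper bound uses the identical product weight scheme for $\MM_1$. Your extra care about the partial-function domain constraints, the nonvanishing of $\delta$, and needing only weak duality for $\Adv_1\le\MM_1$ is all sound.
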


\begin{proof}
For the lower bound direction, note that $A_f$ is the matrix
with $A_f[x,y]=1$ if $f(x)\ne f(y)$ and the Hamming distance
between $x$ and $y$ is $1$ (with $A_f[x,y]=0$ otherwise).
$A_g$ is defined similarly. We wish to lower bound
$\|A_{f\circ g}\|$. To do so, we first introduce some notation.
Let $n$ be the input size of $f$ and let $m$ be the input size
of $g$. For $nm$-bit strings $x$ and $y$, we write
$x=x^{(1)}x^{(2)}\dots x^{(n)}$
and $y=y^{(1)}y^{(2)}\dots y^{(n)}$,
where $x^{(i)}$ and $y^{(i)}$ are $m$-bit strings.
We write $g(x)$ as shorthand for the string
$g(x^{(1)})g(x^{(2)})\dots g(x^{(n)})$,
and similarly for $g(y)$. For a string $x\in\{0,1\}^{nm}$,
we let $x^{(i,j)}$ denote the string $x$ with
the bit at position $(i,j)$ flipped; in particular,
the Hamming distance between $x$ and $x^{(i,j)}$ is $1$.
We will also use $\mathsf{s}_f(z)$ to denote the set of
sensitive bits of the string $z$ with respect to function $f$.

Let $v$ be the principal eigenvector of $A_f$ and let $u$
be the principal eigenvector of $A_g$. We can assume
they are nonnegative. Then $\|v\|_2=\|u\|_2=1$,
$A_fv=\lambda(f)v$, and $A_gu=\lambda(g)u$.
Let $v_0$ denote the component of $v$ on $0$-inputs
of $f$ and let $v_1$ be the component for $1$-inputs,
so that $v=[v_0,v_1]$. Define $u_0$ and $u_1$
similarly. Then since $A_f$ never has a $1$
in a position $(x,y)$ where $f(x)=f(y)$,
it decomposes into blocks of the form $[0, B_f; B_f^{\mathsf{T}}, 0]$, and
we must have $A_fv_0=\lambda(f)v_1$ and
$A_fv_1=\lambda(f)v_0$. Similarly,
$A_gu_0=\lambda(g)u_1$ and $A_gu_1=\lambda(g)u_0$.
We can assume without loss of generality that
$\|u_0\|_2^2=\|u_1\|_2^2=\|v_0\|_2^2=\|v_1\|_2^2=1/2$,
because otherwise, rebalancing the weights of $v_0$
and $v_1$ could increase $v^\mathsf{T} A_f v$ without
increasing $\|v\|$ (and similarly for $u_0$ and $u_1$).

Define the vector $\alpha$ with one entry for each input
to $f\circ g$ by
\begin{equation}\alpha[x]\coloneqq 2^{n/2} v[g(x)]u[x^{(1)}]u[x^{(2)}]\dots u[x^{(n)}].\end{equation}
Then
\begin{equation}\|\alpha\|_2^2=\sum_x\alpha[x]^2
=2^n\sum_{z\in\Dom(f)}
\sum_{y_1\in g^{-1}(z_1)}\dots\sum_{y_n\in g^{-1}(z_n)}
v[z]^2u[y_1]^2\dots u[y_n]^2\end{equation}
\begin{equation}=2^n\sum_{z\in\Dom(f)} v[z]^2\|u_{z_1}\|_2^2\dots\|u_{z_n}\|_2^2
=1.\end{equation}
We also have
\begin{align*}
\alpha^{\mathsf{T}}A_{f\circ g}\alpha
&=\sum_{x,x'}\alpha[x]\alpha[x']A_{f\circ g}[x,x']\\
&=\sum_{x}\sum_{i\in[n],j\in[m]}
    \alpha[x]\alpha[x^{(i,j)}]A_{f\circ g}[x,x^{(i,j)}]\\
&=\sum_{z\in\Dom(f)}
    \sum_{y_1\in g^{-1}(z_1)}\dots\sum_{y_n\in g^{-1}(z_n)}
    \sum_{i\in\s_f(z)}\sum_{j\in \s_g(y_i)}
    \alpha[y_1\dots y_n]\alpha[y_1\dots y_i^j\dots y_n]\\
&=\sum_{z\in\Dom(f)}
    \sum_{y_1\in g^{-1}(z_1)}\dots\sum_{y_n\in g^{-1}(z_n)}
    \sum_{i\in[n]}\sum_{j\in [m]}
    \alpha[y_1\dots y_n]\alpha[y_1\dots y_i^j\dots y_n]
    A_f[z,z^i]A_g[y_i,y_i^j]\\
&=\sum_{z\in\Dom(f)}\sum_{i\in[n]}v[z]v[z^i]A_f[z,z^i]\gamma(z,i),
\end{align*}
where
\begin{align*}
\gamma(z,i)&=\sum_{y_1\in g^{-1}(z_1)}\dots\sum_{y_n\in g^{-1}(z_n)}
    \sum_{j\in [m]}
    2^n u[y_1]\dots u[y_n]\cdot u[y_1]\dots u[y_i^j]\dots u[y_n]
    A_g[y_i,y_i^j]\\
&=\sum_{y_i\in g^{-1}(z_i)}\sum_{j\in [m]}
    2u[y_i]u[y_i^j]A_g[y_i,y_i^j]\prod_{k\ne i} 2\|u_{z_k}\|_2^2\\
&=2u_0^{\mathsf{T}}B_g u_1\\
&=\lambda(g).
\end{align*}
Hence
\begin{equation}\alpha^{\mathsf{T}}A_{f\circ g}\alpha
=\lambda(g)\sum_{z\in\Dom(f)}\sum_{i\in[n]}v[z]v[z^i]A_f[z,z^i]
=\lambda(g)v^{\mathsf{T}}A_fv=\lambda(g)\lambda(f).\end{equation}
This shows that $\lambda(f\circ g)\ge\lambda(f)\lambda(g)$.

For the upper bound direction, we use $\MM_1$.
Let $w_f$ be a feasible weight scheme for $f$, and let
$w_g$ be a feasible weight scheme for $g$. Define
weight scheme $w$ for $f\circ g$ by
$w(x,(i,j))=w_f(g(x),i)\cdot w_g(x^{(i)},j)$.
Then clearly $w$ is nonnegative and for each $x$,
\begin{equation}
\sum_{i,j} w(x,(i,j))=\sum_i w_f(g(x),i)\sum_j w_g(x^{(i)},j)\le wt_f(x)\max_i wt_g(x^{(i)}),    
\end{equation}
where we use $wt_f(z)$ to denote $\sum_i w_f(z,i)$ and similarly
for $wt_g(z)$. Hence the objective value of $w$ is at most
the product of the objective values of $w_f$ and $w_g$.
Finally, note that
\begin{equation}w(x,(i,j))w(x^{(i,j)},(i,j))
=w_f(g(x),i)w_f(g(x)^i,i)w_g(x^{(i)},j)w_g((x^{(i)})^j,j).\end{equation}
If $(i,j)$ is sensitive for $x$, then $i$ is sensitive for $g(x)$
and $j$ is sensitive for $x^{(i)}$, and hence we have
$w_f(g(x),i)w_f(g(x)^i,i)\ge 1$ and
$w_g(x^{(i)},j)w_g((x^{(i)})^j,j)\ge 1$, which means
$w(x,(i,j))w(x^{(i,j)},(i,j))\ge 1$. Therefore, $w$ is feasible,
and we have $\MM_1(f\circ g)\le\MM_1(f)\MM_1(g)$, as desired.
\end{proof}

\subsection{Upper bounds}

We now show a slightly better upper bound on $\lambda(f)$, that it is upper bounded by the geometric mean of the 0-sensitivity and 1-sensitivity, which can be a better upper bound than $\s(f)$. 

We provide two proofs of this. The first uses the $\lambda(f)$ formulation and uses a linear algebra argument about norms. This proof is due to Laplante, Lee, and Szegedy~\cite{LLS06}, who observed this about the measure $\K(f)$. 

To describe this proof, we briefly need to describe some matrix norms.
For a vector $v \in \Reals^n$, the $p$-norm for a positive integer $p$ is defined as $\norm{v}_p=(\sum_{i\in[n]}|v_i|^p)^{1/p}$. We also define $\norm{v}_\infty = \max_{i \in [n]} |v_i|$.
Note that $\norm{v}_1$ is simply the sum of the absolute values of all the entries of the vector.

Similarly, for a matrix $A \in \Reals^{n\times m}$, we define the induced $p$-norm of $A$ to be
\begin{equation}
    \norm{A}_p = \max\{\norm{Ax}_p: \norm{x}_p =1\}.
\end{equation}
The spectral norm $\norm{A}$ is the induced $2$-norm $\norm{A}_2$.
The 1-norm $\norm{A}_1$ is simply the maximum sum of absolute values of entries in any column of the matrix. The $\infty$-norm $\norm{A}_\infty$ is  the maximum sum of absolute values of entries in any row of the matrix. 

Lastly, we need a useful relationship between these norms sometimes called  H\"{o}lder's inequality for induced matrix norms (see \cite[Corollary 2.3.2]{GV13} for a proof): 
\begin{proposition}\label{prop:Holder}
For all matrices $A \in \Reals^{n\times m}$, we have $\norm{A} \leq \sqrt{\norm{A}_1 \norm{A}_\infty}$.
\end{proposition}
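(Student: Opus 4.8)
The plan is to give a short, self-contained proof via the variational characterization of the spectral norm together with a single application of the Cauchy--Schwarz inequality. If $A = 0$ the inequality is trivial, so I assume $A \neq 0$. Since $\norm{A}$ is by definition the induced $2$-norm, there is a unit vector $x \in \Reals^m$ (i.e.\ $\norm{x}_2 = 1$) achieving $\norm{Ax}_2 = \norm{A}$; setting $y = Ax/\norm{Ax}_2 \in \Reals^n$ gives a unit vector with $y^{\transpose} A x = \norm{A}$.

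The key step is then to bound $y^{\transpose}Ax = \sum_{i,j} y_i A_{ij} x_j \le \sum_{i,j} |y_i|\,|A_{ij}|\,|x_j|$, write each summand as $\big(|A_{ij}|^{1/2}|y_i|\big)\big(|A_{ij}|^{1/2}|x_j|\big)$, and apply Cauchy--Schwarz over the index set $[n]\times[m]$:
\begin{equation}
\norm{A} \;\le\; \Big(\sum_{i,j} |A_{ij}|\,y_i^2\Big)^{1/2}\Big(\sum_{i,j} |A_{ij}|\,x_j^2\Big)^{1/2}.
\end{equation}
Finally I would estimate the two factors separately: $\sum_{i,j}|A_{ij}|\,y_i^2 = \sum_i y_i^2 \big(\sum_j |A_{ij}|\big) \le \norm{A}_\infty \sum_i y_i^2 = \norm{A}_\infty$, since $\sum_j |A_{ij}|$ is the $i$-th row sum of absolute values and hence at most $\norm{A}_\infty$; and symmetrically $\sum_{i,j}|A_{ij}|\,x_j^2 = \sum_j x_j^2 \big(\sum_i |A_{ij}|\big) \le \norm{A}_1$, using $\norm{x}_2=1$. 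Combining the two estimates with the displayed inequality yields $\norm{A} \le \sqrt{\norm{A}_1\,\norm{A}_\infty}$.

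An alternative route, should one prefer to avoid the normalization, is to argue $\norm{A}^2 = \rho(A^{\transpose}A) \le \norm{A^{\transpose}A}_1 \le \norm{A^{\transpose}}_1\,\norm{A}_1 = \norm{A}_\infty\,\norm{A}_1$, using that the spectral radius of any matrix is bounded by any induced norm, the submultiplicativity of the induced $1$-norm, and the identity $\norm{A^{\transpose}}_1 = \norm{A}_\infty$ (maximum column sum of $A^{\transpose}$ equals maximum row sum of $A$). Either way, there is no genuine obstacle: the only points needing minor care are excluding the degenerate case $A = 0$ before normalizing $y$, and arranging the Cauchy--Schwarz split so that the two resulting double sums collapse cleanly into row sums and column sums of $|A|$.
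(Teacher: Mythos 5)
Your proof is correct. The paper does not include its own argument for this proposition---it simply cites \cite[Corollary 2.3.2]{GV13}---so there is nothing to match against line by line. Your second route ($\norm{A}^2=\rho(A^{\transpose}A)\le\norm{A^{\transpose}A}_1\le\norm{A^{\transpose}}_1\norm{A}_1=\norm{A}_\infty\norm{A}_1$) is essentially the standard textbook derivation in the cited reference, using that the spectral radius is dominated by any induced norm together with submultiplicativity. Your first route is a clean, fully self-contained alternative: the split $|y_i||A_{ij}||x_j|=\bigl(|A_{ij}|^{1/2}|y_i|\bigr)\bigl(|A_{ij}|^{1/2}|x_j|\bigr)$ followed by Cauchy--Schwarz over $[n]\times[m]$ is exactly the right move, and the two resulting double sums do collapse into a row-sum bound ($\norm{A}_\infty$) and a column-sum bound ($\norm{A}_1$) as you claim. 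The only care points---excluding $A=0$ before normalizing $y$, and checking $\norm{A^{\transpose}}_1=\norm{A}_\infty$---are handled. Either argument would serve as a complete replacement for the external citation.
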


We can now prove the upper bound:

\begin{lemma}\label{lem:lambda_s}
For all (possibly partial) functions $f$, we have
$\lambda(f)\le\sqrt{\s_0(f)\s_1(f)}$.
\end{lemma}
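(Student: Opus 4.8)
The plan is to combine the bipartite block structure of the sensitivity matrix $A_f$ with H\"older's inequality for induced matrix norms (\Cref{prop:Holder}). First I would recall that every edge of the sensitivity graph $G_f$ joins a $0$-input of $f$ to a $1$-input, so $G_f$ is bipartite. Reordering the rows and columns of $A_f$ so that the inputs in $f^{-1}(0)$ come first and those in $f^{-1}(1)$ come second, we can write
\[
A_f = \begin{pmatrix} 0 & B^{\mathsf{T}} \\ B & 0 \end{pmatrix},
\]
where $B$ has rows indexed by $f^{-1}(1)$, columns indexed by $f^{-1}(0)$, and $B_{xy}=1$ exactly when $x$ and $y$ are at Hamming distance $1$ and $f(x)\ne f(y)$. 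As already observed in the proof of \Cref{lem:lambdaBq}, the spectral norm of such a block matrix equals that of its off-diagonal block, so $\lambda(f)=\|A_f\|=\|B\|$.

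Next I would read off the induced $1$- and $\infty$-norms of $B$ directly from the combinatorics of sensitivity. The column of $B$ indexed by $x\in f^{-1}(0)$ has a $1$ in row $y$ precisely when $y=x\oplus e_i$ for some bit $i$ that is sensitive for $x$; hence its absolute column sum equals $\s_x(f)\le\s_0(f)$, and therefore $\|B\|_1\le\s_0(f)$. Symmetrically, the row of $B$ indexed by $x\in f^{-1}(1)$ has absolute row sum $\s_x(f)\le\s_1(f)$, so $\|B\|_\infty\le\s_1(f)$.

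Finally, applying \Cref{prop:Holder} to $B$ gives
\[
\lambda(f)=\|B\|\le\sqrt{\|B\|_1\,\|B\|_\infty}\le\sqrt{\s_0(f)\,\s_1(f)},
\]
which is the claim. I do not anticipate any genuine obstacle here: the only points needing a moment's care are the reduction $\|A_f\|=\|B\|$ (immediate from the bipartite block form, whose nonzero eigenvalues are $\pm$ the singular values of $B$) and the bookkeeping that column sums of $B$ correspond to sensitivities of $0$-inputs while row sums correspond to sensitivities of $1$-inputs. Incidentally, the same argument reproves the crude bound $\lambda(f)\le\s(f)$ by instead bounding both $\|B\|_1$ and $\|B\|_\infty$ by $\s(f)$.
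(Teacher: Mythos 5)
Your proposal is correct and follows essentially the same route as the paper's own proof: reorder to expose the bipartite block structure of $A_f$, reduce to $\|B\|$, and apply H\"older's inequality (\Cref{prop:Holder}) with the row/column sums of $B$ identified as the one-sided sensitivities. The only difference is notational (which block you call $B$), and your extra care in verifying $\|A_f\|=\|B\|$ and the row-vs-column bookkeeping is sound.
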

\begin{proof}
We know that $\lambda(f) = \norm{A_f}$ and $A_f$ is a matrix of the form $\Bigl(\begin{smallmatrix}0& B\\ B^{\mathsf{T}}& 0 \end{smallmatrix}\Bigr)$ if we rearrange the rows and columns so that all $0$-inputs come first and are followed by $1$-inputs, since $A_f$ only connects inputs with different $f$-values. Thus we have
\begin{equation}
    \lambda(f) = \norm{A_f} = \norm{B} \leq \sqrt{\norm{B}_1\norm{B_\infty}} = \sqrt{\s_0(f)\s_1(f)},
\end{equation}
where we used H\"{o}lder's inequality (\Cref{prop:Holder}) and the fact that the maximum row and column sum of $B$ are precisely $\s_0(f)$ and $\s_1(f)$, respectively.
\end{proof}

Our second proof of this claim uses the $\MM_1(f)$ formulation which yields an arguably simpler proof.

\begin{lemma}
For all (possibly partial) functions $f$, we have
$\Adv_1(f)\le\sqrt{\s_0(f)\s_1(f)}$.
\end{lemma}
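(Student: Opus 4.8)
The plan is to work with the minimization formulation $\MM_1(f)$, since by \Cref{thm:equivalent} we have $\Adv_1(f)=\MM_1(f)$. It therefore suffices to exhibit a \emph{single} feasible weight scheme $w\colon\Dom(f)\times[n]\to[0,\infty)$ whose objective value $\max_{x\in\Dom(f)}\sum_{i\in[n]}w(x,i)$ is at most $\sqrt{\s_0(f)\s_1(f)}$; the natural guess is to put a constant weight $c_0$ on every sensitive bit of each $0$-input and a constant weight $c_1$ on every sensitive bit of each $1$-input, and then optimize the constants.

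First I would dispose of the degenerate case. If $\s_0(f)=0$ or $\s_1(f)=0$, then $G_f$ has no edges at all (every edge of $G_f$ joins a $0$-input to a $1$-input, and such an edge is a sensitive bit on both endpoints), so $\Adv_1(f)=\lambda(f)=0$ and the bound is trivial. So assume $\s_0(f),\s_1(f)>0$ and set $c_0=\sqrt{\s_1(f)/\s_0(f)}$ and $c_1=\sqrt{\s_0(f)/\s_1(f)}$, so that $c_0 c_1=1$ and $c_0\s_0(f)=c_1\s_1(f)=\sqrt{\s_0(f)\s_1(f)}$. Define $w(x,i)=c_{f(x)}$ if bit $i$ is sensitive for $x$ with respect to $f$, and $w(x,i)=0$ otherwise.

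Next I would verify feasibility and compute the objective. For feasibility: if $x,y\in\Dom(f)$ satisfy $f(x)\ne f(y)$ and have Hamming distance $1$, differing on bit $i$, then $i$ is sensitive for both $x$ and $y$, so $w(x,i)w(y,i)=c_0 c_1=1\ge 1$, as required. For the objective value: any $x$ with $f(x)=b$ has exactly $\s_x(f)$ sensitive bits, each contributing weight $c_b$, so $\sum_{i\in[n]}w(x,i)=c_b\,\s_x(f)\le c_b\,\s_b(f)=\sqrt{\s_0(f)\s_1(f)}$ by the choice of $c_0,c_1$. Hence $\MM_1(f)\le\sqrt{\s_0(f)\s_1(f)}$, and therefore $\Adv_1(f)\le\sqrt{\s_0(f)\s_1(f)}$.

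There is essentially no obstacle here once the $\MM_1(f)$ characterization is available: the whole argument is a one-line weight construction plus a balancing computation. The only thing to be a little careful about is the degenerate case $\s_0(f)\s_1(f)=0$ handled above, and checking that the system $c_0 c_1=1$, $c_0\s_0(f)=c_1\s_1(f)$ is indeed consistent (which it is, giving the stated values). Compared with the first proof via \Cref{prop:Holder}, this version avoids any matrix-norm inequalities.
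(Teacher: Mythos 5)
Your proof is correct and is essentially the paper's own argument for this lemma: the paper likewise uses the $\MM_1(f)$ formulation with the balanced weights $\sqrt{\s_0(f)/\s_1(f)}$ on $1$-inputs and $\sqrt{\s_1(f)/\s_0(f)}$ on $0$-inputs. Your version is slightly more careful in that it explicitly restricts the weight to sensitive bits (which the paper's computation of the objective implicitly assumes) and handles the degenerate case $\s_0(f)\s_1(f)=0$.
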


\begin{proof}
Using the $\MM_1(f)$ version of $\Adv_1(f)$,
set $w(x,i)=\sqrt{\s_0(f)}/\sqrt{\s_1(f)}$ if $f(x)=1$,
and set $w(x,i)=\sqrt{\s_1(f)}/\sqrt{\s_0(f)}$ if $f(x)=0$.
Then if $x$ and $y$ differ in a single bit $i$,
we clearly have $w(x,i)w(y,i)=1$. On the other hand,
$\sum_i w(x,i)\le \s_1(f)\cdot \sqrt{\s_0(f)}/\sqrt{\s_1(f)}
=\sqrt{\s_0(f)\s_1(f)}$
for $1$-inputs $x$, and analogously
$\sum_i w(y,i)\le \sqrt{\s_0(f)\s_1(f)}$
for $0$-inputs $y$.
\end{proof}

Using this better bound on $\lambda(f)$ and Huang's result, we also get that for all total Boolean functions $f$, 
\begin{equation}
    \deg(f) \leq \s_0(f) \s_1(f).
\end{equation}
This result was also recently observed by Laplante, Naserasr, and Sunny~\cite{LNS20}.  Unlike their proof, the following uses Huang's theorem in a completely black-box way.

\begin{proposition}
Assume that $\deg(f) \leq \s(f)^2$ for all total Boolean functions $f$. Then we also have $\deg(f) \leq \s_0(f) \s_1(f)$.
\end{proposition}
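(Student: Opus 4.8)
The plan is to deduce the inequality from the hypothesis $\deg(h)\le\s(h)^2$ by applying it not to $f$ itself but to a composed function $F$ built from $f$, designed so that its $0$- and $1$-sensitivities are both exactly $\s_0(f)\s_1(f)$ while its degree is $\s_0(f)\s_1(f)\cdot\deg(f)$; the claimed bound then falls out after dividing. First I would dispose of the degenerate case: if $f$ is constant then $\deg(f)=0$ and there is nothing to prove, so assume $f$ is non-constant, equivalently $\s_0(f)\ge 1$ and $\s_1(f)\ge 1$.

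Next I would record how the relevant measures behave under composition with OR- and AND-gadgets. Write $\OR_m$ and $\AND_m$ for the $m$-bit OR and AND functions, and let $F\circ g$ denote the usual block composition (a fresh copy of $g$ plugged into each input of $F$). By the composition theorem for degree, $\deg(\OR_m\circ g)=\deg(\AND_m\circ g)=m\deg(g)$. For sensitivity, a direct check shows that for any input $X=(x^{(1)},\dots,x^{(n)})$ of $f\circ g$, setting $z=(g(x^{(1)}),\dots,g(x^{(n)}))$, a coordinate lying in block $i$ is sensitive for $X$ iff it is sensitive for $x^{(i)}$ with respect to $g$ \emph{and} coordinate $i$ is sensitive for $z$ with respect to $f$; hence $\s_X(f\circ g)=\sum_{i\in\s_z(f)}\s_{x^{(i)}}(g)$. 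Specializing $f\in\{\OR_m,\AND_m\}$ and optimizing the inner blocks gives
\begin{equation}
\s_0(\OR_m\circ g)=m\,\s_0(g),\qquad \s_1(\OR_m\circ g)=\s_1(g),
\end{equation}
\begin{equation}
\s_1(\AND_m\circ g)=m\,\s_1(g),\qquad \s_0(\AND_m\circ g)=\s_0(g),
\end{equation}
where the first display uses that at the all-zeros input every coordinate of $\OR_m$ is sensitive but at a $1$-input a coordinate is sensitive only when $z$ has Hamming weight $1$, and the second display is its De Morgan dual (sensitivity being invariant under negating inputs or output).

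With these in hand I would put $a=\s_0(f)$, $b=\s_1(f)$ and take $F=\OR_{b}\circ\AND_{a}\circ f$. Applying the two displays, $\s_1(F)=\s_1(\AND_a\circ f)=a\,\s_1(f)=ab$ and $\s_0(F)=b\,\s_0(\AND_a\circ f)=b\,\s_0(f)=ab$, so $\s(F)=ab=\s_0(f)\s_1(f)$; meanwhile $\deg(F)=b\cdot a\cdot\deg(f)=\s_0(f)\s_1(f)\cdot\deg(f)$. Since $F$ is a total Boolean function, the hypothesis yields $\s_0(f)\s_1(f)\cdot\deg(f)=\deg(F)\le\s(F)^2=(\s_0(f)\s_1(f))^2$, and dividing both sides by $\s_0(f)\s_1(f)\ge 1$ gives $\deg(f)\le\s_0(f)\s_1(f)$.

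The only mildly delicate point is the sensitivity computation for the gadgets — in particular, that ORing together $b$ disjoint copies multiplies $\s_0$ by $b$ while leaving $\s_1$ unchanged (and dually for AND) — together with the bookkeeping of the constant-function edge case; beyond that the argument is a single invocation of the exact degree composition theorem and a single invocation of the hypothesis, so I do not anticipate a genuine obstacle.
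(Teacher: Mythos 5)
Your proof is correct and follows essentially the same route as the paper's: compose $f$ with an AND/OR gadget of widths $\s_0(f)$ and $\s_1(f)$ so that both one-sided sensitivities of the composition equal $\s_0(f)\s_1(f)$, apply the hypothesis to the composed function, and divide using multiplicativity of exact degree. The only cosmetic differences are the order of the gadget ($\OR\circ\AND$ versus the paper's $\AND\circ\OR$) and that you compute the composed sensitivities exactly where the paper only upper-bounds them via monotonicity.
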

\begin{proof}
Let $\s_0(f) = k$ and $\s_1(f) = \ell$. We know that $\deg(f) \leq \max\{k,\ell\}$ by assumption. 
Let $\AND_{k} \circ \OR_{\ell}$ be the AND function on $k$ bits composed with the OR function on $\ell$ bits. Clearly $\s_0(\AND_{k} \circ \OR_{\ell})=\ell$ and $\s_1(\AND_{k} \circ \OR_{\ell}) = k$. 
Furthermore, because the function is monotone, the sensitive bits for a $0$-input are bits set to $0$, and the sensitive bits for a $1$-input are bits set to $1$. 
This means that composing this function with $f$ with yield a function where the one-sided sensitivity will be upper bounded by the product of one-sided sensitivity of the individual functions. Hence for all $b \in \B$, we have
\begin{equation}
    \s_b(\AND_{k} \circ \OR_{\ell} \circ f) \leq \s_b(\AND_{k} \circ \OR_{\ell}) \s_b(f) \leq k\ell.
\end{equation}
Using the assumption on the function $\AND_{k} \circ \OR_{\ell} \circ f$, we get 
\begin{equation}
    \deg(\AND_{k} \circ \OR_{\ell} \circ f) \leq (\s(\AND_{k} \circ \OR_{\ell} \circ f))^2 \leq (k\ell)^2.
\end{equation}
Finally, it is well known that $\deg(f \circ g) = \deg(f)\deg(g)$~(see, e.g., \cite{Tal13}), and hence $\deg(\AND_{k} \circ \OR_{\ell} \circ f) = k\ell \deg(f)$, which implies $\deg(f) \leq k\ell$.
\end{proof}

\subsection{Lower bounds}

Finally, we describe some lower bounds on $\lambda(f)$. These follow from known results, but we reproduce them here for completeness.

\begin{lemma}\label{lem:lambdalower}
For all (possibly partial) functions $f$, $\s(f)\leq \lambda(f)^2$.
\end{lemma}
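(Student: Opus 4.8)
The plan is to use nothing more than the variational characterization of the spectral norm together with a single well-chosen test vector. Recall that $\lambda(f)=\norm{A_f}=\max_{w\neq 0}\norm{A_f w}/\norm{w}$, where $A_f$ is the adjacency matrix of the sensitivity graph $G_f$. First I would pick an input $x^\star$ achieving the maximum sensitivity, i.e.\ with $\s_{x^\star}(f)=\s(f)$ (for a partial $f$ we only count coordinates $i$ such that both $x^\star$ and $x^\star\oplus e_i$ lie in $\Dom(f)$ and $f(x^\star)\neq f(x^\star\oplus e_i)$, which is exactly the set of neighbors of $x^\star$ in $G_f$). Thus the degree of the vertex $x^\star$ in $G_f$ equals $\s(f)$.

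Next I would plug the standard basis vector $e_{x^\star}$ (the indicator of $x^\star$) into the variational formula. Since $\norm{e_{x^\star}}=1$, we get $\lambda(f)\ge\norm{A_f e_{x^\star}}$. But $A_f e_{x^\star}$ is precisely the column of $A_f$ indexed by $x^\star$: its entry at $y$ is $1$ if $y$ is a sensitive neighbor of $x^\star$ and $0$ otherwise. Hence $\norm{A_f e_{x^\star}}^2$ equals the number of sensitive neighbors of $x^\star$, which is $\s_{x^\star}(f)=\s(f)$. Therefore $\lambda(f)\ge\sqrt{\s(f)}$, i.e.\ $\s(f)\le\lambda(f)^2$, as claimed.

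There is essentially no obstacle here: the only point that deserves a sentence of care is the bookkeeping for partial functions, making sure that ``number of sensitive bits of $x^\star$'' and ``degree of $x^\star$ in $G_f$'' genuinely coincide so that the column norm computes $\s(f)$ on the nose. (One may equivalently phrase the same argument as: $G_f$ contains, as an induced subgraph, a star $K_{1,\s(f)}$, whose adjacency matrix has spectral norm $\sqrt{\s(f)}$, and interlacing / the fact that $\norm{A_f}$ dominates the norm of any principal submatrix finishes it — but the one-line test-vector argument above is cleaner and avoids invoking interlacing.)
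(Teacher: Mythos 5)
Your proof is correct and is essentially the paper's argument: both hinge on the vertex $x^\star$ of maximum sensitivity and the star of its $\s(f)$ sensitive neighbors in $G_f$, the paper phrasing this via the spectral norm of the star-subgraph's adjacency matrix and you via the test vector $e_{x^\star}$, which gives $\lambda(f)\ge\norm{A_f e_{x^\star}}=\sqrt{\s(f)}$. Your test-vector phrasing is, if anything, marginally more self-contained since it avoids appealing to the submatrix-monotonicity of the spectral norm.
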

\begin{proof}
Consider any input $x$ with sensitivity $\s(f)$. This means $x$ has $\s(f)$ neighbors on the hypercube with different $f$ value. The sensitivity graph restricted to these $\s(f)+1$ inputs is a star graph centered at $x$. The spectral norm of the adjacency matrix of the star graph on $k+1$ vertices is $\sqrt{k}$. Since the spectral norm of $A_f$ is lower bounded by that of a submatrix, we have $\lambda(f) \geq \sqrt{\s(f)}$.
\end{proof}

This relationship is tight for the $\OR_n$ function which has $\s(\OR_n)=n$ and $\lambda(\OR_n)=\sqrt{n}$. Although $\OR_n$ has unbalanced sensitivities, with $\s_0(\OR_n)=n$ and $\s_1(\OR_n)=1$, there are functions $f$ with $\s(f)=\s_0(f)=\s_1(f)=n$ and $\lambda(f)=\sqrt{n}$. One example of such a function is $x_1 \oplus \OR(x_2,\ldots,x_n)$. Another example of such a function with a quadratic gap between $\s(f)$ and $\lambda(f)$ is the function that is $1$ if and only if the input string has Hamming weight $1$. This function has $\s_0(f)=n$ since the all zeros string is fully sensitive and $\s_1(f)=n$ since every Hamming weight $1$ string is also fully sensitive. But we know that this problem can be solved by Grover's algorithm with $O(\sqrt{n})$ queries, and hence $\lambda(f) = O(\Q(f)) = O(\sqrt{n})$.

We can also lower bound $\norm{A_f}$ by $\norm{A_f} \geq  |v^{\mathsf{T}} A_f v|$ for any vector $v$ with $\norm{v}=1$. If we take $v$ to be the normalized all ones vector, this is just the average sensitivity.  
\begin{lemma}\label{lem:averagesen}
For all (possibly partial) functions $f$, $\lambda(f) \geq \E_{x}[\s_x(f)]$.
\end{lemma}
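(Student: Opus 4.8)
The plan is to invoke the variational bound on the spectral norm that is flagged just before the statement: since $A_f$ is a real symmetric matrix, $\lambda(f)=\norm{A_f}=\max_{\norm{v}=1}\abs{v^{\transpose}A_f v}\ge v^{\transpose}A_f v$ for \emph{any} unit vector $v$ with nonnegative entries (nonnegativity of both $v$ and $A_f$ lets us drop the absolute value). I would simply plug in the natural test vector $v=\mathbbold{1}/\sqrt{\abs{\Dom(f)}}$, the normalized all-ones vector supported on $\Dom(f)$.

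Concretely, first I would expand the Rayleigh quotient as $v^{\transpose}A_f v=\frac{1}{\abs{\Dom(f)}}\sum_{x,y\in\Dom(f)}(A_f)_{xy}$. Next I would observe that, by \Cref{def:sengraph}, for each fixed $x$ the inner sum $\sum_{y}(A_f)_{xy}$ is exactly the degree of $x$ in the sensitivity graph $G_f$, i.e.\ the number of Hamming-neighbors $y$ of $x$ (lying in $\Dom(f)$) with $f(y)\ne f(x)$, which is precisely $\s_x(f)$. Therefore $\sum_{x,y\in\Dom(f)}(A_f)_{xy}=\sum_{x\in\Dom(f)}\s_x(f)$, and dividing by $\abs{\Dom(f)}$ yields $\EE{\s_x(f)}$ with $x$ drawn uniformly from $\Dom(f)$. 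Chaining this with the spectral-norm lower bound gives $\lambda(f)\ge \EE{\s_x(f)}$.

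There is essentially no obstacle here: the only thing to handle carefully is the bookkeeping in the partial-function case, namely making sure that $A_f$, the degree sum, and the definition of average sensitivity all range over $\Dom(f)$ in a consistent way, and noting that $v^{\transpose}A_f v\ge 0$ so that the passage from $\norm{A_f}\ge\abs{v^{\transpose}A_f v}$ to $\norm{A_f}\ge v^{\transpose}A_f v$ is legitimate. (One could alternatively route the argument through the identification $\lambda(f)=\SA_1(f)$ from \Cref{thm:equivalent}, but the direct Rayleigh-quotient computation with the all-ones vector is the cleanest.)
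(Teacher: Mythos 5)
Your proposal is correct and is essentially identical to the paper's argument: the paper also lower bounds $\norm{A_f}$ by the Rayleigh quotient of the normalized all-ones vector and observes that this quotient is exactly the average sensitivity, since the row sums of $A_f$ are the pointwise sensitivities. Your extra care about restricting everything to $\Dom(f)$ in the partial-function case is a reasonable bit of bookkeeping that the paper leaves implicit.
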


For example, this shows that $\lambda(\Parity_n)=n$. The bound in \Cref{lem:averagesen} can be improved by only taking the expectation on the right over a subset of the inputs of $f$, which then equals another complexity measure originally defined by Khrapchenko~\cite{Khr71}. 
See \cite{Kou93} for more on this relationship and Khrapchenko's bound.
\end{document}